\newcommand{\bv}{\mathrm{BV}}
\newcommand{\f}{{\ensuremath{\varphi}}}
\newcommand{\h}{{\ensuremath{\eta}}}
\renewcommand{\leq}{\leqslant}
\renewcommand{\geq}{\geqslant}
\renewcommand{\le}{\leqslant}
\renewcommand{\ge}{\geqslant}
\renewcommand{\to}{\rightarrow}
\newcommand{\fer}[1]{(\ref{#1})}
\def\be#1\ee{\begin{equation}#1\end{equation}}
\newenvironment{equations}{\equation\aligned}{\endaligned\endequation}
\newcommand{\R}{\mathbb R}
\newcommand{\N}{\mathbb N}
\newcommand{\floor}[1]{\left\lfloor #1 \right\rfloor}
\def\f{\hat f}
\def\DDSS{\mathcal{D}_\mathcal{S}}
\def \Id{\mathbf{I}}
\def\be#1\ee{\begin{equation}#1\end{equation}}
\def\bqa{\begin{eqnarray}}
\def\eqa{\end{eqnarray}}
\def\bZ{{\bf Z}}
\newcommand{\bd}{\begin{displaymath}}
\newcommand{\ed}{\end{displaymath}}
\newcommand{\ba}{\begin{eqnarray}}
\newcommand{\ea}{\end{eqnarray}}
\def\N{\mathbb{N}}
\def\R{\mathbb{R}}
\def\bx{{\bf x}}
\def\bX{{\bf X}}
\def\bY{{\bf Y}}
\def\bV{{\bf V}}
\def\bW{{\bf W}}
\def\by{{\bf y}}
\def\bm{{\bf m}}
\def\bu{{\bf u}}
\def\bv{{\bf v}}
\def\bw{{\bf w}}
\def\bxi{{\pmb\xi}}
\def\bbeta{{\pmb\beta}}
\DeclareMathOperator{\diag}{diag}
\DeclareMathOperator{\GMD}{GMD}
\def \PP{\mathcal{P}}
\def \simS{\sim_{\mathcal{S}}}
\def \erre{\mathbb{R}}
\def \enne{\mathbb{N}}
\title{FROM KINETIC THEORY TO AI: A REDISCOVERY OF HIGH-DIMENSIONAL DIVERGENCES AND THEIR PROPERTIES
}
\author{
  Gennaro Auricchio\thanks{Department of Mathematics, University of Padova. \texttt{gennaro.auricchio@unipd.it}} \and
  Giovanni Brigati\thanks{Institute of Science and Technology Austria. \texttt{giovanni.brigati@ist.ac.at}} \and
  Paolo Giudici\thanks{Department of Economics and Management, University of Pavia (Italy). \texttt{paolo.giudici@unipv.it}} \and
  Giuseppe Toscani\thanks{Department of Mathematics, University of Pavia. \texttt{giuseppe.toscani@unipv.it}}
}
\newtheorem{theorem}{Theorem}
\newtheorem{remark}{Remark}
\newtheorem{definition}{Definition}
\newtheorem{proof}{Proof}
\begin{document}


%
%

\maketitle




\begin{abstract}
Selecting an appropriate divergence measure is a critical aspect of machine learning, as it directly impacts model performance. Among the most widely used, we find the Kullback–Leibler (KL) divergence, originally introduced in kinetic theory as a measure of relative entropy between probability distributions. Just as in machine learning, the ability to quantify the proximity of probability distributions plays a central role in kinetic theory. In this paper, we present a comparative review of divergence measures rooted in kinetic theory, highlighting their theoretical foundations and exploring their potential applications in machine learning and artificial intelligence.

\end{abstract}

\textbf{keywords.}{ Kinetic Theory, Artificial Intelligence, Machine Learning, Divergence measures, Multivariate distributions, Scale Invariance, Wasserstein Distance, Gini index. }

\vskip.2cm

{AMS Subject Classification: 35B40, 35L60, 35K55, 35Q70, 35Q91, 35Q92.}




\section{Introduction}
In machine learning, a \emph{divergence}\footnote{While divergences have a specific meaning in statistics, we use the term more broadly to denote any function that compares pairs of probability measures. This avoids introducing unnecessary terminology.} function quantifies the discrepancy between a model's predictions and the ground truth. 
It serves as a measure of model performance: small divergence indicates accurate predictions, while large divergence signals greater error.
%
%
In the statistical machine learning literature, there exists a wide variety of divergence functions, each suited to different objectives, data types, and priorities. 
In most cases, however, the divergence is calculated in terms of \emph{relative entropy}, as the entropy is a standard measure of uncertainty within a system. 
Among the many definition of entropy, the one usually adopted is the Kullback-Leibler divergence (KL divergence) introduced by Boltzmann and Gibbs, and widely used in information theory,\cite{KL} which measures the difference between two probability densities $f$ and $g$ through the formula
\[
    H(f|g) = \int_{\erre^n} f(\bx)\log\frac{f(\bx)}{g(\bx)}d\bx.
\]
The fact that the Kullback-Leibler divergence is related to the concept of relative entropy establishes in a natural way a close connection with the kinetic theory of rarefied gases, and, more in general, with statistical physics.
This connection, which allows to make use of various concepts originally developed for physically relevant problems, has been recently remarked in Refs. \cite{BDL,BE} in relationship with the modelling and simulation of the collective dynamics of a system composed by a large number of interacting living entities, \textit{id est} a multi-agent system.

%
%

%
Multi-agent systems are particularly well-suited to study the behaviour of models composed of a multitude of interacting agents. 
In particular, these models have been used to retrieve a physical description of how the distribution of social traits (such as wealth, opinion, knowledge, etc.) evolves.\cite{YR,CFL,NPT,Sin,PT13,Chak,Chak1,Sen} 
The expected behaviour of these physical driven multi-agents system is that, as a result of the action of the known microscopic interactions between agents, a universal equilibrium configuration typical of the phenomenon under study emerges.
One of the main features of such systems is their tendency to converge rapidly toward an equilibrium state when isolated, and, interestingly, the convergence to the equilibrium is often related to a thermodynamic principle.
Indeed, it was to describe the macroscopic behaviour of a large number of frequently interacting molecules, that the entropy functional was originally introduced in the late 19th century.
In fact, according to Gibbs' principle, \emph{the equilibrium distribution is the one that maximizes entropy subject to the constraints imposed by the relevant conservation laws.}
There are several mathematical models describing how a distribution converges (or \textit{relaxes}) to an equilibrium, of whom the spatially homogeneous Boltzmann equation is the most famous, both for its important current applications and for historical reasons.\cite{Bol} 
Indeed, the Boltzmann's $H$-theorem on increasing entropy was the first analytical demonstration of the second principle of thermodynamics.
As a system in equilibrium depends only on a limited number of unknowns (certain macroscopic observables), the identification of equilibrium configurations and the speed of convergence towards them are of paramount importance for applications.
Knowing the convergence rate is the only way to understand the relevant time scale for the equilibration process, which, in turn, is important for modelling and for assessing the feasibility of numerical simulations. 
This problem has a long history, which has led to a number of mathematical results based on suitable divergences that quantify how the distance between the solution and its equilibrium configuration evolves through time.\cite{AA} 
A careful analysis of the research methodologies aimed at measuring how close the solution of the kinetic problem is from its equilibrium distribution leads to the conclusion that their performance can be directly interpreted through divergences, as these items describe the differences between distributions. 
In this paper, we highlight how the study of relaxation to equilibrium in kinetic models naturally leads to the introduction of various divergence measures between probability distributions, illustrating the connection between kinetic theory and the field of artificial intelligence.
Building on this connection, we focus on three physics-inspired families of divergences: the Wasserstein Distances, the Fourier-based Metrics, and the Energy Distances. 
%
Motivated by the question of whether these divergences are valuable tools within the context of artificial intelligence, we investigate their properties and examine their relationships.
Moreover, we show that these divergences can be modified to make them scale invariant, an important property that is necessary to treat multi-dimensional data collected with respect to different units of measure.\cite{ABGT}
To better understand the physical reasons that led to the selection of a small number of divergences to measure how close two distributions are, we start our analysis with a brief survey about the problem of convergence towards equilibrium of a spatially homogeneous ideal gas, where the binary interactions between molecules are described by the pseudo-Maxwellian kernel, a simplified model that gave rise to a variety of interesting mathematical results.\cite{Bob2}
For spatially inhomogeneous models, more sophisticated techniques are needed, and much less is known~\cite{Villani,Dolbeault}.

\section{The legacy of Maxwell molecules}\label{sec:Maxwell}
%
%
In the kinetic theory of rarefied gas, the evolution of the molecules composing an ideal gas is captured through a distribution function $f=f(\bx,\bv,t)$, \textit{i.e.}, a density on the phase space, which describes the proportion of molecules at position $\bx$, with velocity $\bv$ at time $t \in \R^+$.
The density $f$, often referred to as the \emph{distribution function} in kinetics, evolves according to the following Boltzmann equation\cite{Cer,CIP}
\begin{equation}
    \label{Boltz}
    {\partial\over \partial t}f(\bx,\bv,t) = -\bv\cdot\nabla_\bx f(\bx,\bv,t) + Q(f(\bx,\bv,t),f(\bx,\bv,t)).
\end{equation}
This equation contains two operators, accounting for the two phenomena affecting the density $f$ throughout time.
The term $$-\bv\cdot\nabla_\bx f(\bx,\bv,t)$$ represents the effects of {\it streaming}; that is, the free motion
\[\
    \bx_0 \mapsto \bx_0 + (t-t_0)\bv_0\qquad \bv_0 \mapsto \bv_0
\]
of molecules between two collisions. 
The term $Q(f(\bx,\bv,t),f(\bx,\bv,t))$ represents the effects of interactions among molecules, and accounts for all kinematically possible binary collisions.
%
%
In what follows, we then consider only binary and conservative collisions: only two molecules at time can interact with each other and the total energy of the system remains constant overtime.
Moreover, unless we specify otherwise, we restrict our study to the homogeneous case, so that $f$ does not depend on $\bx$.
Under these premises, the functional $Q(f,f)$ ---which acts on velocity variables only--- can be expressed as
\begin{equation}
    \label{colli}
    Q\bigl(f,f\bigr)(\bv) = \int_{{\R}^3\times{S^2}} \mathcal{K}\left(\nu, {|\bv-\bw|} \right) \big[f( \bv^*)f( \bw^*)-f( \bv)f( \bw)  \big]\, d \bw d{\bf n},
\end{equation}
where
\begin{enumerate}[label=(\roman*)]
    \item the pairs $(\bv,\bw)$ and $(\bv^*,\bw^*)$ represent the pre and post-collisional velocities in a binary collision between two elastically colliding molecules, respectively. They are related by the following identities
    \begin{equation}
        \label{col}
        \bv^* = {1\over 2}(\bv+\bw+|\bv-\bw|{\bf n}) ,\qquad \bw^* = {1\over 2}(\bv+\bw-|\bv-\bw|{\bf n}),
    \end{equation}
    where ${\bf n}$ is a unit vector and $|\cdot|$ is the Euclidean norm, that is $|\bx|=\sqrt{\sum_{i=1}^nx_i^2}$.
    %
    \item $d{\bf n}$ denotes the {\it normalized} surface measure on the unit sphere $S^2$.
    \item $\mathcal{K}(\nu, |\bv-\bw|)$, where $\nu = (\bv - \bw)\cdot{\bf n}/|\bv - \bw|$, denotes the interaction potential, or \textit{kernel}.
\end{enumerate}

The conservation properties of the collision integral \fer{colli} are easily derived using Maxwell's weak form for the collisional integral\cite{Bob1}
\begin{equations}\label{weak}
  \langle \phi(\bv), Q\bigl(f,f\bigr)(\bv)\rangle &=  \frac 14 \int_{{\R}^3\times\R^3\times{S^2}} \mathcal{K}\left(\nu, {|\bv-\bw|} \right) \cdot \\
 &\quad\quad\quad \cdot \left(\phi(\bv^*)+ \phi(\bw^*) - \phi(\bv)-\phi(\bw)\right)f( \bv)f( \bw)\, d\bv d \bw d{\bf n},
\end{equations}
where $\phi$ is a generic continuous and bounded function.
By construction, the molecules interactions are elastic, thus the energy is conserved.
In particular, we have that $\bv^* +\bw^* = \bv +\bw$ and $|\bv^*|^2 +|\bw^*|^2 = |\bv|^2 +|\bw|^2$.
Therefore, the choice $\phi(\bv) = 1, \bv, |\bv|^2$ implies
\[
\langle \phi(\bv), Q\bigl(f,f\bigr)(\bv)\rangle = 0,
\]
corresponding to the conservations of mass, momentum, and energy, which turns out to be the only \emph{collision invariants}. 
Further, owing to the properties of the logarithm, it is immediate to verify that the choice $\phi(\bv) = \log f(\bv)$ in \fer{weak} leads to the inequality
\[
\langle \log f(\bv), Q\bigl(f,f\bigr)(\bv)\rangle \le 0.
\]
This property is the basis of Boltzmann's $H$-theorem, that is classically referred to the spatially homogeneous case under the additional condition that the streaming term is not present.
Indeed, taking into account the mass conservation property $\langle 1,Q(f,f)(\bv)\rangle = 0$, it is immediate to conclude that 
\begin{equation}
    \label{H-dec}
    \frac d{dt} \int_{\R^3} f(\bv,t)\log f(\bv,t) \, d\bv = \langle 1+ \log f(\bv,t), Q\bigl(f,f\bigr)(\bv)\rangle \le 0.
\end{equation}
Consequently, the functional,
known as entropy:
\begin{equation}
    \label{entropy}
    H(f)(t) = -\int_{\erre^n} f(\bv, t) \log f(\bv, t) \, d\bv  
\end{equation} 
increases with respect to $t$ until $\log f(\bv)$ belongs to the space of collision invariant functions. 
The equilibrium condition
\[
    \log f(\bv) = \alpha + {\bbeta} \cdot \bv + \gamma |\bv|^2, \quad \alpha,{\bbeta}, \gamma {\rm \,\, constants}
\]
combined with the conservation of mass, momentum, and energy, allow to determine unambiguously the equilibrium distribution, also known as Maxwellian equilibrium distribution
\begin{equation}
    \label{gau}
    M^f(\bv) = M_{\rho, \bu,T}(\bv) := \frac \rho{(2\pi T)^{3/2}}\exp\left\{-\frac{|\bv -\bu|^2}{2T}\right\},
\end{equation}
where $\rho, \bu, T$ are the moments of the initial distribution, that is
\begin{equation}
    \label{eq:moments_fixed}
    \rho = \int_{\R^3}f(\bv,0) d \bv, \quad \bu = \frac 1\rho\int_{\R^3}\bv f(\bv,0) d \bv, \quad
T = \frac 1{3\rho}\int_{\R^3}|\bv-\bu|^2 f(\bv,0) d \bv.
\end{equation}
%
%

%
Notice that, in view of the conservation of mass, if the initial condition $f(\cdot,0)$ is a probability density, \textit{i.e.}  $\int_{\erre^n} f(\bv,0)\,d\bv =1$, then the solution to the Boltzmann's equation remains a probability density for all subsequent times. 
Consequently, results about convergence towards equilibrium in terms of \emph{difference} between solutions immediately translate in results about \emph{divergence between probability distributions.} 
For this reason, in what follows, we assume that all the distributions under study are probability distributions.

\subsection{The relative entropy} 

Given two absolutely continuous probability distributions over $\erre^n$ whose densities are $f(\bv)$ and $g(\bv)$, their relative entropy is defined as
\begin{equation}
    \label{rel-ent}
    H(f|g)(t) =  \int_{\R^n} f(\bv) \log \frac{f(\bv)}{g(\bv)}\, d\bv \ge 0.
\end{equation}
Historically, given the solution $f(\bv,t)$ of the spatially homogeneous Boltzmann equation with a general kernel $\mathcal{K}(\nu, \cdot)$, Boltzmann's $H$-Theorem has always been regarded as an essential tool for determining the rate of (monotonic) convergence to zero of the relative entropy
\[
    H(f(t)|M^f)=  \int_{\R^3} f(\bv,t) \log \frac{f(\bv,t)}{M^f(\bv)}\, d\bv.
\]
The decrease in time of the relative entropy easily follows from \fer{H-dec} combined with the conservation of mass, momentum, and energy of the solution to the spatially homogenous Boltzmann equation. 
Indeed, if  $M^f$ is defined by \fer{gau} and $H(f)$ by \fer{entropy}, the conservation laws imply the equality
\[
    H(f(t)|M^f) =H(M^f)- H(f(t)).
\]
Quantifying the rate of convergence to equilibrium is not only an interesting mathematical question to be solved, but also has profound implications for applied mathematics. 
The main reason is that, while in the spatially {extended} setting the Boltzmann equation \fer{Boltz} provides a very accurate description of reality by modelling the particle  interactions in the position and momentum phase space, it is prohibitively expensive for numerical simulations of real three-dimensional problems, and one frequently resorts to fluid dynamical equations, which only uses position and time as independent variables. 
A fine kinetic description is crucial to understand the range of validity of any simplification.
This can be easily seen resorting to the classical splitting method, widely used in numerical simulations.\cite{Blanes} For the Boltzmann equation \fer{Boltz} the splitting method consists in applying sequentially on a small interval of time $\Delta t$ the collision operator
\begin{equation}
    \label{col1}
    \frac\partial{\partial t}f(\bx,\bv,t) = Q(f(\bx,\bv,t),f(\bx,\bv,t)),
\end{equation}
and the streaming operator
\begin{equation}
    \label{stre}
    \frac\partial{\partial t}f(\bx,\bv,t) = -\bv\cdot\nabla_\bx f(\bx,\bv,t).
\end{equation}
If, in a small interval of time, the solution to equation \fer{col1} relaxes to its local equilibrium, namely the Maxwellian function \fer{gau} with moments $\rho, \bu$, and $T$, as in \fer{eq:moments_fixed}, we can use the equilibrium to evaluate the evolution in time of its moments, thus reducing the complexity of numerical simulations.
We remark that the study of the rate of convergence toward equilibrium of the spatially homogeneous Boltzmann equation has achieved essential progress in the context of Maxwellian pseudo-molecules, \textit{i.e.} a simplified model in which the interaction kernel $\mathcal{K}$ depends only on the angle $\nu$ and affects only molecules that are sufficiently close.\cite{Bob1}
In this case, the collision term takes the simplified form
\begin{equation}
    \label{max}
    Q_M\bigl(f,f\bigr)(\bv) = \int_{{\R}^3\times{S^2}} \mathcal{K}\left(\nu\right) \left[ f( \bv^*)f( \bw^*)- f( \bv)f( \bw)\right] \, d \bw d{\bf n},
\end{equation}
where we recall that $\nu=\frac{(\bv-\bw)\cdot {\bf n}}{|\bv-\bw|}$ and
\[
    \int_{{S^2}} \mathcal{K}(\nu)\, d{\bf n} = 1.
\]
Notice that, unlike~\fer{colli}, in \fer{max} the interaction potential $\mathcal{K}(\cdot)$ does not depend on the relative velocity $\bv -\bw$.
The spatially homogeneous Boltzmann equation for Maxwell pseudo-molecules,\cite{Cer,CIP} which reads as
\begin{equation}
    \label{BE}
    {\partial\over \partial t}f(\bv,t) = Q_M\bigl(f,f\bigr)(\bv,t)
\end{equation}
has a number of interesting properties. 
For example, it can be simplified resorting to the Fourier transform:\cite{Bob1} 
\begin{equation}\label{eq:Fourier}
    \f(\bxi,t) = \int_{\R^3} f(\bv,t) e^{-i\bxi\cdot\bv}\, d\bv.
\end{equation}
Indeed, owing to the properties of the Fourier transform, Boltzmann's equation \fer{BE} can be rewritten as follows
\begin{equation}
    \label{BEF}
    {\partial\over \partial t}\f(\bxi,t) = \int_{{S^2}} \mathcal{K}\left(\nu\right) \left[\f\left( \frac{\bxi +|\bxi|{\bf n}}2\right)\f\left( \frac{\bxi -|\bxi|{\bf n}}2\right)- \f( \bxi)\f(0)\right] \,  d{\bf n}
\end{equation}
in which integration is reduced in dimension. 



\subsection{Approaching the problem through a different divergence, the relative Fisher information}

The relative entropy is not the unique tool used to measure the divergence between a solution of a kinetic model and its equilibrium state.
Given $f= f(\bv)$ a probability density on $\R^n$, Fisher’s quantity of information associated to $f$ is defined as the (possibly infinite) non-negative number
\begin{equation}
    \label{Fisher}
    I(f) = \int_{\R^n}\frac{|\nabla f(\bv)|^2}{f(\bv)} \, d\bv = 4\int_{\R^n}|\nabla \sqrt{f(\bv)}|^2 \, d\bv.
\end{equation}
This formula defines a convex, isotropic functional $I$, which was first used by Fisher in Ref. \cite{Fish} for statistical purposes, and plays a fundamental role in information theory and in other contexts.\cite{Vil3}
%

%
%
Fisher information was firstly introduced in kinetic theory by McKean Jr. in Ref. \cite{McK}, to study the convergence to equilibrium of the solution to the one-dimensional Kac caricature of the Boltzmann equation with a Maxwellian kernel.
%
%
The key observation by McKean was that the functional $I$, like the classical Boltzmann functional $H$, is non-increasing in time along solutions of Kac’s model. 
This monotonicity property was extended in Ref. \cite{BT} to the two-dimensional Boltzmann equation for Maxwellian molecules, and finally generalized to higher dimensions of the velocity space by Villani in Ref. \cite{Vil1} (cfr. also Ref. \cite{MT} for a simplified proof based on the Fourier expression \fer{BEF}). 
It is interesting to remark that, if $M^f(\bv)$ is the Maxwellian function defined in \fer{max}, its Fisher information only depends on its principal moments, in particular
\[
    I(M^f) = \frac{3\rho}{T}.
\]
Moreover, if we define the relative Fisher information of two probability densities $f$ and $g$ through the formula: 
\begin{equation} 
    \label{rel-Fish}
    I(f|g) =\int_{\R^n}\left| \frac{\nabla f(\bv)}{f(\bv)} - \frac{\nabla g(\bv)}{g(\bv)}\right|^2 f(\bv) \, d\bv,
\end{equation}
one has, like in the case of the relative entropy, that the relative Fisher information coincides with the difference between the Fisher measures of information, that is
\[
    I(f|M^f) = I(f) -I(M^f).
\]

\subsection{Energy Distances from Kinetic Theory}

%
Boltzmann-type equations with a Maxwellian kernel have recently been applied to statistical problems of interest in socio-economic contexts.\cite{CoPaTo05}
In this new setting, one aims to follow the distribution of wealth in a multi-agent system in which the relevant interactions are binary trades with a risky component. Let $f(w,t)$ denote the probability density at time $t \ge 0$ of agents with personal wealth $w\ge0$, departing from an initial density $f_0(w)$ with a mean value fixed equal to one. Then, as in the classical case described above, the \emph{collision} integral \fer{max} takes into account the elementary (one-dimensional) interactions 
\be\label{trade}
v^* = (1-\lambda)v + \lambda w + \eta v; \quad w^* = (1-\lambda)w + \lambda v + \tilde\eta w 
\ee
where $0<\lambda<1$ is a suitable constant, while $\eta$ and $\tilde\eta$ are two independent and equi-distributed random variables of mean value equal to zero and variance $\sigma$. In Ref. \cite{CoPaTo05} it has been shown that, in a certain limit, the Boltzmann equation is well approximated by the one-dimensional Fokker--Planck equation\cite{BM} 
 \be\label{FP2c}
 \frac{\partial f(w,t)}{\partial t} = J(f)(w,t) = \frac \sigma{2}\frac{\partial^2 }
{\partial w^2}\left( w^2 f(w,t)\right) + \lambda \frac{\partial }{\partial w}\left(
(w-1) f(w,t)\right).
\ee
 The key features of equation \fer{FP2c} is  that, in presence of suitable boundary conditions at the point $w=0$, the solution preserves both mass and momentum.
Moreover, it approaches in time a unique stationary solution of unitary mass.\cite{TT1}
This stationary state is given by the inverse Gamma distribution
 \be\label{equi2}
f_\infty(w) =\frac{(\mu-1)^\mu}{\Gamma(\mu)}\frac{\exp\left(-\frac{\mu-1}{w}\right)}{w^{1+\mu}},
 \ee
  where the positive constant $\mu >1$ is given by
\[
\mu = 1 + 2 \frac{\lambda}{\sigma}.
\]
%
In particular, we observe that the equilibrium solution \fer{equi2} exhibits a power-law tail for large values of the wealth variable as predicted by the Italian economist Vilfredo Pareto.\cite{Par} 
Lastly, the rate of convergence to equilibrium of the solution to equation \fer{FP2c} was studied in Ref. \cite{ToTo}, by expressing the linear kinetic equation in the Fourier space.
For any given $r\in \R$ it was shown that  
\begin{equations}
    \int_\R |\xi|^{2r}|\hat f(&\xi,t)-\hat f_\infty(\xi)|^2 \, d\xi \le \\  &\exp\left\{\frac{2r+1}2\left[ \sigma\frac{2r+3}2 +2\right]t\right\}\int_\R |\xi|^{2r}|\hat f_0(\xi,t)-\f_\infty(\xi)|^2\, d\xi,
\end{equations}
where $\f(\xi,t)$ is the Fourier transform of the density $f(w,t)$. Choosing   {$r<-1/2$} one concludes with the {exponential convergence} of   $\f(\xi,t)$ towards $\hat f_\infty(\xi)$  in the Sobolev space $\dot{H}_{-r}$.
It is interesting to remark that the choice $r = -1$ is nothing but the well-know original concept of Energy Distance introduced by Sz\'ekely.\cite{Sze89,Sze03}

\section{Why Kinetic Theory Should Matter to AI}

%
%
The study of the convergence to the Maxwellian equilibrium of the solution of the Boltzmann equation for Maxwell pseudo-molecules has led to identify new mathematical tools, and among them, various \emph{divergences} for probability measures tailor made to the specific structure of the collision integral in equation \fer{max}.  
However, the methodology of kinetic theory is not restricted to the study of ideal gases nor to binary interactions. 
Indeed, kinetic modelling as been used to study social sciences, leading to the development of new applied models.\cite{NPT,PT13}
%

%

%
In this section, we aim at establishing a connection between the Neural Networks commonly used in Machine Learning and kinetic models. 
In particular, we consider the structure of the simplest neural network: a Feedforward Neural Network (FNN).

\subsection{The Feedforward Neural Network}

A Feedforward Neural Network (FNN) can be graphically represented as a sequential $T+2$ partite orientated graph, that is, a collection of $T+2$ sets of nodes $\{N_i\}_{i=1,\dots,T+2}$, also known as \textit{layers}, that are ordered sequentially.
Every edge adjacent to a node in $N_i$ either goes from a node in $N_i$ to a node in $N_{i+1}$ or goes from a node in $N_{i-1}$ to a node in $N_i$, so that any flow on the directed graph flows in one direction.
For the sake of simplicity, we assume that every layer contains the same amount of nodes and that each node in $N_i$ is connected to every node in $N_{i+1}$, so that the FNN is fully connected.
The first layer is known as the \textit{input layer}, the last layer is known as the \textit{output layer}, the other $T$ layers are called \textit{hidden layers} and are the layers through which the information fed through the input layer flows and gets processed before reaching the output layer.
In Figure \ref{fig:deep_feedforward_nn}, we give a graphical representation of the structure of a FNN.

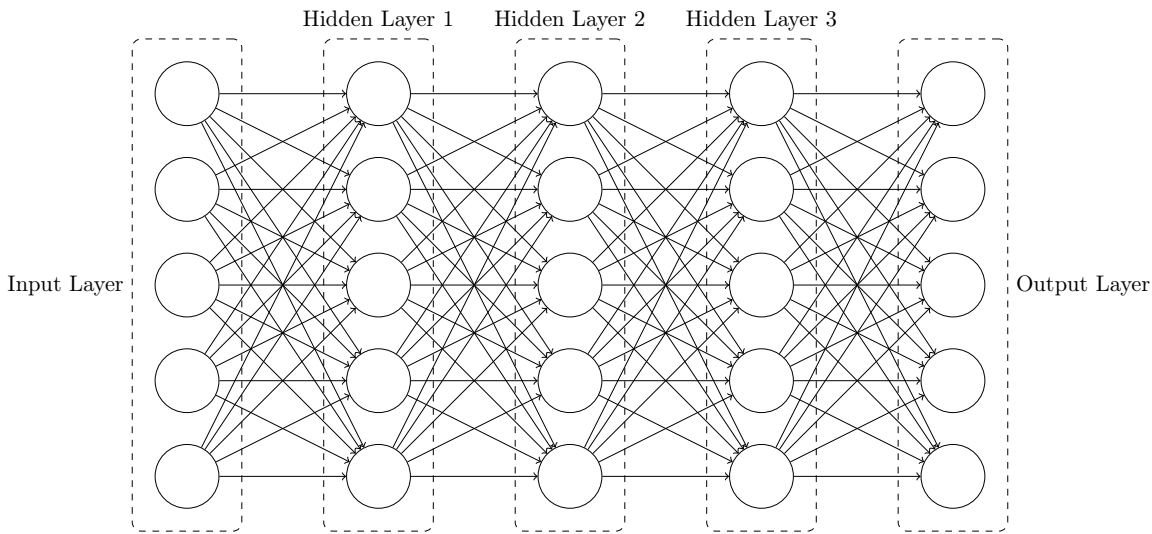
\begin{figure}[t!]
    \centering
    \resizebox{\textwidth}{!}{
    \begin{tikzpicture}[
        neuron/.style={circle, draw, minimum size=1cm},
        layer/.style={draw, dashed, rounded corners, inner sep=10pt}
    ]

        \foreach \i in {1,2,3,4,5} {
            \node[neuron] (I\i) at (0,3-\i*1.5) {};  
            \node[neuron] (H1\i) at (3,3-\i*1.5) {}; 
            \node[neuron] (H2\i) at (6,3-\i*1.5) {}; 
            \node[neuron] (H3\i) at (9,3-\i*1.5) {}; 
            \node[neuron] (O\i) at (12,3-\i*1.5) {}; 
        }

        \node[layer, fit={(I1) (I5)}, label={left:Input Layer}] {};
        \node[layer, fit={(H11) (H15)}, label={above:Hidden Layer 1}] {};
        \node[layer, fit={(H21) (H25)}, label={above:Hidden Layer 2}] {};
        \node[layer, fit={(H31) (H35)}, label={above:Hidden Layer 3}] {};
        \node[layer, fit={(O1) (O5)}, label={right:Output Layer}] {};

        \foreach \i in {1,2,3,4,5} {
            \foreach \h in {1,2,3,4,5} {
                \draw[->] (I\i) -- (H1\h);
                \draw[->] (H1\i) -- (H2\h);
                \draw[->] (H2\i) -- (H3\h);
                \draw[->] (H3\i) -- (O\h);
            }
        }

    \end{tikzpicture}
    }
    \caption{A simple example of a fully connected Feedforward Neural Network (FNN) with five layers and five neurons per layer.}
    \label{fig:deep_feedforward_nn}
    
\end{figure}

The way in which the information is processed through the FNN is determined by the weights of the FNN and its activation function.
The weights of a Neural Network is a collection of values $\{w_e\}_{e\in E}$, where $E$ is the set containing all the orientated edges of the FNN.
We denote with $w_{i,j}^{(t)}$ the weight of the edge connecting node $i\in N_t$ to the node $j\in N_{t+1}$.
The activation function, which we denote with $\phi$, is a non-linear application that maps a real value to another real value.
Let us now assume that we are given an input for our FNN which, owing to the structure of the network, consists of $M$ real values (for example they could be the medical information of an individual or the greyscale values of a black and white image).
Fixed the weights of all the edges, the information given in input is processed via the following iterative process.
Given the $M$ real values of the $t$-th layer, namely the $M$-dimensional vector $V^{(t)}=(\alpha_1^{(t)},\alpha_2^{(t)},\dots,\alpha_{M}^{(t)})$ we have that the information at the $(t+1)$-th layer is determined by two sequential operations:
\begin{enumerate}
    \item[(O1)] \label{first:op:FNN} A linear application, that assigns a real value to the $i$-th node of the $(t+1)$-layer through the formula
    \begin{equation}
    \label{eq:layert+1}
        \alpha_{i}^{(t+1)}=\sum_{k=1}^M w_{k,i}^{(t)}\alpha_{k}^{(t)}.
    \end{equation}
    \item[(O2)] \label{second:op:FNN} The activation function $\phi$, which maps the values in \eqref{eq:layert+1} to $\phi(\alpha_{i}^{(t+1)})$.
\end{enumerate}
%

\subsection{The FNN through Boltzmann's Equation}

In many of kinetic models, binary interactions between molecules are replaced by linear interactions with an external background, which remains constant over time. 
Such linear models are described by equations of the type \fer{col1}, that is
\begin{equation}
    \label{col-hom}
    \frac\partial{\partial t}f(\bv,t) = Q(f(\bv,t), B(\bv)),    
\end{equation}
where $B(\bv)$ is the density function that describes the action of the background on the molecules. 
Within the classical field of binary collisions between molecules, if we choose $B$ to be equal to the Maxwellian equilibrium distribution $M^f(\bv)$ defined in \fer{gau}, the interaction \fer{col} characterizes the collisions of molecules with a fixed background in a Maxwellian equilibrium whose mass, momentum, and energy are fixed.
%
In this case, Maxwell’s weak form for the collisional integral simplifies to 
\begin{equation}
    \label{weak-lin}
    \langle \phi(\bv), Q\bigl(f,M^f\bigr)(\bv)\rangle \! =\! \frac 12 \int_{{\R}^3\times\R^3\times{S^2}} \!\!\!\!\mathcal{K}\left(\nu\right) 
    \left(\phi(\bv^*) - \phi(\bv)\right)f( \bv)M^f( \bw)\, d\bv d \bw d{\bf n},
\end{equation}
where $M^f$ is defined as in \eqref{gau}.

%
%
We now show that it is possible to interpret an FNN as one of these kinetic models.
For the sake of simplicity, let us restrict our study to the case in which each layer contains the same amount of nodes, namely $M$, and the coefficients in \fer{eq:layert+1} do not depend on $t$, but can in general be distributed according to some probability law. 
Within this framework, operation (O1) can be reinterpreted as an interaction between a molecule with velocity $\bV=(\alpha_1,\dots,\alpha_M)$, with a background $B$, that returns a velocity $\bV^* = A^T\bV$, where $A$ is the $M\times M$ matrix with non negative weights $w_{i,j}$ that characterizes the action of the background.
Likewise, the role of the activation function $\phi$ described by operation (O2) can be interpreted as the action of a second background which, given the velocity $\bV$ returns a velocity $\bV^{**} = (\phi(\alpha_1),\dots, \phi(\alpha_M))$.
Consequently, postulating the validity of molecular chaos,\footnote{\textit{i.e.} the distribution of the molecules is independent if the number of molecules $n$ is large enough} the evolution of the FNN's output can be described by a linear Boltzmann equation of type \fer{col-hom}, where now the collision term is the sum of two different contributions. 
Assuming the description of Maxwell-type pseudo-molecules, which in this case correspond to fix a constant interaction potential, we obtain that the distribution $f(\bV,t)$ of molecules obeying  the rules of the FNN network evolve according to
\begin{equation}
    \label{col-FNN}
    \frac\partial{\partial t}f(\bV,t) = \frac 1{\tau_1} Q_1 (f(\bV,t), B)+ \frac 1{\tau_2} Q_2 (f(\bV,t),f(\bV,t))
\end{equation}
where $\tau_i, \, i=1,2$, are suitable relaxation times, so that the Maxwell’s weak form for the collisional integral simplifies to 
\begin{equation}
    \label{lin-1}
    \langle \phi(\bV), Q_1\bigl(f,M\bigr)(\bV)\rangle =  \frac 1{2\tau_1} \int_{{\R}^M\times\R_+} \left(\phi(\bV^*) - \phi(\bV)\right)f( \bV)M( w)\, d\bV d w 
\end{equation}
and
\begin{equation}
    \label{lin-2}
    \langle \phi(\bV), Q_2\bigl(f,B\bigr)(\bV)\rangle = \frac 1{2\tau_2} \int_{{\R}^M} \left(\phi(\bV^{**}) - \phi(\bV)\right)f( \bV)\, d\bV.
\end{equation}
The integral on the right-hand side of equation \fer{lin-2} measures the variation of the observable quantity $\phi$ following the variation of the quantity $V$ under consideration, which is determined by the action of the two backgrounds.
%

%

\subsection{The FNN through Generalized Maxwell Models}

To reinforce our argument, we present a different way of looking at the kinetic description of a FNN network based on a generalized Maxwell model in which multiple interactions can happen at the same time.
These models have been fruitfully used in Ref. \cite{BCG}, where they have been deployed to describe the evolution of wealth distribution within a fixed population.
%
%
%
Specifically, in Ref. \cite{BCG} the microscopic interaction between $M\ge 1$ particles is interpreted as an interaction in a community of agents participating in economical trades.
Denoted with $\bV= (v_1,v_2,\dots, v_M)$ the $M$-dimensional vector of the non-negative wealths of the agents, the post-interaction state $\bV^\ast= (v_1^*,v_2^*,\dots, v_M^*)$ gives the new wealths after a single economic trade. 
The $M$-particle interaction considered in Ref. \cite{BCG} is a linear transformation of $\bV$ given by
\begin{equation}
    \label{n-th}
    v_i^\ast=av_i+b \sum_{j=1}^{M}v_j \qquad i=1,\,\dots,\,M,
\end{equation}
where the parameters $a,\,b$ are distributed according to a probability distribution with a suitable number of bounded moments. 
Let us now denote by $f_i=f(v_i,\,t)$ the density of particles with state $v_i\in\R$ at time $t\geq 0$. 
According to \fer{n-th}, each element $v_i$ of the vector $\bV$ is subject to the same variation in terms of the other $M-1$ elements.
Consequently, if we start with a common initial distribution $f_i(v_i, t=0) = f(v_i, t=0)$, $i = 1,,\dots, M$, at any subsequent time $ t >0$ one has $f_i(v_i, t) = f(v_i, t)$.
Postulating the validity of molecular chaos, the evolution of any \textit{observable quantity} $\varphi$
is given by the following Boltzmann-type equation
\begin{equation}
    \frac{d}{dt}\int_{\R_+}\varphi(v)f(v,\,t)\,dv=
        \frac{1}{\tau M} \int_{\R_+^M}\sum_{i=1}^{M}\left\langle{\varphi(v_i^\ast)-\varphi(v_i)}\right\rangle{\prod_{j=1}^{n}f(v_j,\,t)}\,dv_1\,\dots\,dv_M,
    \label{eq:kine-Gamba}
\end{equation}
where $\tau$ denotes a relaxation time and $\langle{\cdot}\rangle$ is the average with respect to the distributions of the random parameters $a,\,b$ in~\fer{n-th}. 
The collision integral depends on a constant \textit{collision kernel},  that corresponds to consider \textit{Maxwell-type interactions}. 
The right-hand side of~\fer{eq:kine-Gamba} takes into account the whole set of microscopic states, and consequently it depends on the $n$-product of the density functions $f(v_1,\,t)\cdot\ldots\cdot f(v_M,\,t)$. 
Thus, if $M>1$ the evolution of $f$ obeys a highly non-linear Boltzmann-type equation. 
The interesting point remarked in Ref. \cite{BCG} is that, in this case, a considerable simplification occurs in presence of a large number of participants $M$, resulting in a linearized version of equation \fer{eq:kine-Gamba}.
These findings are an interesting example of the way kinetic theory of rarefied gases can be fruitfully employed to study multi-agents problems in economics.
Related models based on multiple interaction of type \fer{n-th} have been subsequently considered in Ref. \cite{TTZ} in connection with the study of a jackpot game with $M$ gamblers. 
Going back to the FNN, it becomes evident that operation \eqref{eq:layert+1} described in (O1) can be interpreted as a generalization of the $n$-th interaction \fer{n-th} discussed in Ref.~\cite{BCG}, with the linear transformation representing the core of the information processing.
In the case described by the FNN, the interaction between $n$-agents with \emph{velocities} $(\alpha_1,\alpha_2,\dots,\alpha_{M})$ changes the velocities into $(\alpha_1^*,\alpha_2^*,\dots,\alpha_{M}^*)$ according to the linear formula \eqref{eq:layert+1}. 
Afterwards, in operation (O2), the velocities $\alpha_i$ are modified into $\alpha^{**}_i = \phi(\alpha_i)$, $i =1,2,\dots, M$.
Let us now denote by $f_i=f(\alpha_i,\,t)$ the distribution  of the values of the node $i\in N$ at layer $t$. 
According to \eqref{eq:layert+1}, in presence of a general set of weighs, the velocities $\alpha_i$ are subject to different variations in terms of the other $M-1$ elements. 
Consequently, at difference with \fer{n-th}, even if we start with a common initial distribution $f_i(\alpha_i, t=0) = f(\alpha_i, t=0)$, $i = 1,,\dots, M$, at any subsequent time $ t >0$ the densities $f_i(\alpha_i, t)$ follow different evolutions. 
Hence, the (unique) Boltzmann equation \fer{eq:kine-Gamba} is replaced by a system of $M$ Boltzmann equations of the form
\begin{equations}
    \frac{d}{dt}\int_{\R_+}\varphi(\alpha_i)f_i(\alpha_1,\,t)\,dv=& \frac{1}{\tau_1 }\int_\R (\psi(\alpha^{**})- \psi)(\alpha))f_1(\alpha_i,t)\, d\alpha_i\\
      & \;\; +\frac{1}{\tau_2 } \int_{\R_+^M}\left({\varphi(\alpha_i^\ast)-\varphi(\alpha_i)}\right){\prod_{j=1}^{n}f_j(\alpha_j,\,t)}\,d\alpha_1\,\dots\,d\alpha_M,
    \label{eq:kine-system}
\end{equations}
with $i=1,2,\dots,M$. 

In particular, once a kinetic model is formalized, be it in the form \fer{col-FNN} or \fer{eq:kine-system}, the properties of the probability density solution of the model, and eventually of the associated steady state solution, can be achieved by resorting to the mathematical techniques developed for the classical Boltzmann equation.

\section{Towards spatially inhomogeneous models}
In the previous section of the paper, we have been considering a few examples where kinetic equations ---and the divergences adapted to quantify their convergence to equilibrium--- apply to models from AI and statistics. 
Our analysis has been carried out in the \emph{physical space} $\mathbb R^n$.

Actually, kinetic theory provides us with a natural hierarchy between a \emph{position} and \emph{velocity} variables, denoted with $\bx$ and $\bv$, respectively, as in \eqref{Boltz}. Let us consider, henceforth, the phase space of positions and velocities $\Gamma = \{(\bx,\bv) \, :\, \bx,\bv \in \mathbb R^n\}$.
Such a structure could be exploited, beyond its physical relevance, in applications within the scope of AI. Let us briefly review two of them.
\subsection{From exponential to ballistic along the \texorpdfstring{$\chi^2$-distance}{}}
Let $\mathbf{X} \sim \mu=\mu(d\bx)$ be a random variable on the physical space $\mathbb R^n$. In order for a NN to \emph{learn} the distribution $\mu$, a dataset of samples of $\mu$ is needed. However, simulating the measure $\mu$ directly is practically unfeasible in most cases. 
Then, an idea\cite{Wibisono2018,Welling2011} is to construct a time-evolution\footnote{which, in practice, is time-discrete, but for presentation purposes will be written as time-continuous} process $(\mathbf{X}_t)_{t\geq0}$, starting at a given (and easily accessible) variable $\mathbf X_0 \sim \mu_0$, and such that the law $\mu_t$ of $\mathbf X_t$ is such that $\mu_t \to \mu$, as $t\to\infty$. Then, we sample from $\mathbf{X}_t$, which is accessible and, for $t$ large enough, we will have a reasonable approximation of the inaccessible target $\mathbf{X}$.  

One classical example, if $\mu = \mathrm{e}^{-\phi(\bx)} \, d\bx$, for a smooth potential $\phi$, is the \emph{overdamped Langevin dynamics}, given by the equation
\begin{equation}\label{langevin}
\begin{cases}
    d \mathbf X_t = -\nabla_x \phi(\mathbf X_t) + \sqrt{2} dW_t,  \\
    \mathbf{X}_{t=0} = \mathbf{X}_0.
\end{cases}
\end{equation}
where $W_t$ is a Brownian motion over $\mathbb R^n$. Implementations of \eqref{langevin} have been studied in the recent literature, also in connection with stochastic gradient-descent methods. 
The law $\mu_t=f(\bx,t) \, d\bx$ of $\mathbf{X}_t$ obeys the Fokker--Planck equation
\begin{equation}\label{OU}
    \partial_t f = \nabla_x \cdot( \nabla_x f + \phi(\bx) f).
\end{equation}
If $\phi$ is strictly convex, i.e., $\nabla^2_{xx} \phi \geq \alpha \Id$, for some $\alpha>0$, it is well-known that\cite{bakryemery}  
\[
H(\mu_t|\mu) \leq \mathrm{e}^{- 2\alpha t} \, \mathrm{H}(\mu_0|\mu).
\]
If $\phi$ is not strictly convex, but still it is such that $\mu$ satisfies a Poincar\'e inequality\footnote{very roughly corresponding to $\phi(\bx) \gtrsim \lambda_\mu \, |\bx|$ for large values of $|\bx|$} with constant $\lambda_\mu$, then we can control the $\chi^2$-divergence, given by $\chi^2(f|g) = \int_{\erre^n} |f(\bx)/g(\bx) -1|^2 \, g^2(\bx) \, d\bx$, as follows 
\begin{equation}\label{chi2}
\chi^2(\mu_t|\mu) \leq \mathrm{e}^{-2 \lambda_\mu t} \, \chi^2(\mu_0|\mu).
\end{equation}
When $\lambda_\mu << 1$, the convergence property of the sampling method deteriorates. 

Here is where kinetic theory comes to help\cite{Lei2016}. Let us artificially introduce the velocity variable $\bv$, and let us construct a process $(\mathbf{X}_t,\mathbf{V}_t)_{t\geq 0}$ on the phase space, such that the law $\Theta_t(\bx,\bv)$ of $(\mathbf{X}_t,\mathbf{V}_t)$ converges, e.g., to $\Theta := \mathrm{e}^{-\phi(\bx) - \frac{1}{2}|\bv|^2} \, \,d\bx\,d\bv.$

A typical example is the \emph{underdamped Langevin dynamics}
\begin{equation}\label{ulangevin}
\begin{cases}
    d\mathbf{X}_t = \mathbf{V}_t, \\
    d\mathbf V_t = - \mathbf V_t - \nabla_x \phi(\mathbf V_t) + \sqrt{2}\,dW_t, \\
    (\mathbf{X}_t,\mathbf{V}_t)_{t=0} = (\mathbf X_0,\mathbf V_0),
\end{cases}    
\end{equation}
where $W_t$ is a standard Brownian motion in velocity variables only, and $(\mathbf X_0,\mathbf V_0)$ is an accessible initial measure. 
The law $\Theta_t(\bx,\bv)$ of $ (\mathbf X_t,\mathbf V_t)$ satisfies the kinetic Fokker--Planck equation
\begin{equation}
    \label{kfp}
    \partial_t \Theta_t + \bv \cdot \nabla_x \Theta_t - \nabla_x \phi \cdot \nabla_v \Theta_t = \nabla_v \cdot (\nabla_v \Theta_t + \bv\Theta_t),
\end{equation}
which can be thought of as a simplification of \eqref{Boltz}, where the collision term is replaced by interaction with a random (but fixed) background force. 
Studying \eqref{kfp} is challenging, since diffusion acts only on velocity variables, which are linked to the position variables via the Hamiltonian transport operator $\bv\cdot\nabla_x - \nabla_x \phi \cdot \nabla_v$. The hierarchy between $\bx,\bv$, and the interplay between transport and diffusion require sophisticated analytic tools in order to be dealt with.
However, by means of the \emph{hypocoercivity} theory\cite{Villani}, in particular the $\mathrm{L}^2$-framework of J.~Dolbeault, C.~Mouhot, and C.~Schmeiser\cite{Dolbeault}, it holds true that
\begin{equation}\label{chi3}
\chi^2(\Theta_t|\Theta) \leq C\, \mathrm{e}^{-2 \lambda t} \, \chi^2(\Theta_0|\Theta),
\end{equation}
for a pre-factor $C>1$, and a certain rate $\lambda>0$. Recently\cite{Cao,Brigati,BrigatiLifts,BrigatiPoincare}, constructive estimates for $C$ and $\lambda$ have been established. In particular, indicating again with $\lambda_\mu$ the Poincar\'e constant of the target measure $\mu$, it holds true that 
\[
\chi^2(\Theta_t|\Theta) \lesssim\, \mathrm{e}^{-2 \sqrt{\lambda_\mu} t} \, \chi^2(\Theta_0|\Theta),
\]
up to an explicit, dimension-free constant. 
The rate is then improved up to the \emph{ballistic} regime $\sqrt{\lambda_\mu}$, which is actually sharp.\cite{Eberle} To simulate the target $\mu$, we sample from the space-marginal of $\Theta_t$ for $t$ sufficiently large.

\subsection{Smooth interpolation: Wasser-splines, \emph{minimal acceleration}}
In his breakthrough-paper of 1991, Y.~Brenier\cite{brenier1991polar} shows the existence of optimal transport maps for the Wasserstein distance. In other words, the optimal plan $\pi$ in \eqref{Was-p} should be looked for among those which \emph{maximize correlation between the marginals}. Whenever the first marginal $\mu$ is regular enough, e.g.~$\mu(d\bx) =f(\bx)\,d\bx$,  the optimal transport plan between $\mu$ and $\nu$ is unique, and concentrates on the graph of a map $M$, such that $(M)_\# \mu = \nu$. Then, going back to the original ideas of G.~Monge\cite{monge1781memoire}, it is very natural to introduce the interpolating curve $(\bar\mu_t)_t$ given by 
\begin{equation}\label{bb}
\bar\mu_t = ((1 - t) \Id + t M)_\# \mu, \qquad t \in [0,1],
\end{equation}
which is a \emph{dynamical} way of transporting $\mu$ to $\nu$. In addition, the curve $(\bar\mu_t)_t$ is optimal ---among all continuous ways to move $\mu$ to $\nu$--- in the sense of the total action\cite{Benamou} 
\[
\int_0^1 \int_{\mathbb R^n} |V_t|^2 \, d\mu_t \, dt, \qquad \partial_t \mu_t + \nabla_x \cdot (V_t \mu_t) = 0, \quad \mu_0=\mu, \quad \mu_1 = \nu. 
\]
Finding a suitable \emph{intermediate version} between two probability measures $\mu$ and $\nu$ is useful in a number of practical situations, ranging from the treatment of images and datasets, to biology, and  urban planning.\cite{peyre2019computational,santambrogio2015optimal} 
In addition, calculating the optimal map $M$ is practically feasible, thanks to the recent progress in computational optimal transport.\cite{peyre2019computational} If $M$ is known, then, the linear interpolation $\bar\mu_t$ is simply given by following straight \emph{rays} between each pair $(\bx,M(\bx))$, for $\bx$ in the support of $\mu$.

Even if the Monge--Mather shortening principle, and the Monge--Ámpere equation\cite{villani2009optimal} guarantee some regularity properties for $\bar\mu_t$, the Wasserstein optimal interpolation is not always the most desirable choice. 
\begin{enumerate}
    \item In image interpolation, or generation ---as it is the case for the model GOTEX for synthetized textures\cite{Houdard}, a smoother curve between $\mu$ and $\nu$ will provide higher-regularity objects.\cite{BenamouGallouet} 
    \item In optimal steering of vehicles or cellular differentiation, it is sometimes more natural to follow the path which minimises the consumption of fuel/nutrient, rather than the shortest one. 
\end{enumerate}
Let us review an alternative interpolation method, based on kinetic theory and optimal control\cite{Chen,Chewi,BrigatiOtikin,Einav,Elamvazhuthi,Justiniano}, which is adaptable to both issues listed above.

Let $\mu=\mu(d\bx),\,\nu=\nu(d\bx)$ be two measures on the position space $\mathbb R^n$. Let us complete $\mu,\nu$ to two measures $\Theta, \Xi$ over the phase space $\Gamma \ni (\bx,\bv)$, under the condition that 
\[
\mathbb E_v(\Theta) = \mu, \qquad \mathbb E_v(\Xi) =\nu.
\]
The velocity marginals of $\Theta, \Xi$ may come from a measurement ---as in physics--- or are assigned artificially.\cite{Chen}
Then, let us consider the following minimisation problem
\[
\mathsf d_T^2(\Theta,\Xi) = \inf_{\pi \in \Pi(\Theta,\Xi)} \, {\int_{\Gamma^2} 12 }\left| \frac{\by-\bx}{T} - \frac{\bv+\bw}{2} \right|^2 + |\bv-\bw|^2 \, d\pi((\bx,\bv),(\by,\bw)), \qquad T>0.
\]
An optimiser $\pi$ is always admitted, but, whenever $\Theta(d\bx,d\bv) = f(\bx,\bv) \, d\bx \,d\bv$, then there exists an optimal plan $\pi$ in the form $(\Id,M_T)_\# \Theta$, for a Monge map $M_T: \Gamma \to \Gamma$. Let us indicate $(\by,\bw) := M_T(\bx,\bv)$. To interpolate on the phase space, we introduce 
\[
\begin{aligned}
&M_t(\bx,\bv) = (\alpha_{x,v}(t),\alpha'_{x,v}(t)), \qquad t \in [0,T], \\
&\alpha_{x,v}(t) = \left( \frac{\bv+\bw}{T^2}-2\frac{\by-\bx}{T^3} \right)t^3 + \left(3\frac{\by-\bx}{T^2}-\frac{2\bv+\bw}{T}\right)t^2 + \bv t+\bx
\end{aligned}
\]
The curve $\alpha_{x,v}$ is called a \emph{cubic spline} between $(\bx,\bv)$ and $(\by,\bw)$, and it has the property of minimising the acceleration along paths connecting $(\bx,\bv)$ with $(\by,\bw)$: 
\[
\begin{aligned}
&12 \left| \frac{\by-\bx}{T} - \frac{\bv+\bw}{2} \right|^2 +|\bv-\bw|^2 = T \,\int_0^T |\alpha''_{x,v}(t)|^2 \, dt \\
&= \inf_{\alpha} \, T \int_0^T |\alpha''(t)|^2 \, dt, \quad \alpha: \, [0,T] \to \Gamma, \quad (\alpha,\alpha')(0) = (\bx,\bv), \quad (\alpha,\alpha')(T) = (\by,\bw).
\end{aligned}
\]
Then, the functional $\mathsf d_T$ is named the \emph{minimal acceleration discrepancy}\cite{BrigatiOtikin}, as it is given precisely by the acceleration of optimal splines between $\Theta$ and $\Xi$.
The curve 
\[
\bar\Theta_t := (M_t)_\# \Theta, \qquad t \in [0,T],
\]
is a measure-valued spline, and it is optimal ---among all ways of moving $\Theta$ to $\Xi$--- in the sense of the total acceleration
\[
\begin{aligned}
&T\int_0^T \int_{\Gamma^2} |F_t|^2 d\Theta_t \, dt\, \\
&\partial_t \Theta_t + \bv \cdot \nabla_x \Theta_t + \nabla_v \cdot (F_t \Theta_t) = 0, \qquad  \Theta_0= \Theta, \quad  \Theta_T = \Xi. 
\end{aligned}
\]
The readers will recognise, in the last line, Vlasov's equations: the analogous of Newton's laws of motion $\bx'=\bv, \,\, \bv'=F$ in the case of probability measures.

\section{Our Desiderata}
\label{sec:desired_properties}
In this section, we describe the properties a divergence should possess.
For the sake of simplicity, we divide the properties into three subgroups: \begin{enumerate*}[label=(\roman*)]
    \item \textit{Analytical Properties}, \textit{i.e.} properties related to the regularity of the divergence,
    \item \textit{Topological Properties}, \textit{i.e.} properties related to the notion of convergence induced by the divergence, and
    \item \textit{Computational Properties}, \textit{i.e.} properties bounded to the computational aspectes of the divergence.
\end{enumerate*}
In what follows, $\mathcal{P}(\mathbb{R}^n)$ represents the set of probability measures supported on $\mathbb{R}^n$.
We say that a measure $\mu\in\PP(\erre^n)$ has finite $p$-th moment if 
\begin{equation}
    \int_{\erre^n}|\bx|^pd\mu(\bx)<+\infty.
\end{equation}
We denote with $\PP_p(\erre^n)$ the set of probability measures whose $p$-th moment is bounded.
Since every random vector in $\mathbb{R}^n$ is characterized only in terms of its associated probability distribution, we adopt a slight abuse of notation and use the random vector $\bX$ and its corresponding probability measure $\mu$ interchangeably.

\subsection{Analytical Properties}

First, a divergence should be symmetric and assign a null value only to two equal probability distributions.
These properties are ensured by requiring $\mathcal{D}$ to be a metric.

\begin{definition}[Metric]
    A function $\mathcal{D}:\PP(\erre^n)\times\PP(\erre^n)\to[0,\infty)$ is a metric if it satisfies the following three properties
    \begin{enumerate}[label=(\roman*)]
        \item \label{ax:dist1} $\mathcal{D}(\mu,\nu)=0$ if and only if $\mu=\nu$;
        \item \label{ax:dist2} $\mathcal{D}(\mu,\nu)=\mathcal{D}(\nu,\mu)$; and
        \item \label{ax:dist3} $\mathcal{D}(\mu,\nu)\le\mathcal{D}(\mu,\zeta)+\mathcal{D}(\zeta,\nu)$ for every $\zeta\in\PP(\erre^n)$.
    \end{enumerate}
\end{definition}

Identifying when two probability measures are equal is fundamental.
Equally important, however, is understanding how small changes in the arguments of a divergence affect its value.
To this end, the second property we investigate is the differentiability of a divergence measure with respect to one of its arguments.
Specifically, we focus on differentiability with respect to smooth families of parameterized probability distributions.

\begin{definition}[Differentiability]
    Let $I\subset\erre$ be an interval and $\{P_\theta\}_{\theta \in I}$ be a differentiable parametric family of probability distributions. We say that $\mathcal{D}$ is differentiable along the family $\{P_\theta\}_{\theta \in I}$ if the function
    \[
        \partial_\theta\mathcal{D}(P_\theta,Q):=\lim_{h\to0^+}\frac{\mathcal{D}(P_{\theta+h},Q)-\mathcal{D}(P_\theta,Q)}{h}
    \]
    is well-defined for every $\theta\in I$ and for every probability distribution $Q$.
\end{definition}

%
%

\subsection{Topological Properties}

We then move to study the topological properties, \textit{i.e.} the notion of convergence that the divergence induces.
As we said previously, we want to determine when two distributions are similar.
In particular, given two divergences $\mathcal{D}$ and $\mathcal{D}'$, we are interested in whether we can bound one divergence with another.
This property is important as it allows to classify and group divergence functions based on the type of convergence they induce.

\begin{definition}[Equivalence]
    \label{def:equivalence}
    Given two divergences $ \mathcal{D}, \mathcal{D}':\PP(\erre^n)\times\PP(\erre^n)\to[0,\infty)$, we say that $ \mathcal{D}$ and $ \mathcal{D}'$ are equivalent if there exists two pairs of positive constants, namely $c_l,c_u\ge 0$ and $r,s$ such that
    \begin{equation}
        \label{eq:def_equivalence}
        c_l[ \mathcal{D}(\bX,\bY)]^r\le \mathcal{D}'(\bX,\bY)\le c_u[ \mathcal{D}(\bX,\bY)]^s
    \end{equation}
    for every couple of random vectors $\bX$ and $\bY$.
\end{definition}

\begin{remark}
    Notice that we have adopted a slightly more general notions of equivalence in which we are allowed to consider different powers of the divergences.
    This notion allows us to relate the convergence speed of two equivalent divergences.
    Indeed, if we have that if $\{\mu_n\}_{n\in\enne}$ is a sequence for which $\mathcal{D}(\mu_n,\mu)\le n^{-p}$, then, if $\mathcal{D}'$ satisfies \eqref{def:equivalence}, we have that $\mathcal{D}'(\mu_n,\mu)\le n^{-sp}$.
\end{remark}


We then come to how the divergence behaves under linear operations.
First, we consider the sub-additivity by convolution, which regulates how the divergence behaves when we compare the sum of two random variables.

\begin{definition}[Sub-additivity by convolution]
\label{def:sub_add}
    Let $\bX$ and $\bY$ be two random vectors.
    A divergence $\mathcal{D}$ is sub-additive by convolution if 
    \begin{equation}
    \label{eq:sub_add}
        \mathcal{D}(\bX+\bZ,\bY+\bW) \le \mathcal{D}(\bX,\bY) + \mathcal{D}(\bZ,\bW),
    \end{equation}
    for any random vector $\bZ$ independent from $\bX$, and any random vector $\bW$ independent from $\bY$.
    Moreover, $D$ is sub-additive by  \textit{convex} convolutions if 
    \begin{equation}
    \label{eq:sub_add_conv}
        \mathcal{D}(\sqrt{\lambda}\bX+\sqrt{1-\lambda}\bZ,\sqrt{\lambda}\bY+\sqrt{1-\lambda}\bW) \le \sqrt{\lambda}\mathcal{D}(\bX,\bY) + \sqrt{1-\lambda}\mathcal{D}(\bZ,\bW),
    \end{equation}
    for every $\lambda\in(0,1)$, for any random vector $\bZ$ independent from $\bX$, and any random vector $\bW$ independent from $\bY$.
\end{definition}

Next, we consider the scale sensitivity of a divergence, which determines how the value of a divergence changes when we rescale the random vectors by a constant.

\begin{definition}[Scale Sensitivity]
    Given $p\in\enne$, we have that a divergence $\mathcal{D}$ is $p$-scale sensitive if
    \begin{equation}
        \mathcal{D}(c\bX,c\bY)=|c|^p\mathcal{D}(\bX,\bY)
    \end{equation}
    for every couple of random vectors $\bX$ and $\bY$ and every $c\in\erre$.
\end{definition}

\begin{definition}[Zoloratev Ideal divergence, Ref \cite{Zol}]
    A divergence is said to be Zoloratev ideal if it is both sub-additive by convolution and $1$-scale sensitive.
\end{definition}
%

%

%
A particularly important case is the $0$-scale sensitivity, according to which scaling a variable does not change the value of the divergence.
Divergences enjoying such properties are necessary, for example, in economics, where data on the wealth of different populations may be collected in various currencies or measured in terms of different natural resources.
In such scenarios, it is essential to express data properties using indices that are independent of the unit of measurement.
Notable examples of such indices are the Gini Index and the Coefficient of Variation.\cite{ABGT,AGT,VN} 
Owing to the fact that a divergence acts on pairs of probability distributions, we consider a stronger notion of scale invariance, in which both arguments are scaled differently.

\begin{definition}[Scale Invariance]
    \label{def:scaleinvariance}
    Let $ \mathcal{D}:\PP(\erre^n)\times\PP(\erre^n)\to[0,+\infty)$ be a divergence function. 
    We say that $ \mathcal{D}$ is scale invariant if 
    \begin{equation}
        \label{eq:scaleinvariance}
         \mathcal{D}(Q\bX,Q'\bY)= \mathcal{D}(\bX,\bY)
    \end{equation}
    for every couple of random vectors $\bX$ and $\bY$ and any couple of diagonal 
    matrices $Q,Q'$ whose diagonal values are positive, that is
    \[
        Q=  \diag(q_1,q_2,\dots,q_n):=\begin{pmatrix}
            q_1 & 0 & 0 & \dots & 0\\
            0 & q_2 & 0 & \dots & 0\\
            0 & 0 & q_3 & \dots & 0\\
            \vdots & \vdots & \vdots & \ddots & \vdots\\
            0 & 0 & 0 & \dots & q_n\\
        \end{pmatrix}
    \]
    and $Q'=\diag(q_1',q_2',\dots,q_n')$, with $q_i,q_i'>0$ for every $i=1,\dots,n$.
\end{definition}

We emphasize once again that the notion of scale invariance exhibited by both the Gini Index and the Coefficient of Variation differs from the one introduced in Definition~\ref{def:scaleinvariance}.
Indeed, while the Gini Index and the Coefficient of Variation are defined on a single probability measure, divergences are defined on pairs of probability measures.
For this reason, our notion of scale invariance accounts for all possible combinations of scalings that may be applied independently to each of the two arguments of the divergence.
%
%
%

\subsection{Computational Properties}

Finally, we examine the computational aspects of the divergences, \textit{i.e.}, those related to their implementation in applied problems.
In particular, we consider two aspects: 
\begin{enumerate*}
    \item the computational cost of evaluating the divergence between two discrete probability distributions and
    \item the how well the gradient of the divergence can be approximated when one of the two distributions is available only through samples.
\end{enumerate*}
%
%
%

\begin{definition}[Complexity]
    We say that the divergence $\mathcal{D}$ is $O(N^p)$-complex if it takes at most $CN^p$ operations to compute the divergence between two probability distribution supported over $N$ points, where $C$ is a constant that does not depend on the probability measures at hand. 
\end{definition}

%
%

%
%

%
A common practical challenge is approximating the gradient when one of the two distributions cannot be expressed analytically.
This situation often arises in machine learning, where a generated probability distribution must be compared to an unknown target distribution that is only accessible through samples—that is, the dataset.
In such cases, the gradient must be estimated from an empirical approximation of the distribution based on the available samples.
As a result, the gradient of the divergence computed on the samples is expected to provide an unbiased estimate of the true gradient.

\begin{definition}[Unbiased Gradient]
    \label{def:unbiasedgrad}
    Let $\{\nu_\theta\}_{\theta\in I}$ be a parametric family of distributions, where $I\subset\erre$ is an interval.
    We say that $\mathcal{D}:\mathcal{P}(\erre)\times\mathcal{P}(\erre)\to[0,\infty)$ has unbiased gradient if it satisfies the following property
    \begin{equation}
    \label{eq:def:unbiasedgrad}
        \mathbb{E}_{\bX^{(1)},\dots,\bX^{(N)}}\Big[\nabla_{\theta}\mathcal{D}(\mu_N,\nu_\theta)\Big]=\nabla_{\theta}\mathcal{D}(\mu,\nu_\theta),
    \end{equation}
    where $\mu$ is a probability distribution, $\mu_N$ is the probability distribution associated with $\frac{1}{N}\sum_{i=1}^N\delta_{\bX^{(i)}}$ where $\bX^{(1)},\dots,\bX^{(N)}$ are $N$ independent samples of $\mu$.
\end{definition}

%
%


\section{Our Divergences of Interest}
\label{sec:metrics}

In this section, we introduce the Wasserstein Distances, the Fourier-based Metrics, and the Energy Distances.
We structure our discussion as follows.
We first introduce all three divergences and detail under which conditions they are well-defined.
Then, we study their equivalence and compare them in terms of the desired properties showcased in Section \ref{sec:desired_properties} in a dedicated section.

\subsection{The Wasserstein Distances}
 
%
%
Albeit the Wasserstein Distance has been initially introduced in relation with transportation problems, it has become a prominent tool in several applied mathematical fields, ranging from economics to kinetic theory.
%
%

%
The first usage of the Wasserstein Distance\footnote{Nowadays, the metric is named after the Russian mathematician Vasershtein (Wasserstein), who introduced it independently in a different field.\cite{Wass}} within kinetic theory was by Tanaka, who introduced it to study the equilibrium convergence for the Kac equation.\cite{Tan1}
More precisely, given two random vectors $\bX$ and $\bY$, Tanaka defined 
\[
    W^2_2(\mu,\nu) = \inf E[|X-Y|^2],
\]
where the infimum is taken over all pairwise couplings $(\bX,\bY)$ of one-dimensional random variables $X\sim \mu$ and $Y\sim\nu$.
Notice that $W^2_2(\mu,\nu)$ is well defined as long as the two probability distributions $\mu$ and $\nu$ have finite second moments.
Having in mind the application to kinetic theory, in Tanaka's work $\nu$ is a limiting Gaussian distribution with zero mean and variance equal to the variance of $\mu$ (also known as the \emph{best matching} Gaussian of $\nu$).
Hence, as for the relative entropy, Tanaka's functional  $W^2_2(\mu,\nu)$ measures how different a given distribution $\mu$ is from the stationary solution $\nu$, in our case a one-dimensional Gaussian distribution.

The properties of the Wasserstein Distance when applied to a pair of generic random vectors $(\bX,\bY)$
\begin{equation}
    \label{Wasser}
    W^2_2(\bX,\bY) = W^2_2(\mu,\nu) = \inf E[|\bX-\bY|^2]
\end{equation}
for any dimension $n\ge 1$, were then studied by Tanaka in an extension of his initial work\cite{Tan2} in which the author also investigated the connection between convergence to equilibrium for the Boltzmann equation with Maxwell molecules and the central limit problem in probability theory.\cite{RR}
%

More generally, a Wasserstein Distance of order $p$ was then defined as  follows.
\begin{definition}
    Given two probability measures $\mu,\nu \in \mathcal P_p(\R^n)$, the $p$-th Wasserstein Distance between $\mu$ and $\nu$ is defined as
    \begin{equation}
    \label{Was-p}
    W_p(\mu, \nu) = \inf_{\pi \in \Pi(\mu,\nu)}\left\{ \int_{\R^n\times\R^n} |\bx-\by|^p \, d\pi(\bx,\by)\right\}^{1/p}
    \end{equation}
    where $\pi$ runs over the set of transportation plans between $\mu$ and $\nu$, namely $\Pi(\mu,\nu)$, that is, the set of probability measures on $\R^n\times\R^n$ whose marginals are $\mu$ and $\nu$
    \[
    \int_{\R^n\times\R^n}\!\!\!\! \varphi(\bx) d\pi(\bx,\by) = \int_{\R^n} \varphi(\bx) d\mu(\bx)\quad\text{and}\quad \int_{\R^n\times\R^n}\!\!\!\! \varphi(\by) d\pi(\bx,\by) = \int_{\R^n} \varphi(\by) d\nu(\by)
    \]
    for all $\varphi \in C_b(\R^n)$, the set of continuous and bounded functions on $\R^n$.
\end{definition}
Among the family of Wasserstein distances, we consider the $W_1$ distance, also known as the Kantorovich-Rubinstein distance, or the dual-Lipschitz norm. 
Indeed, owing to the Fenchel-Rockafellar duality principle,\cite{Vil03} $W_1$ can be rewritten as follows:
\begin{equation}
    \label{W1}
    W_1(\mu,\nu) = \sup \left\{\int_{\R^n} \varphi(\bx) d(\mu(\bx) - \nu(\bx)) \, : \varphi \in \mathrm{Lip}(\R^n), \|\varphi\|_{\mathrm{Lip}(\R^n)} \le 1 \right\}.   
\end{equation}
As we show in our study, the formulation \eqref{W1} is useful to study the relationships between the Wasserstein distance and other divergences.
For further details on the properties and applications of these distances, we refer to Refs. \cite{CaTo} and \cite{Vil03}.

\subsection{The Fourier-based Metrics}

The Fourier-based metrics are a family of divergences that acts on the space of phases of a probability distribution.
Given a probability measure $\mu$, we denote with $\hat\mu$ its Fourier transform, that is
\begin{equation}
    \label{eq:muFtransform}
    \hat\mu(\bxi)=\int_{\erre^n}e^{-i\bx\cdot\bxi}d\mu(\bx).
\end{equation}
The first Fourier-based Metric was introduced in \cite{GTW}, where the authors adopted the metric:
\[
\mathbb{F}_s(\mu,\nu)=\sup_{\bxi\in\erre^n}\frac{|\hat\mu(\bxi)-\hat\nu(\bxi))|}{|\bxi|^s}
\]
to study the tendency towards equilibrium for solutions of the spatially homogeneous Boltzmann equation for Maxwell molecules, allowing to prove that the solution to the Kac equation converges to its limit state exponentially.
%
%
 
%
In a follow up work, these metrics were generalized, allowing to characterize $\mathbb{F}_s$ as the element corresponding to $p = \infty$ in a $L^p$-family of metrics.
We define the case $p=2$ as follows.

\begin{definition}
    For a given pair of random vectors $\bX$ and $\bY$ distributed according to $\mu$ and $\nu$, respectively, the $s$-Fourier-based Metric is defined as
    \begin{equation}
        \label{me-inf}
        \mathcal{F}_s(\bX,\bY) = \mathcal{F}_s(\mu,\nu) = \bigg[\int_{\erre^n} \frac{|\hat\mu(\bxi) - {\hat \nu}(\bxi)|^2}{|\bxi|^{s}}d\bxi\bigg]^{\frac{1}{2}},
    \end{equation}
    where $\hat\mu$ and $\hat\nu$ are the Fourier transform of $\mu$ and $\nu$, respectively.
\end{definition}

Notice that, differently from Wasserstein Distances, Fourier-based Metrics are not well-defined for any $s$ and any coupling of probability measures $\mu$ and $\nu$.
Indeed, given two probability distributions $\mu$ and $\nu$, let $l\in\enne$ be the smallest value for which 
\begin{equation}
    \label{eq:equal_l_moments}
     \int_{\erre^n}|\bx|^ld\mu \neq \int_{\erre^n}|\bx|^ld\nu.    
\end{equation}
We then have $\mathcal{F}_{s}^2(\mu,\nu)<+\infty$ if and only if the function 
\begin{equation}
    \label{eq:integral_function}
    \frac{|\hat\mu(\bxi)-\hat\nu(\bxi)|^2}{|\bxi|^s}    
\end{equation}
is integrable around $\bxi=0$.
If we express \eqref{eq:integral_function} in polar coordinates, we have that
\begin{align*}
    \int_{\erre^n}\frac{|\hat\mu(\bxi)-\hat\nu(\bxi)|^2}{|\bxi|^s}d\bxi&=\int_{S^{n-1}}\int_{0}^{+\infty}\frac{|\hat\mu(\rho,\theta)-\hat\nu(\rho,\theta)|^2}{\rho^s}\rho^{n-1}d\rho d\theta.
\end{align*}
Owing to \eqref{eq:equal_l_moments}, we have that $|\hat\mu(\rho,\theta)-\hat\nu(\rho,\theta)|^2\sim O(|\rho|^{2l})$.
Hence $\mathcal{F}^2_{s}(\mu,\nu)<+\infty$ if and only if
\[
    2l-s+n-1 >- 1 \quad\quad \Leftrightarrow \quad\quad l>\frac{s-n}{2}.
\]
therefore, in order to ensure that $\mathcal{F}_s(\mu,\nu)$ is finite, the probability distributions $\mu$ and $\nu$ must either
\begin{enumerate*}[label=(\roman*)]
    \item have equal moments up to $\Big[\frac{s-n}{2}\Big]$, where $\Big[\,.\,\Big]$ denotes the integer part function, if $\frac{s-n}{2} \not \in \N$; or
    \item have equal moments up to $\frac{s-n}{2}-1$ if $\frac{s-n}{2} \in \N$, where $n$ is the dimension of the space on which $\mu$ and $\nu$ are supported.  
\end{enumerate*}

%

%



\subsection{The Energy Distance}

%
The notion of Energy Distance was introduced by Sz\'ekely around the mid-1980s, in several lectures he gave in Budapest (Hungary), in the Soviet Union, and --in  the US-- at the M.I.T., Yale, and Columbia University.\cite{Sze89,Sze03}
The idea behind the Energy Distance was to define a metric able to capture the structure of Newton's potential energy, that is:
\begin{equation}
    \label{eq:newtonspotential}
    N(\bx) = \int_{\R^3} \frac {1}{|\bx-\by|}d\mu(\by),
\end{equation}
which measures the energy necessary to move one unit mass from a location $\bx$ to infinity in a gravitational space with mass distribution $\mu$.
%
%
%
In their seminar work, Sz\'ekely and Rizzo considered the following family of distances
\begin{align}
    \label{eq:orig_endist}
    \mathcal{E}_\alpha(\mu,\nu) = &\;\;2\int_{\R^{2n}}|\bx-\by|^{\alpha} d\mu(\bx) d\nu(\by) - \int_{\R^{2n}}|\bx-\by|^{\alpha} d\mu(\bx) d\mu(\by)\\
        \nonumber&\;\; - \int_{\R^{2n}}|\bx-\by|^{\alpha} d\nu(\bx) d\nu(\by).
\end{align}
for any $\alpha\in(0,2)$ and proved that, for a suitable positive constant $C_\alpha$, it holds
\begin{equation}
    \label{eq:energy_fourier_generalintro}
    \mathcal{E}_\alpha(\mu,\nu) = C_\alpha \int_{\erre^n}\frac{|\hat\mu(\bxi)-\hat\nu(\bxi)|^2}{|\bxi|^{n+\alpha}}d\bxi = C_\alpha\mathcal{F}_{n+\alpha}^2(\mu,\nu).
\end{equation}
Ever since its introduction, the Energy Distance has been considered only for $\alpha\in(0,2)$.
Indeed, it is easy to see that if $\alpha=0$, $\mathcal{E}_0$ is always equal to $0$, whereas, when $\alpha=2$, the Energy Distance boils down to the Euclidean distance between the two mean vectors of $\mu,\nu$, which is not a metric.
Noticeably, the case $\alpha=-1$ has never been considered, although it is the rationale behind the initial definition.
In what follows, we show that, under suitable assumptions on the distributions $\mu$ and $\nu$, it is possible to extend $\mathcal{E}_\alpha$ to any $\alpha > -n$ with $\alpha\neq 2t$ for $t\in\enne$ while preserving the relation between $\mathcal{E}_\alpha$ and $\mathcal{F}_{n+\alpha}^2$ highlighted in \ref{eq:energy_fourier_generalintro}.
We start by introducing our extended definition of the Energy Distances.

\begin{definition}
\label{def:energy_dist_gen}
    Given a constant $\alpha\in\erre$, let $\bX\sim\mu$ and $\bY\sim\nu$ be two $n$-dimensional random variables, such that $\mu$ and $\nu$ have equal moments up to the $l$-th one, with $l\in\enne$, $l\ge \floor{\frac{\alpha}{2}}$.
    Then, we define the Energy Distance of order $\alpha$ between of $\mu$ and $\nu$ as follows.
    If $\alpha>0$, then
    \begin{equations}
    \label{en-dist}
        (-1)^{k}\Big(\mathcal{E}_\alpha(\mu,\nu)\Big)^2 = &\;\;2\int_{\R^{2n}}|\bx-\by|^{\alpha} d\mu(\bx) d\nu(\by) - \int_{\R^{2n}}|\bx-\by|^{\alpha} d\mu(\bx) d\mu(\by)\\
        &\;\; - \int_{\R^{2n}}|\bx-\by|^{\alpha} d\nu(\bx) d\nu(\by),
    \end{equations}
    where $k=\floor{\frac{\alpha}{2}}$.
    If $\alpha\in(-n,0)$, then
    \begin{equations}
    \label{en-dist-2}
        -\Big(\mathcal{E}_\alpha(\mu,\nu)\Big)^2 = &\;\;2\int_{\R^{2n}}|\bx-\by|^{\alpha} d\mu(\bx) d\nu(\by) - \int_{\R^{2n}}|\bx-\by|^{\alpha} d\mu(\bx) d\mu(\by)\\
        &\;\; - \int_{\R^{2n}}|\bx-\by|^{\alpha} d\nu(\bx) d\nu(\by).
    \end{equations}
\end{definition}
Before we proceed in our discussion, we need to stress three key points of our definition.
\begin{enumerate*}[label=(\roman*)]
\item First, in our case, the Energy Distance is defined as the square of a quantity. This choice is due to the fact that, in \eqref{eq:energy_fourier_generalintro} the right hand side of the equation is the square of a metric.
\item Second, we have that our extension has a sign term $(-1)^k$. This term naturally arises when we try to extend the identity  in \eqref{eq:energy_fourier_generalintro} to the case in which $\alpha>2$ or $-n<\alpha<0$.
\item Lastly, as $\alpha$ grows larger, we need $\mu$ and $\nu$ to have a larger amount of equal moments. Again, this restriction is due to the fact that the Fourier-based Metrics are not well-defined for any couple of probability measures.
\end{enumerate*}
%

%

%
%

\begin{remark}
\label{rmrk:gini_energy_onedim}
    When $n=1$, \textit{i.e.} the random vectors are random variables, the Energy Distance of order $\alpha=1$ coincides with another well-known metric, the Cramér Distance.
    Given two random variables $X\sim\mu$ and $Y\sim\nu$, the Cramér Distance is defined as
    \begin{equation}
        \label{eq:Cramér_def}
        Cr(\mu,\nu)=\sqrt{\int_{\erre}(F_\mu(x)-F_\nu(x))^2dx},
    \end{equation}
    where $F_\mu$ and $F_\nu$ are the cumulative distribution function of $\mu$ and $\nu$, respectively.
    Indeed, the Cramér Distance can equivalently be written as
    \begin{align*}
        Cr(\mu,\nu)=2\mathbb{E}_{X\sim \mu, Y\sim \nu}[|X-Y|]-\mathbb{E}_{X,X'\sim \mu}[|X-X'|]-\mathbb{E}_{Y,Y'\sim \nu}[|Y-Y'|],
    \end{align*}
    where $X,X'$ are two independent random variables distributed according to $\mu$ and $Y,Y'$ are two independent random variables distributed according to $\nu$.
    Within this context, the Energy Distance is the natural way to extend the Cramér Distance to higher dimensional settings.\cite{Sze89,Sze03}
    Lastly, we notice that, on one dimensional probability distributions, the Energy Distance associated with $\alpha=1$, that is $\mathcal{E}_1$, is connected to the Gini Index.\cite{Gini1,Gini2}
    Indeed, given a random variable $\bX\sim\mu$ whose mean is non-null, the Gini Index associated with $\mu$ is defined as
    \begin{equation}
        \label{eq:def_gini}
        \mathcal{G}(\mu)=\frac{1}{2m_\mu}\int_{\erre\times\erre}|x-x'|d\mu(x)d\mu(x'),
    \end{equation}
    where $m_\mu$ is the non-null mean of $\mu$.
    Given $X\sim\mu$ and $Y\sim\nu$ two random variables with non null means, we then have that
    \begin{equation}
        \label{eq:energy:to_gini}
        \mathcal{E}_1(\mu,\nu)=2\GMD(\mu,\nu)-2m_\mu\mathcal{G}(\mu)-2m_\nu\mathcal{G}(\nu),
    \end{equation}
    where \begin{enumerate*}
        \item $m_\mu$ and $m_\nu$ are the means of $\mu$ and $\nu$, respectively; and
        \item $\GMD(\mu,\nu)=\int_{\erre\times\erre}|x-y|d\mu(x)d\nu(y)$ is the Gini Mean Difference. \footnote{$\GMD$ is also known as the Mean Absolute Difference.}
    \end{enumerate*}  
\end{remark}

\section{Comparing Divergences}

In this section, we establish several connections between the divergences introduced in Section \ref{sec:metrics}.  
In particular, we show that all these divergences are equivalent to each other.
Notice that some of the results we present are known in the literature.
However, we will either prove them through simpler arguments, or show how to extend the proof to a more general framework.
In reason of its importance in the field of machine learning, we will first consider the Energy Distance, which can be expressed both in the usual space, and in the Fourier space, as proven in Ref. \cite{SR}.

\subsection{The Energy Distance and the Fourier-based Metrics}
\label{sec:fourierandenergy}

In this section, we study the connections between the Energy Distances and the Fourier-based Metrics.
First of all, we notice that, if $\alpha\in(0,2)$, the definition in \eqref{en-dist} coincides with the original one given in \eqref{eq:orig_endist} up to a square factor.
We start our discussion on the Energy Distance by showing relation \eqref{eq:energy_fourier_generalintro} through a simpler argument.
We then show that the same argument is adaptable to handle the cases in which the power in \fer{en-dist} is negative and the case in which $\alpha>2$.

\begin{theorem}
\label{thm:positive_energy} 
Let $0<\alpha<2$. Then, the Energy Distance $\mathcal{E}_\alpha$ defined as in \fer{en-dist} is equal to the Fourier-based Metric \fer{me-inf} $\mathcal{F}_{n+\alpha}$, up to a multiplicative constant.
\end{theorem}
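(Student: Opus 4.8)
The plan is to establish the identity
\[
\mathcal{E}_\alpha(\mu,\nu)^2 = C_\alpha \int_{\erre^n}\frac{|\hat\mu(\bxi)-\hat\nu(\bxi)|^2}{|\bxi|^{n+\alpha}}\,d\bxi
\]
by reducing everything to the single classical Fourier identity
\[
|\bx-\by|^\alpha = c_{n,\alpha}\int_{\erre^n}\frac{1 - \cos\big(\bxi\cdot(\bx-\by)\big)}{|\bxi|^{n+\alpha}}\,d\bxi, \qquad 0<\alpha<2,
\]
which holds with an explicit positive constant $c_{n,\alpha}$ (this is the Fourier representation of the Riesz kernel / the kernel of the fractional Laplacian; it is exactly where the restriction $0<\alpha<2$ is used, since the integral on the right converges near $0$ precisely because $1-\cos$ vanishes to second order and diverges at $\infty$ precisely because $\alpha<2$). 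First I would record this identity and note the elementary rewriting $1-\cos(\bxi\cdot(\bx-\by)) = \tfrac12|e^{-i\bxi\cdot\bx} - e^{-i\bxi\cdot\by}|^2$, which is what makes the squared-modulus structure on the right-hand side of \fer{eq:energy_fourier_generalintro} appear.

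Next I would substitute this representation into each of the three double integrals defining $\mathcal{E}_\alpha^2$ in \fer{en-dist}. For the cross term, Fubini gives
\[
2\int_{\R^{2n}}|\bx-\by|^\alpha d\mu(\bx)d\nu(\by) = c_{n,\alpha}\int_{\erre^n}\frac{2 - 2\,\mathrm{Re}\big(\hat\mu(\bxi)\overline{\hat\nu(\bxi)}\big)}{|\bxi|^{n+\alpha}}\,d\bxi,
\]
using $\int\int e^{-i\bxi\cdot(\bx-\by)}d\mu(\bx)d\nu(\by) = \hat\mu(\bxi)\overline{\hat\nu(\bxi)}$; note the constant-$1$ pieces integrate against $d\mu\,d\nu$ to give $1$. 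Similarly the $\mu$-$\mu$ term contributes $c_{n,\alpha}\int (2-2|\hat\mu(\bxi)|^2)/|\bxi|^{n+\alpha}$ and the $\nu$-$\nu$ term $c_{n,\alpha}\int (2-2|\hat\nu(\bxi)|^2)/|\bxi|^{n+\alpha}$. Subtracting, the divergent constant-$1$ contributions cancel exactly (this cancellation is why the Energy Distance, unlike its individual pieces, is finite), and one is left with
\[
\mathcal{E}_\alpha(\mu,\nu)^2 = c_{n,\alpha}\int_{\erre^n}\frac{-2\,\mathrm{Re}(\hat\mu\overline{\hat\nu}) + |\hat\mu|^2 + |\hat\nu|^2}{|\bxi|^{n+\alpha}}\,d\bxi = c_{n,\alpha}\int_{\erre^n}\frac{|\hat\mu(\bxi)-\hat\nu(\bxi)|^2}{|\bxi|^{n+\alpha}}\,d\bxi,
\]
which is precisely $c_{n,\alpha}\,\mathcal{F}_{n+\alpha}^2(\mu,\nu)$, giving $C_\alpha = c_{n,\alpha}$ and proving the theorem (with $k=\floor{\alpha/2}=0$ so the sign factor $(-1)^k$ is trivial here).

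The main obstacle — and the step requiring genuine care rather than bookkeeping — is justifying the interchange of the $\bxi$-integral with the $\bx,\by$-integrals when applied to each double integral \emph{separately}, since individually those iterated integrals are \emph{not} absolutely convergent (the $1$-pieces blow up near $\bxi=0$). The clean way around this is to never split: introduce the function $\psi(\bxi) := |\hat\mu(\bxi)-\hat\nu(\bxi)|^2$, which under the hypothesis that $\mu,\nu$ have enough equal moments vanishes at the right order at the origin (as computed in the discussion preceding \fer{eq:integral_function}), so that $\psi(\bxi)/|\bxi|^{n+\alpha}$ is integrable; then apply Fubini to the \emph{combined} kernel $2 - 2\cos(\bxi\cdot(\bx-\by))$ paired against the signed measure $d\mu\,d\nu$ built so as to reproduce the three terms at once, or equivalently regularize by integrating over $|\bxi|>\delta$ and pass to the limit $\delta\to0^+$ using dominated convergence on the already-cancelled integrand. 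Either route makes the formal computation above rigorous; I would present the $\delta$-truncation version as it keeps each step manifestly finite.
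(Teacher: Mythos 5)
Your proposal is correct, but it takes a genuinely different route from the paper. You start from the L\'evy--Khintchine representation $|\bz|^\alpha = c_{n,\alpha}\int_{\erre^n}(1-\cos(\bxi\cdot\bz))|\bxi|^{-n-\alpha}\,d\bxi$, substitute it into the three double integrals, and let the divergent constant pieces cancel, with rigor supplied by a $\delta$-truncation of the $\bxi$-integral (which works cleanly by monotone convergence, since the kernel is nonnegative). The paper instead writes $|\bx-\by|^{\alpha}=|\bx-\by|^{2}\,W_{\lambda}(\bx-\by)$ with $\lambda=2-\alpha$, expands $|\bx-\by|^2$ coordinatewise, and uses the Fourier transform of the Riesz kernel $W_\lambda$ together with Parseval's formula and two integrations by parts to land the Laplacian of $\widehat W_\lambda$ on $|\hat\mu-\hat\nu|^2$. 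Your argument is more elementary and slightly more general for the range $0<\alpha<2$: it needs neither the bounded-second-moment hypothesis nor the restriction $n>1$ that the paper's proof invokes, and the cancellation of the non-integrable singularity at $\bxi=0$ is completely transparent. What the paper's heavier machinery buys is extensibility: the same multiply-and-divide-by-$|\bx-\by|^{2}$ device is exactly what powers the subsequent extensions to $-n<\alpha<0$ and $\alpha>2$, where the $1-\cos$ representation is no longer available. One bookkeeping slip to fix: the $\mu$--$\mu$ and $\nu$--$\nu$ terms each contribute $c_{n,\alpha}\int(1-|\hat\mu(\bxi)|^{2})|\bxi|^{-n-\alpha}\,d\bxi$ (respectively with $\hat\nu$), not $2-2|\hat\mu|^{2}$; as written your intermediate displays would leave a residual constant $-2$ after subtraction, whereas your final identity is the correct one obtained from the coefficients $2,1,1$.
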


\begin{proof}
    Let $\bX$ and $\bY$ two $n$-dimensional random vectors, $n>1$ distributed according to $\nu$ and $\mu$, respectively.
    Given a constant $0<\alpha<2$, we set $ \lambda = 2-\alpha$, so that $0 < \lambda <2$.  
    Then, the Energy Distance of order $\alpha$, defined by \fer{en-dist}, between $\mu$ and $\nu$ can be rewritten as
    \begin{equations}
        \label{e-dist}
        \mathcal{E}^2_\alpha (\nu,\mu) &= 2\int_{\R^{2n}}|\bx-\by|^{2-\lambda}\, d\nu(\bx) d\mu(\by) +\\
        &\quad - \int_{\R^{2n}}|\bx-\by|^{2-\lambda}\, d\nu(\bx) d\nu(\by) - \int_{\R^{2n}}|\bx-\by|^{2-\lambda}\, d\mu(\bx) d\mu(\by)\\
        &=\int_{\R^{2n}}|\bx-\by|^{2-\lambda}\, d[\nu(\bx)-\mu(\bx)] d[\mu(\by)-\nu(\by)].
    \end{equations}
    
    If $\bX$ and $\bY$ have bounded second moments, we can equivalently write the Energy Distance as
    \begin{equations}
        \label{e-dis2}
        \mathcal{E}^2_\alpha(\nu,\mu) &= \sum_{k=1}^n \int_{\R^{2n}}W_{\lambda}(\bx-\by) \Big( x_k^2 d[\nu(\bx)-\mu(\bx)] d[\mu(\by)-\nu(\by)]\\
        &\quad+d[\nu(\bx)-\mu(\bx)] y_k^2 d[\mu(\by)-\nu(\by)]-2 x_k d[\nu(\bx)-\mu(\bx)] y_k d[\mu(\by)-\nu(\by)]\Big),
    \end{equations}
    where, for $\lambda>0$, we denoted
    \begin{equation}
        \label{wlambda}
        W_\lambda(\bx) = |\bx|^{-\lambda}.
    \end{equation}
    As shown in Ref. \cite{Stein}, Theorem 4.1, if $\lambda <n$ the Fourier transform of $W_\lambda$ equals
    \begin{equation}
        \label{fW}
        \widehat W_\lambda(\bxi) = \int_{\R^n} |\bx|^{-\lambda} e^{-i\bxi\cdot\bx} \, d\bx = \pi^{n/2}2^{n-\lambda} \frac{\Gamma\left(\frac{n-\lambda}2\right)}{\Gamma\left(\frac{\lambda}2\right)}\, \frac 1{|\bxi|^{n-\lambda}},
    \end{equation}
    where $\Gamma(\cdot)$ is the gamma function. Therefore, for $0<\lambda <n$, a simple computation gives
    \begin{equation}
        \label{Lapla}
        \Delta \widehat W_{\lambda}(\bxi) = (2-\lambda)\pi^{n/2}2^{n+1-\lambda} \frac{\Gamma\left(\frac{n-\lambda}2 +1\right)}{\Gamma\left(\frac{\lambda}2\right)}\, \frac 1{|\bxi|^{n-\lambda+2}}.
    \end{equation}
    Applying Parseval formula, we can easily find the Fourier transform of the right-hand side of \fer{e-dis2}.
    Indeed, we have that
    \[
        \int_{\R^n} x_k e^{-i\bxi\cdot\bx} \, d[\nu(\bx)-\mu(\bx)] = i \frac{\partial}{\partial \xi_k}[\hat\nu(\bxi)-\hat\mu(\bxi)],
    \]
    while
    \[
    \int_{\R^n} x_k^2 e^{-i\bxi\cdot\bx} \, d[\nu(\bx)-\mu(\bx)] = - \frac{\partial^2}{\partial \xi_k^2}[\hat\nu(\bxi)-\hat\mu(\bxi)].
    \]

    Likewise, we have that
    \[
    \int_{\R^n} y_k e^{i\bxi\cdot\by} \, d[\nu(\by)-\mu(\by)] = -i \frac{\partial}{\partial \xi_k}[\overline{\hat\nu(\bxi)-\hat\mu(\bxi)}]
    \]
    and
    \[
    \int_{\R^n} y_k^2 e^{i\bxi\cdot\by} \, d[\nu(\by)-\mu(\by)] = - \frac{\partial^2}{\partial \xi_k^2}[\overline{\hat\nu(\bxi)-\hat\mu(\bxi)}],
    \]
    where $\bar c$ denotes the conjugate of the complex number $c$. 
    To this point, we have simply to remark that in \fer{e-dis2} the term
    \[
    B_k(\by) = \int_{\R^n}W_{\lambda}(\bx-\by) x_k^2 d[\nu(\bx)-\mu(\bx)] \, d\bx 
    \]
    is the convolution of the two functions $W_{\lambda}(\bx)$ and $x_k^2[\nu(\bx)-\mu(\bx)]$, so that its Fourier transform is 
    \[
    \widehat B_k(\bxi) = -\widehat W_\lambda(\bxi)\frac{\partial^2}{\partial \xi_k^2}[\hat\nu(\bxi)-\hat\mu(\bxi)].
    \]
    Likewise, we handle the other terms in \fer{e-dis2}. 
    Therefore, by Parseval formula we get the expression in Fourier space of the Energy Distance, that reads
    \[
    \mathcal{E}^2_\alpha(\mu,\nu) = \frac 1{(2\pi)^n} \sum_{k=1}^n \int_{\R^n}\widehat W_\lambda(\bxi) \frac{\partial^2}{\partial \xi_k^2}\left[ (\hat\nu(\bxi)-\hat\mu(\bxi))(\overline{\hat\nu(\bxi)-\hat\mu(\bxi)})\right]\, d\bxi.
    \]
    Integrating twice by parts, we get
    \[
    \mathcal{E}^2_\alpha(\mu,\nu)= \frac 1{(2\pi)^n} \sum_{k=1}^n \int_{\R^n} \left|\hat\nu(\bxi)-\hat\mu(\bxi)\right|^2\frac{\partial^2}{\partial \xi_k^2} \widehat W_\lambda(\bxi)\, d\bxi,
    \]
    so that, taking into account \fer{Lapla}, and owing to the identity $\alpha=2-\lambda$, we conclude the thesis.
\end{proof}

\subsubsection{Negative Power Energy Distances}

The argument used in Theorem \ref{thm:positive_energy} to relate the Energy Distances with the Fourier-based Metrics can be extended to Energy Distances with negative indexes.
Indeed, let us consider $\mathcal{E}_{-\alpha}$ with $0 < \alpha < n$, then we have
%
%
\begin{align}
    \nonumber\label{eq:negative_power}
    \mathcal{E}^2_{-\alpha}(\mu,\nu)&= \int_{\R^{2n}}|\bv-\bw|^{-\alpha}\, f(\bv) f(\bw)\, d\bv d\bw  + \int_{\R^{2n}}|\bv-\bw|^{-\alpha}\, g(\bv) g(\bw)\, d\bv d\bw \\
&\quad-2\int_{\R^{2n}}|\bv-\bw|^{-\alpha}\, f(\bv) g(\bw)\, d\bv d\bw,
\end{align}
where $f$ and $g$ are the densities of $\mu$ and $\nu$, respectively.
%

%

%
Notice that when we consider the Energy Distance induced by a negative index, we need to swap the sign in the definition in \eqref{en-dist}. 
Moreover, owing to the Hardy-Littlewood-Sobolev inequality,\cite{Lieb83} the negative Energy Distance is well-defined if both $\mu$ and $\nu$ belong to $\PP_p(\erre^n)$, with $p = 2n/(2n-\alpha)$.

%
%

\begin{theorem}
    %
    Let $\mu$ and $\nu$ belong to $\PP_p(\erre^n)$,with $p = 2n/(2n-\alpha)$. Then the negative Energy Distance $\mathcal{E}_{-\alpha}$ is a well defined divergence for every $0<\alpha<n$.
    Moreover, we have that 
    \begin{equation}
        \label{eq:fourier_negative_energy}
        \mathcal{E}^2_{-\alpha}(\mu,\nu)=\pi^{n/2}2^{n-\alpha} \frac{\Gamma\left(\frac{n-\alpha}2\right)}{\Gamma\left(\frac{\alpha}2\right)}\int_{\erre^n}\frac{|\hat f(\bxi)- \hat g(\bxi)|^2}{|\bxi|^{n-\alpha}} d\bxi.
    \end{equation}
\end{theorem}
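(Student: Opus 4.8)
The plan is to mimic the argument of Theorem~\ref{thm:positive_energy}, but in a simpler incarnation: since the power $-\alpha$ with $0<\alpha<n$ is already negative, there is no need to first "lift" the kernel to $|\bx-\by|^{2-\lambda}$ by inserting the squared-modulus terms $x_k^2, y_k^2, x_k y_k$. Instead, I would work directly with the kernel $|\bx-\by|^{-\alpha}$ and its Fourier transform, which is supplied by \eqref{fW} (Stein, Theorem~4.1) with $\lambda$ replaced by $\alpha$: for $0<\alpha<n$,
\[
    \widehat{W_\alpha}(\bxi) = \int_{\R^n}|\bx|^{-\alpha}e^{-i\bxi\cdot\bx}\,d\bx = \pi^{n/2}2^{n-\alpha}\frac{\Gamma\!\left(\frac{n-\alpha}{2}\right)}{\Gamma\!\left(\frac{\alpha}{2}\right)}\frac{1}{|\bxi|^{n-\alpha}}.
\]

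First I would establish well-definedness. By the Hardy--Littlewood--Sobolev inequality (cited as Ref.~\cite{Lieb83}), the bilinear form $(f,g)\mapsto\iint|\bv-\bw|^{-\alpha}f(\bv)g(\bw)\,d\bv\,d\bw$ is finite and continuous on $\Lp{p}(\erre^n)\times\Lp{p}(\erre^n)$ exactly when $2/p' + \alpha/n = 2$, i.e. $p = 2n/(2n-\alpha)$; hence each of the three integrals in \eqref{eq:negative_power} is finite under the stated moment/integrability hypothesis on $\mu,\nu$. This justifies all the manipulations that follow and shows $\mathcal{E}_{-\alpha}^2(\mu,\nu)$ is a well-defined nonnegative number (nonnegativity will also follow from the Fourier identity once established, since the right-hand side of \eqref{eq:fourier_negative_energy} is manifestly $\geq 0$ and the prefactor is positive for $0<\alpha<n$).

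Next, for the Fourier identity, I would write \eqref{eq:negative_power} as
\[
    \mathcal{E}_{-\alpha}^2(\mu,\nu) = \int_{\R^{2n}} W_\alpha(\bv-\bw)\, d[f-g](\bv)\, d[f-g](\bw),
\]
then recognize the inner integral in $\bv$ as the convolution $\bigl(W_\alpha * (f-g)\bigr)(\bw)$ and apply Parseval/Plancherel: the value equals $(2\pi)^{-n}\int_{\R^n}\widehat{W_\alpha}(\bxi)\,\bigl|\widehat{f-g}(\bxi)\bigr|^2\,d\bxi$ (using that $\widehat{W_\alpha}$ is real and even, and that $\widehat{f-g} = \hat f - \hat g$). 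Substituting the explicit $\widehat{W_\alpha}$ from \eqref{fW} yields \eqref{eq:fourier_negative_energy}; I would note that the stray $(2\pi)^{-n}$ is absorbed by the convention for the Fourier transform used in \eqref{eq:muFtransform}, or else track it through the constant — a routine bookkeeping point.

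The main obstacle is purely \emph{technical justification of the Fourier-space passage}: $W_\alpha = |\bx|^{-\alpha}$ is not integrable, so the identity \eqref{fW} holds only in the tempered-distribution sense, and $f-g$ is merely an $\Lp{p}$ difference of probability densities, not Schwartz. The clean way around this is to note that $\widehat{f-g}$ is bounded and continuous (difference of characteristic functions), that $f-g$ has zero integral so $\widehat{f-g}(0)=0$, and — crucially — to invoke the HLS-type estimate to write $\mathcal{E}_{-\alpha}^2$ as a genuinely convergent integral; then a density/approximation argument (mollify $f,g$, or truncate $W_\alpha$, and pass to the limit using HLS continuity on both sides) transfers the identity, valid for Schwartz data, to the general case. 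I expect the referee-level care to lie entirely here; the "computation" itself is a one-line Plancherel.
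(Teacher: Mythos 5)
Your proposal follows essentially the same route as the paper's proof: write $\mathcal{E}^2_{-\alpha}$ as a double integral against $d[f-g]\otimes d[f-g]$, recognize the inner integral as a convolution with the Riesz kernel $|\bx|^{-\alpha}$, apply Parseval, and substitute the explicit Fourier transform \fer{fW}, with well-definedness coming from Hardy--Littlewood--Sobolev exactly as the paper notes just before the theorem. The only difference is that you spell out the approximation/density argument needed to justify the distributional Fourier-space passage, which the paper leaves implicit; this is a welcome addition but not a different proof.
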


\begin{proof}
    For the sake of exposition, assume that $\mu$ and $\nu$ are induced by a density function, namely $f$ and $g$, respectively.
    Following the same argument used in the proof of Theorem \ref{thm:positive_energy}, we have
    \[
        \mathcal{E}^2_{-\alpha}(\mu,\nu)=\int_{\erre^n}\int_{\erre^n}|\bx-\by|^{-\alpha}(f(\bx)-g(\bx))(f(\by)-g(\by))d\bx d\by.
    \]
    If we set
    \[
    B(\by)=\int_{\erre^n}|\bx-\by|^{-\alpha}(f(\bx)-g(\bx))d\bx
    \]
    we infer, by Parseval's formula that
    \begin{align}
        \mathcal{E}^2_{-\alpha}(\mu,\nu)=\frac 1{(2\pi)^n}\int_{\erre^n}\hat B(\bxi)(\overline{\hat f(\bxi)- \hat g(\bxi)}) d\bxi
    \end{align}
    where $\hat B$, $\hat f$, and $\hat g$ are the Fourier transform of $B$, $f$, and $g$, respectively.
    Since by definition $B(\by)$ is a convolution between $|\bx|^{-\alpha}$ and $f(\bx)-g(\bx)$, using identity \eqref{fW} we infer that \fer{eq:fourier_negative_energy} holds true.

    Lastly, we notice that the metric properties of $\mathcal{E}_{-\alpha}$ follow from identity \eqref{eq:fourier_negative_energy}, since the right hand side is a Fourier-based Metric.
\end{proof}

\subsubsection{Higher Order Energy Distances}
We now consider the Energy Distances defined by indexes $\alpha$ that are larger than $2$.
As for negative indexes, we leverage the results from Lieb to study this extension and relate it to the Fourier-based Metrics.
Following Definition \ref{def:energy_dist_gen}, let us consider the Energy Distance
\begin{equation}
    \label{eq:energyalpahbig}
    \mathcal{E}^2_\alpha(\mu,\nu)=(-1)^{k}\int_{\erre^n\times\erre^n}|\bx-\by|^\alpha d[\mu(\bx)-\nu(\bx)]d[\nu(\by)-\mu(\by)],
\end{equation}
where $k=\floor{\frac{\alpha}{2}}$.
For any $\alpha > 2$, there exists a $0<\lambda<2$ for which $\alpha+\lambda=2(k+1)$.
Therefore, we write
\begin{equation}
    \label{eq:high_order_rearrangement}
    (-1)^{k}\mathcal{E}^2_\alpha(\mu,\nu)=\int_{\erre^n\times\erre^n}\frac{1}{|\bx-\by|^\lambda}\Big(\sum_{i=1}^n(x_i-y_i)^2\Big)^{k+1}d[\mu(\bx)-\nu(\bx)]d[\nu(\by)-\mu(\by)].
\end{equation}
Notice that, if $k=0$, we go back to the case studied in Section \ref{sec:fourierandenergy}.
Likewise, we have that by expanding the term $(\sum_{i=1}^n(x_i-y_i)^2)^{k+1}$, we can rewrite the right hand side of \eqref{eq:high_order_rearrangement} as
\[
\sum_{j\in J}\int_{\erre^n\times\erre^n}\frac{1}{|\bx-\by|^\lambda}d[p_j(\bx)(\mu(\bx)-\nu(\bx))]d[q_j(\by)(\nu(\by)-\mu(\by))],
\]
where $p_j$ and $q_j$ are polynomials with respect to the variables $\bx$ and $\by$, respectively, and $J$ is a suitable set of indexes.
By iterating the argument used for $k=1$, we get
\begin{equations}
    &\Big[{p_j(\bx)(\mu(\bx)-\nu(\bx))}\Big]=(-1)^{d_p}\partial_{J_p} (\hat\mu(\bxi)-\hat\nu(\bxi)),\\
    &\Big[\overline{\widehat{q_j(\by)(\nu(\by)-\mu(\by)}})\Big]=(-1)^{d_q}\partial_{J_q}\overline{(\hat\mu(\bxi)-\hat\nu(\bxi))},
\end{equations}
where $d_{p_j}$ and $d_{q_j}$ are the degree of the polynomials $p_j$ and $q_j$, respectively, $\partial_{J_{p_j}}$ and $\partial_{J_{q_j}}$ are the derivative with respect to a suitable combination of $\xi_i$ entries that depends on the polynomials $p_j$ and $q_j$, respectively.
Owing to the properties of derivatives, we have that
\begin{align*}
    \int_{\R^n\times\R^n}&W_\lambda(\bx-\by)\Big(\sum_{i=1}^n(x_i-y_i)^2\Big)^{k+1}d[\mu(\bx)-\nu(\bx)]d[\nu(\by)-\mu(\by)]\\
    &=-\int_{\R^n\times\R^n} W_\lambda(\bx-\by)\sum_{(i_1,\dots,i_{k+1})} \Big(\prod_{l=1}^{k+1}(x_{i_l}-y_{i_l})^2\Big)d[\mu(\bx)-\nu(\bx)]d[\mu(\by)-\nu(\by)]\\
    &=-\sum_{(i_1,\dots,i_{k+1})}\int_{\R^n}\hat W_\lambda(\bxi) \Big((-1)^{k+1}\partial^2_{\xi_{i_1}}\dots\partial_{\xi_{i_{k+1}}}^2 |\hat\mu(\bxi)-\hat\nu(\bxi)|^2\Big)d\bxi\\
    &=(-1)^{k}\int_{\R^n}\hat W_\lambda(\bxi)\Big(\Delta^{k+1} |\hat\mu(\bxi)-\hat\nu(\bxi)|^2\Big)d\bxi\\
    &=(-1)^{k}\int_{\R^n}(\Delta^{k+1}\hat W_\lambda(\bxi))|\hat\mu(\bxi)-\hat\nu(\bxi)|^2d\bxi\\
    &=(-1)^{k}\int_{\R^n}\frac{C_\alpha}{|\bxi|^{n+2(k+1)-\lambda}}|\hat\mu(\bxi)-\hat\nu(\bxi)|^2d\bxi,
\end{align*}
where $C_\alpha\ge 0$ is a suitable constant.
In particular, we have that
\begin{equation}
    \mathcal{E}^2_\alpha(\mu,\nu)=C_\alpha\mathcal{F}^2_{n+\alpha}(\mu,\nu).
\end{equation}
Notice that, in order to have that $\mathcal{E}_\alpha(\mu,\nu)\ge 0$, we have that
\begin{equation}
    \mathcal{E}_\alpha(\mu,\nu)=\begin{cases}
        \int_{\erre^n\times\erre^n}|\bx-\by|^\alpha_2d[\mu(\bx)-\nu(\bx)]d[\nu(\by)-\mu(\by)]\quad\quad&\text{if}\;\alpha\in A\\
        \int_{\erre^n\times\erre^n}|\bx-\by|^\alpha_2d[\mu(\bx)-\nu(\bx)]d[\mu(\by)-\nu(\by)]\quad\quad&\text{otherwise},
    \end{cases}
\end{equation}
where $A=\bigcup_{k=1}(4k-4,4k-2)$.
In particular, to make $\mathcal{E}_\alpha$ always positive, the sign of the right hand side of \eqref{eq:energyalpahbig} changes depending on the value of $\alpha$.

We summarise our findings in the following theorem.

\begin{theorem}
    The Energy Distance $\mathcal{E}_\alpha$ for any $\alpha>0$, defined as
    \begin{equation}
        \mathcal{E}_\alpha(\mu,\nu)=(-1)^{\floor{\frac{\alpha}{2}}+1}\int_{\erre^n\times\erre^n}|\bx-\by|_2^\alpha d[\mu(\bx)-\nu(\bx)]d[\mu(\by)-\nu(\by)],
    \end{equation}
    where $\floor{x}=\max\{k\in\N,\;s.t.\;k\le x\}$ is well-defined for any couple of probability measures that have the same moments up to the $\floor{\frac{3k-\alpha}{2}}$-th order, where $k=\floor{\frac{\alpha}{2}}$.
\end{theorem}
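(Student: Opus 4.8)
The plan is to reduce the entire statement to the single identity
\[
\mathcal E_\alpha^2(\mu,\nu)=c_{n,\alpha}\,\mathcal F_{n+\alpha}^2(\mu,\nu),\qquad c_{n,\alpha}>0,
\]
that is, to extend relation \fer{eq:energy_fourier_generalintro} to the whole range $\alpha>0$ with $\alpha\notin2\mathbb N$. Once this is available, everything follows at once: finiteness of $\mathcal E_\alpha$ is equivalent to finiteness of $\mathcal F_{n+\alpha}$, whose integrability near the origin was already characterized in terms of the number of shared moments; non-negativity holds because $c_{n,\alpha}>0$ (the prefactor $(-1)^{\floor{\alpha/2}+1}$ in the definition is chosen precisely to cancel the sign of the underlying kernel); and symmetry, the coincidence axiom $\mathcal E_\alpha(\mu,\nu)=0\iff\mu=\nu$, and the triangle inequality are inherited from the weighted $L^2$-seminorm $\mathcal F_{n+\alpha}$ together with injectivity of the Fourier transform. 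The moment requirement in the statement is then exactly the one that makes $\mathcal F_{n+\alpha}^2(\mu,\nu)<+\infty$.

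To prove the identity, set $\sigma:=\mu-\nu$, a finite signed measure with $\sigma(\mathbb R^n)=0$ and $\widehat\sigma=\widehat\mu-\widehat\nu$; put $k=\floor{\alpha/2}$ and $\lambda:=2(k+1)-\alpha$, so that $0<\lambda<2$ (here $\alpha\notin2\mathbb N$ is used) and $|\bx-\by|^\alpha=W_\lambda(\bx-\by)\,(|\bx-\by|^2)^{k+1}$ with $W_\lambda(\bx)=|\bx|^{-\lambda}$ as in \fer{wlambda}. The physical-space integral in \fer{eq:energyalpahbig} is absolutely convergent whenever $\mu,\nu$ have finite $\alpha$-th moment (since $|\bx-\by|^\alpha\lesssim|\bx|^\alpha+|\by|^\alpha$), and, writing $\check\sigma$ for the reflection $d\check\sigma(\mathbf z)=d\sigma(-\mathbf z)$, it equals $\int_{\mathbb R^n}|\mathbf z|^\alpha\,d(\sigma*\check\sigma)(\mathbf z)$, whose Fourier transform pairs $\widehat{|\bx|^\alpha}$ with $\widehat{\sigma*\check\sigma}=|\widehat\sigma|^2$. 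The key computation is, in the sense of tempered distributions,
\[
\widehat{|\bx|^\alpha}(\bxi)=(-1)^{k+1}\,\Delta^{k+1}\widehat{W_\lambda}(\bxi)=(-1)^{k+1}\,c_{n,\alpha}\,|\bxi|^{-(n+\alpha)},
\]
obtained from $|\bx|^\alpha=(|\bx|^2)^{k+1}W_\lambda(\bx)$, the rule $\widehat{|\bx|^2u}=-\Delta\widehat u$, the explicit formula \fer{fW} for $\widehat{W_\lambda}$, and $k+1$ iterations of $\Delta|\bxi|^{-s}=s(s+2-n)|\bxi|^{-s-2}$ starting from $s=n-\lambda$; the $k+1$ resulting factors $(n-\lambda+2j)(2j+2-\lambda)$, $j=0,\dots,k$, and the positive Gamma-quotient in \fer{fW}, are all strictly positive because $0<\lambda<2$ and $\lambda<n$ (the latter automatic for $n\geq2$), whence $c_{n,\alpha}>0$. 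Those who prefer to avoid the distributional Fourier transform of $|\bx|^\alpha$ may instead follow the elementary route already indicated around \fer{eq:high_order_rearrangement}: expand $(|\bx-\by|^2)^{k+1}$ by the multinomial theorem, recognize each inner integral against $W_\lambda(\bx-\by)$ as a convolution, move to Fourier variables using only \fer{fW} and $\widehat{p\,\sigma}=p(i\nabla_\bxi)\widehat\sigma$, apply Parseval, and integrate by parts $2(k+1)$ times to transfer every $\bxi$-derivative onto $\widehat{W_\lambda}$, producing $\int(\Delta^{k+1}\widehat{W_\lambda})|\widehat\sigma|^2$ with the same constant. Either way one reaches $\int\!\!\int|\bx-\by|^\alpha\,d\sigma\,d\sigma=(-1)^{k+1}\tfrac{c_{n,\alpha}}{(2\pi)^n}\mathcal F_{n+\alpha}^2(\mu,\nu)$, and multiplying by $(-1)^{k+1}$ gives $\mathcal E_\alpha^2=\tfrac{c_{n,\alpha}}{(2\pi)^n}\mathcal F_{n+\alpha}^2\geq0$.

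It remains to settle the moment hypothesis and the rigour of the Fourier manipulations. The Fourier-side integral converges at infinity for free ($|\widehat\sigma|\leq2$ and $\int_{|\bxi|>1}|\bxi|^{-(n+\alpha)}\,d\bxi<\infty$ for $\alpha>0$), and converges near $\bxi=0$ exactly when $|\widehat\sigma|^2=O(|\bxi|^{2(m+1)})$ with $2(m+1)>\alpha$; since sharing moments up to order $m$ forces $\widehat\mu-\widehat\nu=O(|\bxi|^{m+1})$, the threshold is $m\geq\floor{\alpha/2}$, precisely the one established earlier for $\mathcal F_s$ with $s=n+\alpha$. This same vanishing of $|\widehat\sigma|^2$ at the origin is what makes the Hadamard-finite-part pairing with the homogeneous distribution $|\bxi|^{-(n+\alpha)}$ (well defined since $\alpha\notin2\mathbb N$) collapse to an ordinary absolutely convergent integral, and thereby legitimizes the Parseval step together with the integrations by parts.

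The main obstacle I anticipate is exactly this last point: $\widehat{W_\lambda}$ and $\widehat{|\bx|^\alpha}$ are only tempered distributions, singular at the origin, so one must verify that no principal-value or boundary terms survive — which is where the matching-moments hypothesis does its real work, forcing $|\widehat\sigma|^2$ to kill the singularity. The clean way to make this airtight is to regularize, replacing $|\bx-\by|^\alpha$ by $(|\bx-\by|^2+\varepsilon^2)^{\alpha/2}$, carrying out the now-smooth computation for $\varepsilon>0$, and passing to the limit $\varepsilon\to0$ via the moment bounds and dominated convergence. A minor secondary point is the sign bookkeeping in the iterated Laplacian, together with the fact that \fer{fW} requires $\lambda<n$, which for $n=1$ restricts attention to $\alpha\in(2k+1,2k+2)$ and must be handled separately or by a limiting argument.
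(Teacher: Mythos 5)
Your proposal follows essentially the same route as the paper's own argument: the factorization $|\bx-\by|^{\alpha}=W_\lambda(\bx-\by)\,\big(|\bx-\by|^2\big)^{k+1}$ with $\lambda=2(k+1)-\alpha\in(0,2)$, passage to Fourier variables via \fer{fW}, Parseval, and $2(k+1)$ integrations by parts to arrive at $\int(\Delta^{k+1}\widehat W_\lambda)\,|\hat\mu-\hat\nu|^2$, hence $\mathcal E^2_\alpha=C_\alpha\,\mathcal F^2_{n+\alpha}$ with $C_\alpha>0$, from which well-definedness, sign, and the metric properties all follow. You add two points of rigour that the paper passes over: the regularization $(|\bx-\by|^2+\varepsilon^2)^{\alpha/2}$ to justify the distributional Fourier manipulations (with the matching-moments hypothesis cancelling the singularity of $|\bxi|^{-(n+\alpha)}$ at the origin), and the observation that \fer{fW} requires $\lambda<n$, so that for $n=1$ only $\alpha\in(2k+1,2k+2)$ is covered directly and the remaining range needs a separate or limiting argument. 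One discrepancy is worth flagging: your derivation yields the moment threshold $m\ge\floor{\alpha/2}$ (shared moments up to order $\floor{\alpha/2}$), which is consistent with Definition~\ref{def:energy_dist_gen} and with the paper's own integrability analysis of $\mathcal F_s$ near $\bxi=0$, whereas the theorem as printed asks only for moments up to order $\floor{\frac{3k-\alpha}{2}}$; for $\alpha=3$ the printed condition gives order $0$ while the argument actually requires equal means (order $1$), so the printed exponent appears to be a slip and your threshold is the one the computation supports.
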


\begin{remark}
    Notice that if $\alpha=2k$ with $k\in\N$, the Energy Distance $\mathcal{E}_{2k}$ is no longer a metric as it only depends on the moments of the marginals of the two measures.
    Indeed, given $\mu$ and $\nu$ two probability measures, we have that
    \begin{align*}
        \mathcal{E}_{2k}(\mu,\nu)&=\int_{\erre^n\times\erre^n}\Big(\sum_{i=1}^n(x_i-y_i)^2\Big)^kd[\mu(\bx)-\nu(\bx)]d[\mu(\by)-\nu(\by)]\\
        &=\sum_{j\in J}\bigg(\int_{\erre^n}p_j(\bx)d[\mu(\bx)-\nu(\bx)]\bigg)\bigg(\int_{\erre^n}q_j(\by)d[\mu(\by)-\nu(\by)]\bigg),
    \end{align*}
    where $p_j$ and $q_j$ are suitable polynomials whose degree is at most equal to $2k$ and $J$ is a suitable set of indexes.
    In particular, we have that $\mathcal{E}_{2k}(\mu,\nu)$ depends only on the first $2k$ moments of the measures $\mu$ and $\nu$.
    Since $\mathcal{E}_{2k}(\mu, \mu)=0$, it follows that $\mathcal{E}_{2k}(\mu, \nu)=0$ for any pair of probability measures $\mu$ and $\nu$ that share the same moments up to order $2k$, hence $\mathcal{E}_{2k}$ is not a metric.
\end{remark}

Lastly, we study the sub-additivity by convolution property.
First, we show that every $\mathcal{E}_\alpha$ is sub-additive by convolution.

\begin{theorem}
    For every $\alpha>-n$, the divergence $\mathcal{E}_\alpha$ is sub-additive by convolution.
\end{theorem}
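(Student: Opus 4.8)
The plan is to reduce everything to the Fourier-based metrics, for which sub-additivity by convolution is essentially the triangle inequality in $L^2$ combined with the elementary bound $|\hat\mu(\bxi)|\leq 1$.

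First I would invoke the identities proved above: Theorem~\ref{thm:positive_energy} for $0<\alpha<2$, and the analogous results for $\alpha\in(-n,0)$ and for $\alpha>2$, give
\[
\mathcal{E}_\alpha(\mu,\nu)=\sqrt{C_\alpha}\,\mathcal{F}_{n+\alpha}(\mu,\nu)
\]
for a fixed constant $C_\alpha>0$, whenever both sides are admissible (i.e. $\alpha>-n$ and $\alpha\notin 2\mathbb{N}$, which is precisely the range in which $\mathcal{E}_\alpha$ is a genuine divergence). Since the multiplicative constant $\sqrt{C_\alpha}$ is the same on both sides of the inequality to be proved, it suffices to show that, for every $s>0$, the Fourier-based metric $\mathcal{F}_s$ is sub-additive by convolution; the claim for $\mathcal{E}_\alpha$ then follows by setting $s=n+\alpha$.

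Next I would carry out the Fourier computation. Let $\mu,\sigma,\nu,\tau$ be the laws of $\bX,\bZ,\bY,\bW$, with $\bZ$ independent of $\bX$ and $\bW$ independent of $\bY$. Then the law of $\bX+\bZ$ is the convolution $\mu\ast\sigma$, with Fourier transform $\widehat{\mu\ast\sigma}=\hat\mu\,\hat\sigma$, and the law of $\bY+\bW$ has Fourier transform $\hat\nu\,\hat\tau$. The elementary splitting
\[
\hat\mu\hat\sigma-\hat\nu\hat\tau=(\hat\mu-\hat\nu)\,\hat\sigma+\hat\nu\,(\hat\sigma-\hat\tau),
\]
together with $|\hat\sigma(\bxi)|\leq 1$ and $|\hat\nu(\bxi)|\leq 1$ (Fourier transforms of probability measures have modulus at most $1$), gives the pointwise estimate
\[
\frac{|\hat\mu(\bxi)\hat\sigma(\bxi)-\hat\nu(\bxi)\hat\tau(\bxi)|}{|\bxi|^{s/2}}\leq \frac{|\hat\mu(\bxi)-\hat\nu(\bxi)|}{|\bxi|^{s/2}}+\frac{|\hat\sigma(\bxi)-\hat\tau(\bxi)|}{|\bxi|^{s/2}}.
\]
Taking the $L^2(\erre^n)$ norm of both sides and applying Minkowski's inequality to the right-hand side yields
\[
\mathcal{F}_s(\bX+\bZ,\bY+\bW)\leq \mathcal{F}_s(\bX,\bY)+\mathcal{F}_s(\bZ,\bW),
\]
which is exactly \eqref{eq:sub_add} for $\mathcal{F}_s$, and hence, after multiplication by $\sqrt{C_\alpha}$, for $\mathcal{E}_\alpha$.

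Finally I would record that the admissibility hypotheses are inherited under independent convolution, so the inequality is not vacuous: if $\mu,\nu$ (resp.\ $\sigma,\tau$) share their moments up to order $l$, then so do $\mu\ast\sigma$ and $\nu\ast\tau$, since the moments of a sum of independent vectors expand into sums of products of lower-order moments of the summands; and in the range $\alpha\in(-n,0)$, finiteness of the relevant $p$-th moment is preserved because $\|\bX+\bZ\|_p\leq\|\bX\|_p+\|\bZ\|_p$. The main obstacle is really only this bookkeeping across the different admissibility regimes for $\alpha$ (and noting that the even-integer values $\alpha\in 2\mathbb{N}$, where $\mathcal{E}_\alpha$ degenerates into a combination of moment differences, must be excluded, as the property genuinely fails there); the analytic heart of the proof is just the $L^2$ triangle inequality.
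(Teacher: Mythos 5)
Your proposal is correct and follows essentially the same route as the paper: reduce $\mathcal{E}_\alpha$ to $\mathcal{F}_{n+\alpha}$ via the established identity, split $\hat\mu\hat\sigma-\hat\nu\hat\tau=(\hat\mu-\hat\nu)\hat\sigma+\hat\nu(\hat\sigma-\hat\tau)$, bound the characteristic functions by $1$, and conclude with the $L^2$ triangle inequality. Your added bookkeeping on how the admissibility hypotheses are inherited under independent convolution is a small refinement the paper omits, but the analytic core is identical.
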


\begin{proof}
    Let $\bX_1\sim\mu_1$ and $\bY_1\sim\nu_1$ be two random vectors.
    Moreover, let $\bX_2\sim\mu_2$ and $\bY_2\sim\nu_2$ be two random vectors such that $\bX_1$ is independent from $\bX_2$ and $\bY_1$ is independent from $\bY_2$.
    Owing to the fact that $\bX_1+\bX_2\sim \mu_1\ast\mu_2$ and $\bY_1+\bY_2\sim \nu_1\ast\nu_2$, we have that
    \begin{align*}
        \mathcal{E}_\alpha(\bX_1+\bX_2,&\bY_1+\bY_2)=C_\alpha\mathcal{F}_{n+\alpha}(\bX_1+\bX_2,\bY_1+\bY_2)\\
        &=C_\alpha\sqrt{\int_{\erre^n}\frac{|\hat\mu_1(\bxi)\hat\mu_2(\bxi)-\hat\nu_1(\bxi)\hat\nu_2(\bxi)|^2}{|\bxi|^{n+\alpha}}d\bxi}\\
        &=C_\alpha\sqrt{\int_{\erre^n}\frac{|\hat\mu_1(\bxi)\hat\mu_2(\bxi)-\hat\nu_1(\bxi)\hat\mu_2(\bxi)+\hat\nu_1(\bxi)\hat\mu_2(\bxi)-\hat\nu_1(\bxi)\hat\nu_2(\bxi)|^2}{|\bxi|^{n+\alpha}}d\bxi}\\
        &\le C_\alpha\sqrt{\int_{\erre^n}\frac{|\hat\mu_2(\bxi)|^2|\hat\mu_1(\bxi)-\hat\nu_1(\bxi)|^2}{|\bxi|^{n+\alpha}}d\bxi} +\sqrt{\int_{\erre^n}\frac{|\hat\nu_1(\bxi)|^2|\hat\mu_2(\bxi)-\hat\nu_2(\bxi)|^2}{|\bxi|^{n+\alpha}}d\bxi}\\
        &\le C_\alpha\Big(\mathcal{F}_{n+\alpha}(\bX_1,\bY_1)+\mathcal{F}_{n+\alpha}(\bX_2,\bY_2)\Big)\\
        &=\mathcal{E}_\alpha(\bX_1,\bY_1) + \mathcal{E}_\alpha(\bX_2,\bY_2),
    \end{align*}
    which concludes the thesis.
\end{proof}

Lastly, we notice that, by taking $\alpha$ large enough, it is possible to retrieve an Energy Distance that is sub-additive by convex convolution.

\begin{theorem}
    When $\alpha\ge 4$, $\mathcal{E}_\alpha$ is sub-additive by convex convolution.
\end{theorem}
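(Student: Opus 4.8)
The plan is to deduce the statement by combining two properties of $\mathcal{E}_\alpha$ that are essentially already on the table: its exact behaviour under dilations, and the plain sub-additivity by convolution proved in the preceding theorem. First I would record the scale sensitivity of $\mathcal{E}_\alpha$: for any $c\neq 0$ and any admissible pair,
\[
\mathcal{E}_\alpha(c\bX,c\bY)=|c|^{\alpha/2}\,\mathcal{E}_\alpha(\bX,\bY).
\]
This is immediate from Definition~\ref{def:energy_dist_gen}, since replacing $(\bx,\by)$ by $(c\bx,c\by)$ multiplies each integrand $|\bx-\by|^{\alpha}$ by $|c|^{\alpha}$, hence multiplies $\big(\mathcal{E}_\alpha\big)^2$ by $|c|^{\alpha}$; alternatively, via the identity $\mathcal{E}_\alpha^2=C_\alpha\mathcal{F}_{n+\alpha}^2$, the change of variables $\bxi\mapsto c\bxi$ in \eqref{me-inf} produces the factor $|c|^{(n+\alpha)-n}=|c|^{\alpha}$. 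Thus $\mathcal{E}_\alpha$ is $(\alpha/2)$-scale sensitive.

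Now fix $\lambda\in(0,1)$, a random vector $\bZ$ independent of $\bX$, and a random vector $\bW$ independent of $\bY$. Then $\sqrt{\lambda}\,\bX$ is independent of $\sqrt{1-\lambda}\,\bZ$ and $\sqrt{\lambda}\,\bY$ is independent of $\sqrt{1-\lambda}\,\bW$, so the preceding theorem (sub-additivity by convolution), applied to the pairs $(\sqrt{\lambda}\,\bX,\sqrt{\lambda}\,\bY)$ and $(\sqrt{1-\lambda}\,\bZ,\sqrt{1-\lambda}\,\bW)$, gives
\[
\mathcal{E}_\alpha\big(\sqrt{\lambda}\,\bX+\sqrt{1-\lambda}\,\bZ,\ \sqrt{\lambda}\,\bY+\sqrt{1-\lambda}\,\bW\big)\le \mathcal{E}_\alpha(\sqrt{\lambda}\,\bX,\sqrt{\lambda}\,\bY)+\mathcal{E}_\alpha(\sqrt{1-\lambda}\,\bZ,\sqrt{1-\lambda}\,\bW).
\]
Applying the scale sensitivity to each term on the right, this becomes
\[
\mathcal{E}_\alpha\big(\sqrt{\lambda}\,\bX+\sqrt{1-\lambda}\,\bZ,\ \sqrt{\lambda}\,\bY+\sqrt{1-\lambda}\,\bW\big)\le \lambda^{\alpha/4}\,\mathcal{E}_\alpha(\bX,\bY)+(1-\lambda)^{\alpha/4}\,\mathcal{E}_\alpha(\bZ,\bW).
\]
Finally I would invoke the elementary fact that $t^{\alpha/4}\le t^{1/2}$ for every $t\in(0,1)$, which holds because $\alpha\ge 4$ forces $\alpha/4\ge 1\ge\tfrac12$ (here $\alpha\ge 2$ would already suffice); applying it with $t=\lambda$ and with $t=1-\lambda$ yields exactly \eqref{eq:sub_add_conv}.

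The only genuinely delicate points are questions of well-posedness, and I would handle these separately. One must check that all the quantities above are finite, i.e.\ that the relevant pairs satisfy the moment hypothesis of Definition~\ref{def:energy_dist_gen}: dilation by $\sqrt{\lambda}$ obviously preserves ``$\mu$ and $\nu$ have equal moments up to order $\floor{\alpha/2}$'', and for the sum $\sqrt{\lambda}\,\bX+\sqrt{1-\lambda}\,\bZ$ one uses that, by independence, its moments up to any order $l$ are fixed universal polynomials in the moments of $\bX$ and of $\bZ$ up to order $l$; hence if $(\bX,\bY)$ and $(\bZ,\bW)$ are each matched up to order $\floor{\alpha/2}$, so is the pair of convex combinations. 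I expect the main obstacle to be exactly this bookkeeping --- in particular, verifying that the sub-additivity-by-convolution theorem is being applied to pairs on which $\mathcal{E}_\alpha$ is a bona fide (nonnegative, finite) divergence --- since once finiteness is secured the three displayed inequalities chain together with no further work, and the hypothesis $\alpha\ge 4$ enters only through the single scalar estimate $t^{\alpha/4}\le t^{1/2}$ on $(0,1)$.
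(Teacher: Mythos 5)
Your proof is correct and follows essentially the same route as the paper's: apply the plain sub-additivity by convolution to the rescaled pairs, pull out the factors $\lambda^{\alpha/4}$ and $(1-\lambda)^{\alpha/4}$ via the $(\alpha/2)$-scale sensitivity, and finish with a scalar inequality on $(0,1)$; your additional bookkeeping on the moment hypotheses is a welcome supplement rather than a divergence. Your observation that the definition \eqref{eq:sub_add_conv} as literally stated (with coefficients $\sqrt{\lambda}$, $\sqrt{1-\lambda}$) only requires $\lambda^{\alpha/4}\le\lambda^{1/2}$, hence $\alpha\ge 2$, is accurate --- the paper's closing comparison $\lambda^{\alpha/4}\le\lambda$ targets the stronger bound with coefficients $\lambda$ and $1-\lambda$, which is where the hypothesis $\alpha\ge 4$ actually comes from.
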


\begin{proof}
    Given $\lambda\in(0,1)$ and four random vectors $\bX_1,\bX_2,\bY_1$, and $\bY_2$ as in Definition \ref{def:sub_add}, we have that
    \begin{align*}
        &\mathcal{E}_\alpha(\sqrt{\lambda}\bX_1+\sqrt{1-\lambda}\bX_2,\sqrt{\lambda}\bY_1+\sqrt{1-\lambda}\bY_2)\\
        &\quad\quad\quad\quad\quad\quad\quad\quad\quad\le\mathcal{E}_\alpha(\sqrt{\lambda}\bX_1,\sqrt{\lambda}\bY_1)+\mathcal{E}_\alpha(\sqrt{1-\lambda}\bX_2,\sqrt{1-\lambda}\bY_2)\\
        &\quad\quad\quad\quad\quad\quad\quad\quad\quad=\lambda^{\frac{\alpha}{4}}\mathcal{E}_\alpha(\bX_1,\bY_1)+(1-\lambda)^{\frac{\alpha}{4}}\mathcal{E}_\alpha(\bX_2,\bY_2),
    \end{align*}
    To conclude the thesis, we notice that $\lambda\ge\lambda^{\frac{\alpha}{4}}$ and $1-\lambda\ge(1-\lambda)^{\frac{\alpha}{4}}$ if and only if $\alpha\ge 4$, which concludes the thesis.
\end{proof}

\begin{remark}
    Notice that to ensure that $\mathcal{E}_\alpha(\mu,\nu)<+\infty$ when $\alpha \ge 4$, we need to ask $\mu$ and $\nu$ to have equal mean and covariance matrix.
\end{remark}

\subsubsection{Comparing the Fourier-based Metrics and the Energy Distances}

Owing to Theorem \ref{thm:positive_energy}, we have that for $-n<\alpha$ and $\alpha\neq2k$ for every $k\in\enne$, the Energy Distance is, up to a multiplicative factor, equal to a suitable Fourier-based Metric.
In particular, for every couple of probability measures $\mu$ and $\nu$ it holds
\begin{equation}
\label{eq:F=ED}
   \mathcal{E}_{\alpha}(\mu,\nu) = C_\alpha \mathcal{F}_s(\mu,\nu),
\end{equation}
where $\alpha$ and $s$ are such that $s=n+\alpha$ and $n$ is the dimension on which the random vectors are supported.
%
Owing to \eqref{eq:F=ED}, we are then able to give a full profiling of the Fourier-based Metrics and the Energy Distances in terms of the properties highlighted in Section \ref{sec:desired_properties}.

\textbf{Analytical Properties.} Owing to the identity \eqref{eq:F=ED}, we trivially infer that the Energy Distance is a metric whenever $\alpha>-n$ and $\alpha \neq 2k$ for $k\in\enne$, since $\mathcal{F}_s$ is a metric for every $s>0$.
Similarly, the Fourier-based Metrics are differentiable since so is the Energy Distance.
\textbf{Topological Properties.} Both $\mathcal{E}_\alpha$ and $\mathcal{F}_{n+\alpha}$ are $\frac{\alpha}{2}$ scale sensible thus none of the two divergences is scale invariant.
Lastly, we have proven that if $\alpha>4$ $\mathcal{E}_\alpha$ is sub-additive by convex convolution, implying that also $\mathcal{F}_s$ is sub additive by convex convolution when $s\ge n+4$ in accordance to previous results.
{\textbf{Computational Properties.}} Any Fourier-based Metric for which $s=n+\alpha$ for $\alpha>-n$ has unbiased gradient since $\mathcal{E}_{\alpha}$ possess this property for every $\alpha>-n$.
Indeed, let $\{P_\theta\}_{\theta\in I}$ be a family of differentiable probability distributions and let $\{\bX^{(i)}\}_{i=1,\dots,N}$ be $N$ i.i.d. random vectors distributed as $\bX\sim\mu$.
Denoted with $\mu_N$ the empirical distribution associated with the samples $\{\bX^{(i)}\}_{i=1,\dots,N}$, we have
\[
\partial_\theta \mathcal{E}^2_\alpha(\mu_N,P_\theta)=\partial_\theta\frac{1}{N}\sum_i^N\int_{\erre^n}|\bX^{(i)}-\by|P_\theta(d\by)+\partial_\theta\int_{\erre^n}\int_{\erre^n}|\by-\by'|P_\theta(d\by)P_\theta(d\by')
\]
since the term $\frac{1}{N^2}\sum_{i,j}^N|\bX^{(i)}-\bX^{(j)}|$ does not depend on $\theta\in I$.
By taking the expected value of $\partial_\theta \mathcal{E}^2_\alpha(\mu_N,P_\theta)$ we obtain
\[
\mathbb{E}\Big[\partial_\theta \mathcal{E}^2_\alpha(\mu_N,P_\theta)\Big]=\int_{\erre^n}\int_{\erre^n}|\bx-\by|\partial_\theta P_\theta(d\by)\mu(d\bx)+\partial_\theta\int_{\erre^n}\int_{\erre^n}|\by-\by'|P_\theta(d\by)P_\theta(d\by'),
\]
which is equal to $\partial_\theta \mathcal{E}_\alpha^2(\mu,P_\theta)$, thus proving that $\mathcal{E}_\alpha$ has unbiased gradient.
To conclude, we notice that the Energy Distance appears preferable, as its complexity is lower.
Indeed, computing the Energy Distance requires only the sum of $O(N^2)$ terms and avoids the need to compute two Fourier transforms.

\subsection{The Wasserstein Distance and the Fourier-based Metric}

In this section, we study the relation between the Wasserstein Distance and the Energy Distance.
Owing to Theorem \ref{thm:positive_energy}, studying the relation between the Wasserstein Distance and the Energy Distance directly translate to a study of the relations between the Wasserstein Distances and the Fourier-based Metrics.
In particular, we leverage Theorem \ref{thm:positive_energy} to propose a novel and simpler proof of the equivalence between the Fourier and the Wasserstein Distances over the set of probability measures with uniformly bounded second momentum.

\begin{theorem}
    Given two absolutely continuous probability distributions, namely $\mu$ and $\nu$, we have that
    \begin{equation}
        \frac{1}{2}\mathcal{E}^2_1(\mu,\nu)\le W_1(\mu,\nu)
    \end{equation}
    and
    \begin{equation}
        \frac{\Gamma\left(\frac{n+1}2\right)}{2\pi^{(n+1)/2}}\mathcal{F}^2_{n+1}(\mu,\nu)\le W_1(\mu,\nu).
    \end{equation}
\end{theorem}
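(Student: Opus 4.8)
The plan is first to notice that the two displayed inequalities are in fact the same assertion, and then to prove it in the energy-distance form by a short coupling argument. For the first point, I would specialize Theorem~\ref{thm:positive_energy} to $\alpha=1$, so that $\lambda=2-\alpha=1$, and read off the multiplicative constant from \eqref{Lapla}: since $\tfrac{1}{(2\pi)^n}\Delta\widehat W_1(\bxi)=\tfrac{\Gamma((n+1)/2)}{\pi^{(n+1)/2}}\,|\bxi|^{-(n+1)}$, the proof of that theorem gives $\mathcal{E}^2_1(\mu,\nu)=\tfrac{\Gamma((n+1)/2)}{\pi^{(n+1)/2}}\,\mathcal{F}^2_{n+1}(\mu,\nu)$. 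Consequently $\tfrac12\mathcal{E}^2_1(\mu,\nu)=\tfrac{\Gamma((n+1)/2)}{2\pi^{(n+1)/2}}\mathcal{F}^2_{n+1}(\mu,\nu)$, so the second inequality is just a restatement of the first, and it suffices to prove $\tfrac12\mathcal{E}^2_1(\mu,\nu)\le W_1(\mu,\nu)$.

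For this I may assume $W_1(\mu,\nu)<\infty$ (otherwise the inequality is trivial), in which case the first moments of $\mu$ and $\nu$ are finite and the three integrals defining $\mathcal{E}^2_1$ via \eqref{en-dist} with $\alpha=1$ (note $k=\floor{1/2}=0$) converge. Take an optimal transport plan $\pi\in\Pi(\mu,\nu)$ for $W_1$, and let $(\bX_1,\bY_1)$ and $(\bX_2,\bY_2)$ be two \emph{independent} copies of $\pi$. Then $\bX_1,\bX_2$ are i.i.d.\ with law $\mu$, $\bY_1,\bY_2$ are i.i.d.\ with law $\nu$, the pairs $(\bX_1,\bY_1)$ and $(\bX_2,\bY_2)$ are independent, and $\mathbb{E}|\bX_i-\bY_i|=W_1(\mu,\nu)$. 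Using these independence relations, and writing the cross term $2\int|\bx-\by|\,d\mu\,d\nu$ symmetrically as $\mathbb{E}|\bX_1-\bY_2|+\mathbb{E}|\bX_2-\bY_1|$, I would rewrite
\[
\mathcal{E}^2_1(\mu,\nu)=\mathbb{E}\big[\,|\bX_1-\bY_2|+|\bX_2-\bY_1|-|\bX_1-\bX_2|-|\bY_1-\bY_2|\,\big].
\]

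The final step is a pointwise estimate. From the two triangle inequalities
\[
|\bX_1-\bY_2|\le|\bX_1-\bY_1|+|\bY_1-\bY_2|,\qquad |\bX_2-\bY_1|\le|\bX_1-\bX_2|+|\bX_1-\bY_1|,
\]
adding them and cancelling $|\bY_1-\bY_2|$ and $|\bX_1-\bX_2|$ gives $|\bX_1-\bY_2|+|\bX_2-\bY_1|-|\bX_1-\bX_2|-|\bY_1-\bY_2|\le 2|\bX_1-\bY_1|$; taking expectations yields $\mathcal{E}^2_1(\mu,\nu)\le 2W_1(\mu,\nu)$, as desired.

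I do not anticipate a genuine difficulty: once the symmetric coupled representation of $\mathcal{E}^2_1$ is set up, everything reduces to two applications of the triangle inequality. The only points needing a little care are recording the Fourier constant correctly so that the two inequalities truly coincide, and the integrability bookkeeping behind the probabilistic rewriting of $\mathcal{E}^2_1$ — which is precisely why one reduces at the outset to the case $W_1(\mu,\nu)<\infty$; the existence of an optimal $W_1$-plan used above is standard.
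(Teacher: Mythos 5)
Your proof is correct, and its core is the same as the paper's: two applications of the triangle inequality, each "routed through" the optimal transport object, combined with the marginal identities to turn the detour terms into the self-interaction integrals $\int|\bx-\bx'|\,d\mu\,d\mu$ and $\int|\by-\by'|\,d\nu\,d\nu$. The difference is in the implementation. The paper works with a pair of optimal Monge maps $T,S$ satisfying $T_\#\mu=\nu$, $S_\#\nu=\mu$ and $T\circ S=S\circ T=\Id$, and applies the triangle inequality via $T(\bx)$ and $S(\by)$ separately; you instead take an optimal plan $\pi$ for $W_1$, draw two independent copies $(\bX_1,\bY_1)$, $(\bX_2,\bY_2)$ of it, and run both triangle inequalities inside a single expectation. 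Your route is the more robust of the two: an optimal plan for $W_1$ always exists, whereas the existence of an \emph{invertible} optimal Monge map for the cost $|\bx-\by|$ (as opposed to $|\bx-\by|^2$) is a delicate matter even for absolutely continuous marginals, since $W_1$-optimal plans are generally non-unique and need not be induced by maps without further work; your argument also extends verbatim to measures that are not absolutely continuous, needing only finite first moments. The constant bookkeeping in your reduction of the second inequality to the first (via $\Delta\widehat W_1$ and $\Gamma(1/2)=\sqrt{\pi}$) matches the paper's.
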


\begin{proof}
    Since $\mu$ and $\nu$ are absolutely continuous distributions, there exists two functions, namely $T$ and $S$ such that the following conditions hold
    \begin{itemize}
        \item $T_\#\mu=\nu$ and $S_\#\nu=\mu$;
        \item $T\circ S= S \circ T= \Id$, where $\Id$ is the identity function on $\erre^n$; and
        \item $W_1(\mu,\nu)=\int_{\erre^{n}}|\bx-T(\bx)|d\mu=\int_{\erre^{n}}|\by-S(\by)|d\nu$.
    \end{itemize}
    In particular, $T$ is the optimal transportation map from $\mu$ to $\nu$ and $S$ is the optimal transportation map from $\nu$ to $\mu$.
    We then have the following 
    \begin{align}
    \label{eq:ineqT}
        \nonumber\int_{\erre^n\times\erre^n}&|\bx-\by|\mu(d\bx)\nu(d\by)=\int_{\erre^n\times\erre^n}|\bx-T(\bx)+T(\bx)-\by|\mu(d\bx)\nu(d\by)\\
        \nonumber&\le \int_{\erre^n\times\erre^n}\big(|\bx-T(\bx)|+|T(\bx)-\by|\big)\mu(d\bx)\nu(d\by)\\
        \nonumber&=\int_{\erre^n}\int_{\erre^n}|\bx-T(\bx)|\mu(d\bx)\nu(d\by)+\int_{\erre^n}\int_{\erre^n}|T(\bx)-\by|\mu(d\bx)\nu(d\by)\\
        &=\int_{\erre^n}|\bx-T(\bx)|\mu(d\bx)+\int_{\erre^n}\int_{\erre^n}|\by'-\by|\nu(d\by')\nu(d\by)
    \end{align}
    since, owing to the fact that $T_\#\mu=\nu$, we have that 
    \begin{equation}
        \int_{\erre^n}\int_{\erre^n}|T(\bx)-\by|\mu(d\bx)\nu(d\by)=\int_{\erre^n}\int_{\erre^n}|\by'-\by|\nu(d\by')\nu(d\by).
    \end{equation}
    Similarly, by using $S$ instead of $T$, we have that
    \begin{equation}
        \label{eq:ineqS}
        \int_{\erre^n\times\erre^n}|\bx-\by|\mu(d\bx)\nu(d\by)\le \int_{\erre^n}|\by-S(\by)|\nu(d\by)+\int_{\erre^n}\int_{\erre^n}|\bx'-\bx|\mu(d\bx')\mu(d\bx),    
    \end{equation}
    Since $W_1(\mu,\nu)=\int_{\erre^n}|\by-S(\by)|\nu(d\by)$, summing up the inequalities \eqref{eq:ineqT} and \eqref{eq:ineqS}, we have
    \begin{align}
        \label{eq:ineq_tot}
        \nonumber2\int_{\erre^n\times\erre^n}|\bx-\by|\mu(d\bx)\nu(d\by)&\le 2W_1(\mu,\nu)+\int_{\erre^n}\int_{\erre^n}|\bx'-\bx|\mu(d\bx')\mu(d\bx)\\
        &\quad+ \int_{\erre^n}\int_{\erre^n}|\by'-\by|\nu(d\by')\nu(d\by),
    \end{align}
    which, by rearranging the terms and recalling the definition of the Energy Distance, we get
    \begin{align}
        \mathcal{E}_1^2(\mu,\nu)\le 2W_1(\mu,\nu),
    \end{align}
    hence
    \begin{equation}
        \frac{\Gamma\left(\frac{n+1}2\right)}{2\pi^{(n+1)/2}}\mathcal{F}^2_{n+1}(\mu,\nu)\le W_1(\mu,\nu)
    \end{equation}
    since $\Gamma(\frac{1}{2})=\sqrt{\pi}$.
\end{proof}

Through a different argument based on the dual representation of the Wasserstein Distance, it is possible to prove that the Wasserstein Distance is always lower than a suitable power root of a Fourier-based Metric.

\begin{theorem}
    For all $M>0$, there exists a constant $C_{M,n}$ that depends only on $n$ and $M$ such that, for all pairs of probability distributions $\mu,\nu\in\PP(\erre^n)$ with bounded second moment satisfying
    \[
        \int_{\erre^n}|x|^2d(\mu-\nu)<M,
    \]
    it holds true that
    \begin{equation}
        W_1(\mu,\nu)\le C_{M,n} \left( \mathcal{F}_{n+1}(\mu,\nu)^{\frac{2}{n+2}} + \mathcal{F}_{n+1}(\mu,\nu)^{\frac{4}{(n+2)^2} }\right).
    \end{equation}
\end{theorem}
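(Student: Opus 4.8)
The plan is to start from the Kantorovich--Rubinstein dual formula \eqref{W1}, so that it suffices to bound $\int_{\R^n}\varphi\,d(\mu-\nu)$ uniformly over all $\varphi\in\mathrm{Lip}(\R^n)$ with $\|\varphi\|_{\mathrm{Lip}}\le 1$. Since $\mu-\nu$ has zero total mass we may normalise $\varphi(0)=0$, so that $|\varphi(\bx)|\le|\bx|$. The obstruction to a direct Fourier estimate is that $\varphi$ grows linearly, which makes $\int_{\R^n}|\bxi|^{n+1}|\hat\varphi(\bxi)|^2\,d\bxi$ diverge near $\bxi=0$; the hypothesis on second moments is exactly what is needed to control this large-scale part. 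I would therefore introduce two scales: a truncation radius $R>0$ and a mollification length $\delta>0$. Fix a smooth cutoff $\chi_R$ that equals $1$ on $B_R$, vanishes outside $B_{2R}$, with $|\nabla\chi_R|\le C/R$, and a smooth probability density $\rho$ supported in the unit ball, with $\rho_\delta(\bx)=\delta^{-n}\rho(\bx/\delta)$. Set $\psi:=(\varphi\chi_R)\ast\rho_\delta$; then $\psi$ is smooth, supported in $B_{2R+\delta}$, and obeys $\|\psi\|_\infty\le CR$, $\|\nabla\psi\|_\infty\le C$ and $\|\nabla^{k}\psi\|_\infty\le C\delta^{1-k}$ for $k\ge 1$.

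The next step is the decomposition
\[
\int_{\R^n}\varphi\,d(\mu-\nu)=\int_{\R^n}\varphi(1-\chi_R)\,d(\mu-\nu)+\int_{\R^n}\big(\varphi\chi_R-\psi\big)\,d(\mu-\nu)+\int_{\R^n}\psi\,d(\mu-\nu),
\]
with three separate estimates. The tail term is bounded, via Chebyshev's inequality and the second-moment hypothesis, by a quantity of order $M/R$ --- this is the only place where the hypothesis is invoked. The mollification-error term is at most $\|\varphi\chi_R-\psi\|_\infty$ times the total mass of $|\mu-\nu|$ (which is at most $2$), hence at most $C\delta$, since $\varphi\chi_R$ is $C$-Lipschitz and the mollifier acts at scale $\delta$. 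For the main term I would apply Parseval and Cauchy--Schwarz in frequency with the weight $|\bxi|^{n+1}$, obtaining
\[
\left|\int_{\R^n}\psi\,d(\mu-\nu)\right|\le C\,\mathcal{F}_{n+1}(\mu,\nu)\left(\int_{\R^n}|\bxi|^{n+1}\,|\hat\psi(\bxi)|^{2}\,d\bxi\right)^{1/2}.
\]

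The core computation is then to estimate $\int_{\R^n}|\bxi|^{n+1}|\hat\psi|^2\,d\bxi$ in terms of $R$ and $\delta$: splitting the frequency integral and using Plancherel together with the bounds $\|\nabla\psi\|_{2}\le CR^{n/2}$ and $\|\nabla^{k}\psi\|_{2}\le C\delta^{1-k}R^{n/2}$ yields a bound of the form $C R^{a}\delta^{-b}$ for explicit $a,b$ depending on $n$ (for $n=1$ no mollification is even needed, since $\varphi\chi_R$ already lies in $H^1$). Collecting the three pieces gives
\[
W_1(\mu,\nu)\le C\Big(\delta+\frac{M}{R}+\mathcal{F}_{n+1}(\mu,\nu)\,R^{a/2}\delta^{-b/2}\Big),
\]
and it remains to minimise the right-hand side first over $\delta>0$ and then over $R>0$. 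Carrying out this two-parameter optimisation, and tracking the branch in which the optimal radius (or mollification length) saturates a natural constraint, produces a bound of the announced form $C_{M,n}\big(\mathcal{F}_{n+1}(\mu,\nu)^{2/(n+2)}+\mathcal{F}_{n+1}(\mu,\nu)^{4/(n+2)^2}\big)$, the two summands reflecting the two regimes of the optimisation. I expect the main obstacle to be precisely this frequency estimate for $\int|\bxi|^{n+1}|\hat\psi|^2\,d\bxi$ with the right powers of $R$ and $\delta$, together with the bookkeeping of the optimisation needed to land on the stated exponents while ensuring that the final constant depends only on $n$ and $M$.
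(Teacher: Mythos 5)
Your route is genuinely different from the paper's. The paper does not work with the Lipschitz dual norm directly: it invokes Proposition 2.15 of Carrillo--Toscani, $W_1(\mu,\nu)\le C\bigl(\|f-g\|_m^\star+(\|f-g\|_m^\star)^{1/m}\bigr)$ with $m\approx\frac{n+2}{2}$, and then only has to estimate the $\mathrm{C}^m_b$-dual norm by a cutoff, Parseval, Cauchy--Schwarz, and a \emph{one}-parameter optimisation in $R$, giving $\|f-g\|_m^\star\le C_{M,n}\mathcal{F}_{n+1}^{2/(n+2)}$. The two summands in the theorem then come from the two terms of that interpolation inequality: the second exponent is $\frac{2}{n+2}\cdot\frac{1}{m}=\frac{4}{(n+2)^2}$. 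You instead perform the interpolation by hand, mollifying the Lipschitz test function at scale $\delta$ in addition to the cutoff at scale $R$; this makes the argument self-contained (no external proposition), at the price of a two-parameter optimisation.

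The gap is precisely in that last step, which you leave undone and whose outcome you mispredict. Carrying out your optimisation: with $\|\nabla^k\psi\|_2^2\lesssim R^n\delta^{2(1-k)}$ one gets $\int|\bxi|^{n+1}|\hat\psi|^2\,d\bxi\lesssim R^n\delta^{1-n}$, hence $W_1\lesssim \frac{M}{R}+\delta+\mathcal{F}_{n+1}R^{n/2}\delta^{(1-n)/2}$; minimising in $\delta$ gives $\delta\sim(\mathcal{F}_{n+1}R^{n/2})^{2/(n+1)}$, and then minimising in $R$ yields a \emph{single} power, $W_1\lesssim M^{n/(2n+1)}\mathcal{F}_{n+1}^{2/(2n+1)}$ --- not the announced pair, and there are no ``two regimes'' producing two summands. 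For $n\ge 2$ the exponent $\frac{2}{2n+1}$ differs from $\frac{2}{n+2}$. Your argument can still be salvaged, because $\frac{4}{(n+2)^2}\le\frac{2}{2n+1}\le\frac{2}{n+2}$ for all $n\ge1$, so $\mathcal{F}^{2/(2n+1)}\le \mathcal{F}^{2/(n+2)}+\mathcal{F}^{4/(n+2)^2}$ whether $\mathcal{F}\le1$ or $\mathcal{F}\ge1$, and the stated inequality follows; but this observation, together with the actual computation of the exponent, is the mathematical content of the proof and must be supplied --- as written, the proposal asserts the conclusion without establishing it. Two smaller points: for $n$ even you need the frequency-splitting (or fractional Sobolev) argument you only gesture at, since $\frac{n+1}{2}$ is not an integer; and note that the hypothesis as stated controls $\int|\bx|^2\,d(\mu-\nu)$ rather than the tails of $\mu+\nu$, so the Chebyshev step really needs both second moments bounded by $M$ (a sloppiness the paper shares).
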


\begin{proof}
For the sake of simplicity, we assume that $\mu$ and $\nu$ are induced by two density functions, namely $f$ and $g$.
Invoking Proposition 2.15 from a book by J.-A.~Carrillo and G.~Toscani\cite{CaTo}, we have, for all $m \geq 1$, and for a constant $C=C(m,n)$, that
\begin{equation}
\label{eq:CaTo}
W_1(\mu,\nu) \leq C (\|f-g\|^\star_m + (\|f-g\|^\star_m)^{1/m}),
\end{equation}
where 
\begin{equation}\label{starm}
\|f-g\|^\star_m = \sup_{\phi \in \mathrm{C}_b^m(\erre^n)} \int_{\erre^n} \phi \, d(\mu-\nu), \quad \text{with} \,\, \|\phi\|_{\mathrm{C}^m(\erre^n)} \leq 1.
\end{equation}
Let $\phi$ be admissible in \eqref{starm}. Without loss of generality, let $\phi(0)=0.$ 
Let us fix $R>0$, and take a smooth cut-off function $\chi_R$ such that 
\begin{equation}
    \chi_R(x)=\begin{cases}
        1\quad\quad\text{if}\;\;|\bx|\le R\\
        0\quad\quad\text{if}\;\;|\bx|\ge R+1\\
        \in(0,1)\quad\quad\text{otherwise}
    \end{cases}.
\end{equation}
and $|\partial^\alpha_{x_j}\chi_R|\le C$ for every $j=1,\dots,n$, and $1\leq \alpha \leq m,$ and $C=C(m,n)$.
We then write
\begin{equation}
    \label{eq:potential_split}
    W_1(\mu,\nu)=\int_{\erre^n}\phi(\bx)(1-\chi_R(\bx))(f(\bx)-g(\bx))d\bx+\int_{\erre^n}\phi(\bx)\chi_R(\bx)(f(\bx)-g(\bx))d\bx.
\end{equation}

First, we notice that
\begin{align}
\nonumber\int_{\erre^n}\phi(\bx)(1-\chi_R(\bx))(f(\bx)-g(\bx))d\bx&\le \int_{|\bx|\ge R}|\phi(\bx)|(f(\bx)+g(\bx))d\bx\\
    &\le \frac{C}{R}\int_{|\bx|\ge R}|\bx|^2(f(\bx)+g(\bx))d\bx\\
    \nonumber&\le \frac{2CM}{R}
\end{align}
by assumption and by the properties of $\phi_R$.
We now move to the compactly supported term in \eqref{eq:potential_split}.
In this case, we have
\begin{align*}
    \int_{\erre^n}\phi(\bx)\chi_R(\bx)(f(\bx)-&g(\bx))d\bx=\int_{\erre^n}\widehat{\phi\chi_R}(\bxi)(\overline{\hat{f}(\bxi)-\hat{g}(\bxi)})d\bxi\\
    &=\int_{\erre^n}|\bxi|^{\frac{n+1}{2}}\widehat{\phi\chi_R}(\bxi)\frac{\overline{\hat{f}(\bxi)-\hat{g}(\bxi)}}{|\bxi|^{\frac{n+1}{2}}}d\bxi\\
    &\le\bigg(\int_{\erre^n}|\bxi|^{n+1}|\widehat{\phi\chi_R}(\bxi)|^2d\bxi\bigg)^\frac{1}{2}\bigg(\int_{\erre^n}\frac{|\hat{f}(\bxi)-\hat{g}(\bxi)|^{2}}{|\bxi|^{n+1}}d\bxi\bigg)^\frac{1}{2}\\
    &\le \bigg(\sum_{(\beta_1,\dots,\beta_n)\in J_m} \int_{\erre^n}|\partial^{\beta_1}_{x_1}\dots\partial^{\beta_n}_{x_n}(\phi\chi_R)(\bx)|^2d\bx\bigg)^\frac{1}{2} \mathcal{F}_{n+1}(\mu,\nu)\\
    &\le C R^{\frac{n}{2}}\mathcal{F}_{n+1}(\mu,\nu)
\end{align*}
where $C=C(m,n)$ is a suitable finite constant, and $J_m$ are the multi-index such that $\sum_{j=1}^n\beta_j=m$ where $m = \frac{n+1}{2}$ if $n$ is odd and, and $m= \frac{n+2}{2}$ if $n$ is even. 
Then, by taking the supremum in $\phi$, we conclude that
\begin{equation}
\label{WFourboundR}
   \|f-g\|_m^\star \le CR^{\frac{n}{2}}\mathcal{F}_{n+1}(\mu,\nu)+\frac{M}{R}.
\end{equation}
If we take the derivative with respect to $R$ we have that the minimum of the right hand side is attained when
\[
    R=\bigg(\frac{2M}{nC}\bigg)^{\frac{2}{n+2}}\frac{1}{\mathcal{F}_{n+1}(\mu,\nu)^{\frac{2}{n+2}}}.
\]
By plugging this value in \eqref{WFourboundR}, we find
\begin{equation}
\label{WFourboundR1}
    \|f-g\|_m^\star\le C^{\frac{2}{n+2}}\bigg(\frac{2M}{n}\bigg)^{\frac{n}{n+2}}R^{\frac{n}{2}}\mathcal{F}_{n+1}(\mu,\nu)^{\frac{2}{n+2}}+M^{\frac{n}{n+2}}\bigg(\frac{nC}{2}\bigg)^{\frac{2}{n+2}}\mathcal{F}_{n+1}(\mu,\nu)^{\frac{2}{n+2}},
\end{equation}
which yields
\begin{equation}
   \|f-g\|_m^\star\le C_{M,n}\mathcal{F}_{n+1}(\mu,\nu)^{\frac{2}{n+2}}.
\end{equation}
Substituting the last display into \eqref{eq:CaTo} closes the proof.
\end{proof}


\subsubsection{Comparing the Wasserstein Distances and the Energy Distances}

Lastly, we compare the Wasserstein Distance with the other two divergences.
Since the Energy Distance is equal to a suitable Fourier-based Metric up to a constant, it is sufficient to consider the Energy Distance.
Albeit we have shown that these two divergences are, under the right assumptions, equivalent--thus inducing the same notion of convergence--they exhibit distinct behaviours.
Notice that the properties of the Wasserstein Distance are well-known and comprehensively studied\cite{ambrosio2005gradient,santambrogio2015optimal,villani2009optimal}, we briefly recall them for the sake of completeness.

{\textbf{Analytical Properties.}}
As per the Energy Distance, the Wasserstein Distance is continuous and differentiable for every $p\ge 1$.
However, the principal advantage of the Wasserstein Distance is that it is well-defined whenever the two measures possess a sufficient number of finite moments. 
In contrast, the Energy Distance requires the measures to have equal moments up to a certain order.
%
%

%

%
{\textbf{Topological Properties.}}
Unilike the Energy Distance, the Wasserstein Distance is $1$-scale sensitive for every $p$ and sub-additive for convolutions for $p=2$, hence Zolotarev ideal whenever $p=2$.\cite{CaTo}
In particular, the Wasserstein Distance is not scale invariant and, as per the Energy Distance, it is sub-additive by convolution when $p\ge 4$.

{\textbf{Computational Properties.}}
On the computational side, the Wasserstein Distance presents two major drawbacks:
\begin{enumerate*}[label=(\roman*)]
    \item it is computationally expensive, as it involves solving a linear programming problem\cite{NIPSABGV,LingOkada2007} or computing an approximation through convex-optimization methods\cite{cuturi2013sinkhorn,AJJR}; and
    \item it yields biased gradients, as observed in Ref. \cite{belle}, which is an undesirable trait when dealing with machine learning.
\end{enumerate*}
As demonstrated, these issues do not arise with the Energy Distance.



\section{Scale Invariant Divergences}

None of the divergences examined in this study satisfy the scale invariance property as defined in Definition \ref{def:scaleinvariance}.
However, as argued in Ref. \cite{ABGT}, one can enforce scale invariance on a divergence $\mathcal{D}$ by applying a suitable whitening transformation to the input probability measures as a pre-processing step.
This procedure yields a new whitened divergence that is, by construction, scale invariant. Nevertheless, this transformation may compromise certain desirable properties originally possessed by $\mathcal{D}$.
In this section, we identify what properties of $\mathcal{D}$ are preserved by its whitened counterpart and which ones are not compatible with scale invariant divergences.

\subsection{The Whitening Process}
\label{sec:whitening_processes}
In this section, we recall the definition of whitening process and its properties.
In what follows, we denote with $\mathcal{P}_{\Id}(\erre^n)$ the subset of $\mathcal{P}(\erre^n)$ containing the probability measures whose covariance is the $n\times n$ identity matrix $\Id$.
%
%

\begin{definition}
    A \textit{whitening process} is a function $\mathcal{S}$ that maps any $n$-dimensional random vector $\bX$ to a $n$-dimensional random vector whose covariance is the identity matrix, thus $\mathcal{S}:\mathcal{P}(\erre^n)\to\mathcal{P}_{\Id}(\erre^n)$. 
    A whitening process $\mathcal{S}$ is linear if, for any given $\bX\in\mathcal{P}(\erre^n)$, we have
    \begin{equation}
        \label{whi1}
        {\bX}^{*} = \mathcal{S}(\bX) = W_{\bX}{\bX} = W_{\mu}{\bX},
    \end{equation}
    where $W_\bX=W_\mu$ is a $n\times n$ matrix that depends on $\bX\sim\mu$.
    The matrix $W_\mu$ is also known as whitening matrix (associated with $\mathcal{S}$).\cite{KLS}
\end{definition}

If the covariance matrix of $\bX$, namely $\Sigma_\mu$, is invertible the whitening matrix in \eqref{whi1} must satisfy the identity $W_\mu\Sigma_\mu W_\mu^T = \Id$ and thus $W_\mu (\Sigma_\mu W_\mu^TW_\mu) = W_\mu$, which boils down to
\begin{equation}
    \label{whi2}
    W_\mu^TW_\mu =\Sigma_\mu^{-1}.
\end{equation}

Notice that condition \eqref{whi2} does not determine a unique matrix $W_\mu$.
Indeed, if $W_\mu$ satisfies equation \eqref{whi2}, any matrix of the form $\widetilde W_\mu = Z W_\mu$ with $Z$ such that $Z^TZ=\Id$ also satisfies \eqref{whi2}.
Consequently, there is no unique linear whitening process.\cite{LZ}

An important property that a whitening processes $\mathcal{S}$ should possess is \emph{Scale Stability}, which ensures that the output of $\mathcal{S}$ does not change if the unit of measurement of the input random vector changes.

\begin{definition}[Scale Stability]
    A whitening process $\mathcal{S}$ is \emph{Scale Stable} if
    \begin{equation}
        \label{eq:scale_stab}
        \mathcal{S}({\bX})=\mathcal{S}(Q{\bX}),
    \end{equation}
    for every random vector $\bX$ and for any diagonal matrix $Q=\diag(q_1,q_2,\dots,q_n)$
    with $q_i>0$ for every $i=1,\dots n$.
\end{definition}

A \emph{Scale Stable} whitening processes exist: both the \textit{Cholesky} whitening and the \textit{Zero-Components Analysis} of the correlation matrix (ZCA-cor) whitening possess this property.\cite{AGT}\\

The first whitening process we consider is the\textit{Cholesky Whitening}, which is based on the Cholesky factorization of a positive-definite matrix. 
In this case, the whitening matrix associated with $\bX\sim\mu$ is defined as 
\begin{equation}
\label{Chol}
W_\mu^{Chol} =L^T, 
\end{equation}
where $L$ is the unique lower triangular matrix with positive diagonal values that satisfies \eqref{whi2}. 
Owing to the triangular structure of $W_\mu^{Chol}$ and to the uniqueness of the decomposition, the Cholesky whitening process is Scale Stable.

\begin{theorem}
\label{thm:CholSS}
    The Cholesky whitening process is {Scale Stable}.
\end{theorem}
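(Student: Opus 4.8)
The plan is to reduce everything to the uniqueness of the Cholesky factorization, using the elementary fact that scaling $\bX$ by a positive diagonal matrix conjugates its covariance by the same matrix. First I would recall the definition: assuming $\Sigma_\mu$ is invertible, we have $W_\mu^{Chol} = L^T$, where $L$ is the unique lower triangular matrix with strictly positive diagonal entries such that $L L^T = \Sigma_\mu^{-1}$. This is exactly relation \eqref{whi2} rewritten for $W_\mu^{Chol}$, since $(W_\mu^{Chol})^T W_\mu^{Chol} = L L^T$.

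Next I would compute the covariance of $Q\bX$ for $Q = \diag(q_1,\dots,q_n)$ with $q_i>0$. Because $Q$ is symmetric, $\Sigma_{Q\mu} = Q\,\Sigma_\mu\,Q$, and since each $q_i>0$ this matrix is invertible with $\Sigma_{Q\mu}^{-1} = Q^{-1}\Sigma_\mu^{-1}Q^{-1} = Q^{-1}L L^T Q^{-1} = (Q^{-1}L)(Q^{-1}L)^T$, using that $(Q^{-1}L)^T = L^T Q^{-1}$. The matrix $Q^{-1}L$ is lower triangular, being the product of a diagonal and a lower triangular matrix, and its diagonal entries are $q_i^{-1}L_{ii} > 0$. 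Hence, by uniqueness of the Cholesky factorization of $\Sigma_{Q\mu}^{-1}$, the Cholesky factor associated with $Q\bX$ is precisely $Q^{-1}L$, so that $W_{Q\mu}^{Chol} = (Q^{-1}L)^T = L^T Q^{-1}$.

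Finally I would conclude by a one-line computation: $\mathcal{S}(Q\bX) = W_{Q\mu}^{Chol}(Q\bX) = L^T Q^{-1} Q \bX = L^T \bX = W_\mu^{Chol}\bX = \mathcal{S}(\bX)$, which is exactly the identity \eqref{eq:scale_stab}.

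There is no genuine obstacle here; the proof is a short algebraic manipulation. The only points that require a modicum of care are: (i) verifying that $Q^{-1}L$ satisfies \emph{both} defining properties of a Cholesky factor — lower triangularity (immediate) and strict positivity of the diagonal (which is where $q_i>0$ enters) — so that the uniqueness statement can legitimately be invoked; and (ii) assuming, as in the surrounding discussion, that $\Sigma_\mu$ is invertible, so that $\Sigma_\mu^{-1}$ and its Cholesky factorization are well defined in the first place.
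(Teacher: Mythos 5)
Your argument is correct and complete. Note that the paper does not actually prove Theorem \ref{thm:CholSS} itself --- it simply refers the reader to Ref.~\cite{AGT} --- so your derivation, which reduces scale stability to the uniqueness of the Cholesky factorization via $\Sigma_{Q\mu}^{-1}=(Q^{-1}L)(Q^{-1}L)^{T}$ and the check that $Q^{-1}L$ is lower triangular with strictly positive diagonal, supplies exactly the standard argument that the citation stands in for, including the two points of care (positivity of the $q_i$ and invertibility of $\Sigma_\mu$) needed to invoke uniqueness.
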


\begin{proof}
    We refer the reader to Ref. \cite{AGT} for the proof.
\end{proof}

The second whitening process we consider is the \textit{Zero-Components Analysis} of the correlation matrix (ZCA-cor), which is derived from the correlation matrix associated with the random vector $\bX$.
Given a random vector $\bX$, let $P$ denote the covariance matrix associated with $\bX$.
Then, the ZCA-cor whitening matrix associated with $\bX$ is defined as
\begin{equation}
\label{ZCA-cor}
W_\mu^{ZCA} = P^{-1/2}V^{-1/2},
\end{equation}
where $P^{-1/2}$ is a square root of the inverse matrix of $P$ and 
\[
    V=\diag(Var(X_1),Var(X_2),\dots,Var(X_n))=\begin{pmatrix}
        Var(X_1) &0  &\dots &0\\
        0 &Var(X_2) &\dots &0\\
        \vdots &\vdots &\ddots &\vdots\\
        0 &0 &\dots &Var(X_n)
    \end{pmatrix}
\]
is a diagonal matrix whose diagonal contains the variances of the entries of $\bX$.
%

%
%

%
Again, notice that $P^{-\frac{1}{2}}$ in \eqref{ZCA-cor} is not defined uniquely, thus there are multiple ZCA-cor matrices associated with the same $\bX\in\mathcal{P}(\erre^n)$.
Since the correlation matrix of $\bX$ and $Q\bX$ are the same for every diagonal matrix $Q$, the ZCA-cor whitening process is scale stable once we decide how to compute the square root of $P$.

\begin{theorem}
\label{thm:ZCAcorrSS}
    The ZCA-cor whitening process is {Scale Stable}, regardless of how the square root of $P^{-1}$ is selected.
\end{theorem}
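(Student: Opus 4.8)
The plan is to track how each factor of the ZCA-cor whitening matrix \eqref{ZCA-cor} transforms when $\bX$ is replaced by $Q\bX$, for a diagonal matrix $Q=\diag(q_1,\dots,q_n)$ with all $q_i>0$, and then to observe that the two scalings cancel because every matrix appearing is diagonal. As a preliminary I would assume, exactly as in the discussion around \eqref{whi2}, that $\Sigma_\mu$ is invertible, so that all variances $\mathrm{Var}(X_i)$ are positive and the correlation matrix is well-defined; and I would recall that, in order for \eqref{ZCA-cor} to produce a matrix satisfying \eqref{whi2}, the matrix $P$ there must be the \emph{correlation} matrix of $\bX$, that is, $P=V^{-1/2}\Sigma_\mu V^{-1/2}$.

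First I would record the two transformation rules. Since $Q$ is diagonal, $\mathrm{Var}(q_iX_i)=q_i^2\,\mathrm{Var}(X_i)$, so the diagonal variance matrix of $Q\bX$ is $V_{Q\bX}=Q^2V_{\bX}$, and hence $V_{Q\bX}^{-1/2}=Q^{-1}V_{\bX}^{-1/2}$, the positive square root of a positive diagonal matrix being unambiguous. The decisive point, which I would state as a separate observation, is that the correlation matrix is invariant under positive diagonal rescaling: $\mathrm{Corr}(q_iX_i,q_jX_j)=\mathrm{Corr}(X_i,X_j)$ whenever $q_i,q_j>0$, so the matrix $P$ entering \eqref{ZCA-cor} for $Q\bX$ is literally the same matrix $P$ as for $\bX$. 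Consequently, whatever rule is used to select a square root of $P^{-1}$, it is applied in both cases to one and the same matrix and therefore returns one and the same matrix, which I call $B$; this is precisely why the conclusion holds regardless of how the square root is chosen.

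Then I would assemble the pieces: by \eqref{ZCA-cor}, $W_{Q\bX}^{ZCA}=B\,V_{Q\bX}^{-1/2}=B\,Q^{-1}V_{\bX}^{-1/2}$, so
\[
\mathcal{S}(Q\bX)=W_{Q\bX}^{ZCA}(Q\bX)=B\,Q^{-1}V_{\bX}^{-1/2}\,Q\,\bX ,
\]
and since the diagonal matrices $V_{\bX}^{-1/2}$ and $Q$ commute, $Q^{-1}V_{\bX}^{-1/2}Q=V_{\bX}^{-1/2}$, which gives $\mathcal{S}(Q\bX)=B\,V_{\bX}^{-1/2}\bX=W_{\bX}^{ZCA}\bX=\mathcal{S}(\bX)$, which is \eqref{eq:scale_stab}. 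I do not expect a genuine obstacle: the only two points requiring care are recognising that $P$ in \eqref{ZCA-cor} is the correlation matrix --- the object that is unaffected by the rescaling and that drives the whole argument --- and noting that the non-uniqueness of $P^{-1/2}$ is harmless precisely because the square root is extracted from $P$, which does not see $Q$. Everything else is the commutativity of diagonal matrices, just as for the Cholesky whitening in Theorem \ref{thm:CholSS}.
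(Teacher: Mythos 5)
Your proof is correct and follows essentially the same route as the paper's: both arguments rest on the invariance of the correlation matrix under positive diagonal rescaling (so the chosen square root of $P^{-1}$ is extracted from the same matrix in both cases), the identity $V_{Q\bX}^{-1/2}=Q^{-1}V_{\bX}^{-1/2}$, and the commutativity of diagonal matrices to cancel the factors of $Q$. Your added remark that the $P$ in \eqref{ZCA-cor} must be read as the correlation matrix (despite the surrounding text calling it the covariance matrix) is a fair clarification of the paper's setup, but it does not change the substance of the argument.
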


\begin{proof}
    Let us fix $IS$ a function that assigns to each matrix one of the possible square roots, so that $IS(M)IS(M)=M$ for any invertible matrix $M$, and define
    \[
        \mathcal{S}(\bX)=IS(P^{-1})V^{-\frac{1}{2}}\bX.
    \]
    Given $Q=\diag(q_1,q_2,\dots,q_n)$ a diagonal matrix whose diagonal values are positive, we have that the correlation matrix of $Q\bX$ and the correlation matrix of $\bX$ coincide.
    In particular, we have that
    \[
        \mathcal{S}(Q\bX)=IS(P^{-1})V^{-\frac{1}{2}}Q^{-1}(Q\bX)=IS(P^{-1})V^{-\frac{1}{2}}\bX=\mathcal{S}(\bX),
    \]
    hence $\mathcal{S}$ is scale stable.
\end{proof}

Theorem \ref{thm:ZCAcorrSS} allows us some degree of freedom in defining the ZCA-cor whitening process.
In the remainder of the paper, we consider any ZCA-cor whitening process for which it holds $W^{ZCA}_\mu=\Id$ whenever $\mu\in\PP_{\Id}(\erre^n)$.
Notice that the Cholesky whitening process naturally possesses this property, that is $W^{Chol}_\mu=\Id$ if $\mu\in\PP_{\Id}(\erre^n)$.
Now that we have established that a scale stable whitening process always exists, we are ready to introduce the notion of whitened divergence.

\begin{definition}
    Let $\mathcal{D}:\PP(\erre^n)\times\PP(\erre^n)\to[0,\infty)$ be a divergence and let $\mathcal{S}$ be a whitening process. 
    We define the whitened divergence $\DDSS$ associated with $\mathcal{S}$ as
    \[
    \DDSS(\bX,\bY)=\mathcal{D}(\mathcal{S}(\bX),\mathcal{S}(\bY)).
    \]
\end{definition}


\begin{theorem}
    \label{thm:scale_invariance_via_whitening}
    If $\mathcal{S}$ is scale stable, then $\DDSS$ is scale invariant.
\end{theorem}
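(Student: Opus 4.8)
The plan is to unwind the two definitions directly. Fix a pair of random vectors $\bX,\bY$ and two diagonal matrices $Q=\diag(q_1,\dots,q_n)$ and $Q'=\diag(q_1',\dots,q_n')$ with $q_i,q_i'>0$ for all $i$. By the definition of the whitened divergence,
\[
\DDSS(Q\bX,Q'\bY)=\mathcal{D}\bigl(\mathcal{S}(Q\bX),\mathcal{S}(Q'\bY)\bigr).
\]
Since $\mathcal{S}$ is scale stable, applying \eqref{eq:scale_stab} to $\bX$ with the matrix $Q$ and to $\bY$ with the matrix $Q'$ gives $\mathcal{S}(Q\bX)=\mathcal{S}(\bX)$ and $\mathcal{S}(Q'\bY)=\mathcal{S}(\bY)$. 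Substituting these equalities into the displayed line yields
\[
\DDSS(Q\bX,Q'\bY)=\mathcal{D}\bigl(\mathcal{S}(\bX),\mathcal{S}(\bY)\bigr)=\DDSS(\bX,\bY),
\]
which is exactly the scale invariance identity \eqref{eq:scaleinvariance} of Definition~\ref{def:scaleinvariance}.

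There is no analytic obstacle here: scale invariance of $\DDSS$ is a purely formal consequence of scale stability of $\mathcal{S}$, and it requires no hypothesis on $\mathcal{D}$ beyond being a divergence, i.e.\ a map $\PP(\erre^n)\times\PP(\erre^n)\to[0,\infty)$ (in particular, linearity of the whitening process is not needed, since scale stability is defined for arbitrary whitening processes). The one point worth emphasising is that this statement is only meaningful because a scale-stable whitening process genuinely exists; that content is exactly what the earlier theorems on the Cholesky whitening (Theorem~\ref{thm:CholSS}) and the ZCA-cor whitening (Theorem~\ref{thm:ZCAcorrSS}) provide, and that is where the substantive work lies.
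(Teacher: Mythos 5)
Your proof is correct and is essentially identical to the paper's: both simply unwind the definition of $\DDSS$, apply the scale stability identity $\mathcal{S}(Q\bX)=\mathcal{S}(\bX)$ and $\mathcal{S}(Q'\bY)=\mathcal{S}(\bY)$ separately to each argument, and conclude. Your added remarks (that no hypothesis on $\mathcal{D}$ is needed and that the substantive content lies in the existence results for scale-stable whitenings) are accurate but not part of the paper's argument.
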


\begin{proof}
    Let $Q$ and $Q'$ be two diagonal matrix with positive diagonal values.
    Then, given two random vectors $\bX$ and $\bY$, we have
    \[
        \DDSS(Q\bX,Q'\bY)=\mathcal{D}(\mathcal{S}(Q\bX),\mathcal{S}(Q'\bY))=\mathcal{D}(\mathcal{S}(\bX),\mathcal{S}(\bY))=\DDSS(\bX,\bY),
    \]
    which concludes the proof.
\end{proof}

For the sake of simplicity, in what follows we consider the ZCA-cor whitening process, unless we specify otherwise.
Notice however, that all our conclusions can be equally inferred for the Cholesky whitening process.

\begin{remark}
    %
    %
    By selecting an appropriate Energy Distance, it is possible to extend the connection between the Energy Distance and the Gini Index highlighted in Remark \ref{rmrk:gini_energy_onedim}.
    In this remark we limit our study to two cases, which correspond to the classic Energy Distance $\mathcal{E}_1$ (defined as in \eqref{en-dist}) and the $l_1$ Energy Distance $\mathcal{E}^{l_1}_1$ (studied in Ref. \cite{AGTB:IJCNN}).
    \textbf{Connection between the Gini Index and $\mathcal{E}_1$.}
    The first higher dimensional extension of the Gini Index we consider is $\mathcal{G}_T$, introduced in Ref. \cite{giudici2024measuring}, and defined as  
    \begin{equation}
        \mathcal{G}_T(\bX)=\frac{1}{2\sqrt{\bm_{\bX}^T\Sigma^{-1}_\mu\bm_{\bX}}}\int_{\erre^n\times\erre^n}\sqrt{(\bx-\bx')^T\Sigma^{-1}(\bx-\bx')}d\mu(\bx)d\mu(\bx').
    \end{equation}
    Within this framework, given $\bX\sim\mu$ and $\bY\sim\nu$ two random vectors, we have
    \begin{align*}
        \mathcal{E}_{1,ZCA}(\bX,\bY)&=2\GMD(W^{ZCA}_\mu\bX^*,W^{ZCA}_\nu\bY^*)\\
        &\quad-2\sqrt{\bm_{\bX}^T\Sigma^{-1}_\mu\bm_{\bX}}\mathcal{G}_T(\bX)-2\sqrt{\bm_{\bY}^T\Sigma^{-1}_\nu\bm_{\bY}}\mathcal{G}_T(\bY),
    \end{align*}
    where $\GMD$ is the Gini Mean Difference for higher dimensional random variables and $W^{ZCA}_\gamma$ is the ZCA whitening matrix associated with $\gamma$.

    \textbf{Connection between the Gini Index and $\mathcal{E}^{l_1}_1$.} 
    The second higher dimensional extension of the Gini Index we consider is $\mathcal{G}_{l_1}$, introduced in Ref. \cite{AGT}, and defined as
    \begin{equation}
        \mathcal{G}_{l_1}(\bX)=\frac{1}{2|W^{ZCA}_\mu\bm_\mu|_{l_1}}\int_{\erre^n\times\erre^n}|W^{ZCA}_\mu(\bx-\bx')|_{l_1}d\mu(\bx)d\mu(\bx'),
    \end{equation}
    where $|\bx|_{l_1}$ is the $l_1$ norm of the vector $\bx$, that is $|\bx|_{l_1}=\sum_{i=1}^n|x_i|$, and $W^{ZCA}_\mu$ is the ZCA whitening matrix associated with the probability distribution $\mu$.
    If we now consider the Energy Distance induced by the $l_1$, that is
    \begin{align*}
        \mathcal{E}^{l_1}_1(\bX,\bY)=&2\int_{\R^{2n}}|\bx-\by|_{l_1} d\mu(\bx) d\nu(\by)\\
        &\;\;- \int_{\R^{2n}}|\bx-\by|_{l_1} d\mu(\bx) d\mu(\by) - \int_{\R^{2n}}|\bx-\by|_{l_1} d\nu(\bx) d\nu(\by),
    \end{align*}
    we then have that
    \begin{equation}
        \mathcal{E}^{l_1}_{1,ZCA}(\bX,\bY)=2\sum_{i=1}^n\GMD(X^*_i,Y^*_i)-2|W_\mu^{ZCA}\bm_\bX|_{l_1}\mathcal{G}_{l_1}(\mu)-2|W_\nu^{ZCA}\bm_\bY|_{l_1}\mathcal{G}_{l_1}(\nu),
    \end{equation}
    where $\GMD(X^*_i,Y^*_i)$ is the one dimensional Gini Mean Difference between the whitened components of the random vectors $\bX$ and $\bY$.
\end{remark}

\subsection{The Relation between Divergences and Whitened Divergence}

Theorem \ref{thm:scale_invariance_via_whitening} gives us a direct way to retrieve a scale invariant divergence starting from any divergence function $\mathcal{D}$.
In this section, we study whether the properties of the original divergence $\mathcal{D}$ are preserved by its whitened counterpart.

\subsubsection{The Analytic Properties}

First, we study the metric properties of the divergence $\DDSS$.
It is easy to see that any whitened divergence cannot be a metric between probability distributions.
%
%
Indeed, by definition, we have $\DDSS(\bX,c\bX)=0$ for every $c>0$ hence,
%
%
in general, a scale invariant metric cannot satisfy the first axiom that defines a metric.
Albeit $\DDSS$ is not a metric, we show that it is a metric when the divergence $\mathcal{D}$ is a metric and we restrict $\DDSS$ to a suitable quotient space of the probability measures.
Indeed, any whitening process $\mathcal{S}$ induces a equivalence relation on the space of probability distributions $\PP(\erre^n)$.
In what follows, we say that $\bX\simS\bY$ if we have that $\mathcal{S}(\bX)=\mathcal{S}(\bY)$.

\begin{theorem}
    \label{thm:equivsims}
    Given a whitening process $\mathcal{S}$, the relation $\simS$ induces a equivalence relation on $\PP(\erre^n)$ if 
    \begin{equation}
        \label{eq:stable_whitening}
        \mathcal{S}(\bX)=\bX
    \end{equation}
    for every $\bX\in\PP_{\Id}(\erre^n)$. Moreover, we have that $\PP(\erre^n)/\simS=\PP_{\Id}(\erre^n)$.
\end{theorem}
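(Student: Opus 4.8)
The plan is to produce the quotient map induced by $\mathcal{S}$ and show it is a bijection onto $\PP_{\Id}(\erre^n)$. First I would dispose of the easy half: by its very definition the relation $\bX\simS\bY$, meaning $\mathcal{S}(\bX)=\mathcal{S}(\bY)$, is the pullback of the equality relation along the function $\mathcal{S}$, hence automatically reflexive, symmetric and transitive --- no hypothesis is needed for this. The substance of the statement is therefore the identification of the quotient, and this is where hypothesis \eqref{eq:stable_whitening} enters.

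The key steps, in order, are as follows. \emph{(i)} Since $\mathcal{S}$ takes values in $\PP_{\Id}(\erre^n)$, for every $\bX$ the measure $\mathcal{S}(\bX)$ lies in $\PP_{\Id}(\erre^n)$; applying \eqref{eq:stable_whitening} to $\mathcal{S}(\bX)\in\PP_{\Id}(\erre^n)$ gives $\mathcal{S}(\mathcal{S}(\bX))=\mathcal{S}(\bX)$, i.e.\ $\mathcal{S}$ is idempotent. In particular $\mathcal{S}(\bX)\simS\bX$, so the equivalence class of $\bX$ contains at least one representative --- namely $\mathcal{S}(\bX)$ --- lying in $\PP_{\Id}(\erre^n)$. \emph{(ii)} Conversely, if $\mu,\mu'\in\PP_{\Id}(\erre^n)$ and $\mu\simS\mu'$, then $\mathcal{S}(\mu)=\mathcal{S}(\mu')$, while \eqref{eq:stable_whitening} gives $\mathcal{S}(\mu)=\mu$ and $\mathcal{S}(\mu')=\mu'$, whence $\mu=\mu'$. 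So each class contains \emph{exactly} one element of $\PP_{\Id}(\erre^n)$. \emph{(iii)} Combining \emph{(i)}--\emph{(ii)}, the assignment $[\bX]\mapsto\mathcal{S}(\bX)$ is a well-defined bijection from $\PP(\erre^n)/\simS$ onto $\PP_{\Id}(\erre^n)$: well-definedness is immediate from the definition of $\simS$, surjectivity is \emph{(i)} together with \eqref{eq:stable_whitening}, and injectivity is \emph{(ii)}. Identifying the quotient with this canonical system of representatives yields $\PP(\erre^n)/\simS=\PP_{\Id}(\erre^n)$.

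I do not anticipate a genuine obstacle here, as the argument is essentially formal; the one point worth care is that \eqref{eq:stable_whitening} is invoked twice --- once to make $\mathcal{S}$ idempotent, so that every fiber meets $\PP_{\Id}(\erre^n)$, and once to force $\mathcal{S}$ to restrict to the identity on $\PP_{\Id}(\erre^n)$, so that every fiber meets $\PP_{\Id}(\erre^n)$ in a single point. It is also worth flagging explicitly that without \eqref{eq:stable_whitening} one still obtains an equivalence relation, but the quotient could be a proper sub-collection of $\PP_{\Id}(\erre^n)$, and that the equality ``$\PP(\erre^n)/\simS=\PP_{\Id}(\erre^n)$'' is to be read as the canonical bijection just constructed rather than a literal set-theoretic identity.
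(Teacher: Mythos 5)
Your proof is correct and follows essentially the same route as the paper's: identify $\mathcal{S}(\bX)$ as the canonical representative of $[\bX]$ in $\PP_{\Id}(\erre^n)$, and use \eqref{eq:stable_whitening} to show distinct elements of $\PP_{\Id}(\erre^n)$ lie in distinct classes. You are in fact slightly more careful than the paper on two points --- you make explicit the idempotency step $\mathcal{S}(\mathcal{S}(\bX))=\mathcal{S}(\bX)$ that justifies $\mathcal{S}(\bX)\simS\bX$, and you correctly observe that the equivalence-relation claim itself needs no hypothesis at all, whereas the paper attributes it to \eqref{eq:stable_whitening} --- but these are refinements of the same argument, not a different one.
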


\begin{proof}
    Owing to the fact that $\mathcal{S}(\bX)=\bX$ for every $\bX\in\PP_{\Id}$, the first statement of the Theorem follows trivially.
    Let us consider a class of equivalence of $\PP(\erre^n)/\simS$, namely $[\bZ]$.
    By definition of $\mathcal{S}$, we have that $\mathcal{S}(\bZ)\in\PP_{\Id}(\erre^n)$.
    Moreover, we have that $\mathcal{S}(\bZ)\simS\bZ$, hence $\mathcal{S}(\bZ)$ is a representative of the equivalence class to which $\bZ$ belongs.
    To conclude, we show that two different elements of $\PP_{\Id}(\erre^n)$, namely $\bX$ and $\bX'$, induce two different classes of equivalence as long as $\bX\neq\bX'$.
    By the hypothesis on $\mathcal{S}$, we have that $\mathcal{S}(\bX)=\bX$ whenever $\bX\in\PP_{\Id}(\erre^n)$.
    Thus we have that $\mathcal{S}(\bX)=\bX\neq\bX'=\mathcal{S}(\bX')$, which concludes the proof.
\end{proof}

%



\begin{theorem}
    Let $\mathcal{D}:\PP(\erre^n)\times\PP(\erre^n)\to[0,\infty)$ be a metric between probability distribution and let $\mathcal{S}$ be a scale stable whitening process such that $\mathcal{S}(\bX)=\bX$ for every $\bX\in\PP_{\Id}(\erre^n)$.
    We then have that $\DDSS$ is a metric over $\PP(\erre^n)/\simS$.
\end{theorem}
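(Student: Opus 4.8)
The plan is to reduce the statement to the elementary fact that a restriction of a metric to a subset is still a metric. Concretely, Theorem~\ref{thm:equivsims} identifies the quotient $\PP(\erre^n)/\simS$ with $\PP_{\Id}(\erre^n)$ (the hypothesis $\mathcal{S}(\bX)=\bX$ on $\PP_{\Id}(\erre^n)$ is exactly what is needed there), with the canonical representative of a class $[\bZ]$ being $\mathcal{S}(\bZ)\in\PP_{\Id}(\erre^n)$. Under this identification I will show that $\DDSS$ agrees with the restriction of $\mathcal{D}$ to $\PP_{\Id}(\erre^n)\times\PP_{\Id}(\erre^n)$, and then the three metric axioms follow from those of $\mathcal{D}$.

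First I would check that $\DDSS$ descends to a well-defined function on the quotient: if $\bX\simS\bX'$ and $\bY\simS\bY'$, then by definition $\mathcal{S}(\bX)=\mathcal{S}(\bX')$ and $\mathcal{S}(\bY)=\mathcal{S}(\bY')$, so
\[
\DDSS(\bX,\bY)=\mathcal{D}(\mathcal{S}(\bX),\mathcal{S}(\bY))=\mathcal{D}(\mathcal{S}(\bX'),\mathcal{S}(\bY'))=\DDSS(\bX',\bY').
\]
Hence $\DDSS$ induces a map on $(\PP(\erre^n)/\simS)\times(\PP(\erre^n)/\simS)$. Evaluating it on the canonical representatives, which lie in $\PP_{\Id}(\erre^n)$ where $\mathcal{S}$ acts as the identity, gives $\DDSS(\bX,\bY)=\mathcal{D}(\bX,\bY)$ for all $\bX,\bY\in\PP_{\Id}(\erre^n)$; that is, on the quotient the whitened divergence is precisely $\mathcal{D}\big|_{\PP_{\Id}(\erre^n)\times\PP_{\Id}(\erre^n)}$.

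It then remains to verify the three axioms for this restriction, all inherited from $\mathcal{D}$. Non-degeneracy: for $\bX,\bY\in\PP_{\Id}(\erre^n)$ we have $\DDSS(\bX,\bY)=\mathcal{D}(\bX,\bY)=0$ if and only if $\bX=\bY$, i.e. if and only if $[\bX]=[\bY]$ in the quotient. Symmetry is immediate from symmetry of $\mathcal{D}$. For the triangle inequality, take classes with $\PP_{\Id}(\erre^n)$-representatives $\bX,\bY,\bZ$; applying the triangle inequality axiom for $\mathcal{D}$ with intermediate point $\zeta=\bZ\in\PP(\erre^n)$ yields $\mathcal{D}(\bX,\bY)\le\mathcal{D}(\bX,\bZ)+\mathcal{D}(\bZ,\bY)$, which is exactly $\DDSS([\bX],[\bY])\le\DDSS([\bX],[\bZ])+\DDSS([\bZ],[\bY])$.

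I do not expect a genuine obstacle here. The only point that needs care is the identification of $\PP(\erre^n)/\simS$ with $\PP_{\Id}(\erre^n)$, which is already supplied by Theorem~\ref{thm:equivsims}; the rest is the observation that the triangle inequality for $\mathcal{D}$, which allows an arbitrary intermediate measure, holds a fortiori when the intermediate measure is required to lie in $\PP_{\Id}(\erre^n)$.
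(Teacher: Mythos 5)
Your proposal is correct and follows essentially the same route as the paper's proof: both invoke Theorem~\ref{thm:equivsims} to identify $\PP(\erre^n)/\simS$ with $\PP_{\Id}(\erre^n)$ and then observe that, on canonical representatives, $\DDSS$ coincides with the restriction of $\mathcal{D}$ to $\PP_{\Id}(\erre^n)$, from which the metric axioms are inherited. Your write-up is merely more explicit about well-definedness on the quotient and the axiom-by-axiom check, which the paper leaves implicit.
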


\begin{proof}
    By Theorem \ref{thm:equivsims}, we have $\PP(\erre^n)/\sim_\mathcal{S}=\PP_{\Id}(\erre^n)$.
    Given two classes of equivalence $[\bX],[\bY]\in\PP(\erre^n)/\sim_\mathcal{S}$, we then have
    \[
        \DDSS([\bX],[\bY])=\mathcal{D}(\mathcal{S}(\bX),\mathcal{S}(\bY))=\mathcal{D}(\bX,\bY).
    \]
    In particular, we infer that $\DDSS$ on $\PP(\erre^n)/\sim_\mathcal{S}$ is equal to $\mathcal{D}$ restricted to $\PP_{\Id}(\erre^n)$, hence the thesis.
\end{proof}

\subsubsection{The Topological Properties} 

In this section, we consider the topological properties of the whitened divergence $\DDSS$ and show that the equivalence between two divergences is preserved by the whitening pre-processing.

\begin{theorem}
    If $\mathcal{D}$ and $\mathcal{D}'$ are equivalent, then also $\DDSS$ and $\DDSS'$ are equivalent.
    In particular, if $c_l\mathcal{D}^s(\bX,\bY)\le \mathcal{D}'(\bX,\bY)\le c_u\mathcal{D}^r(\bX,\bY)$ then, for every scale stable whitening process we have
    \begin{equation}
        c_l\DDSS^s(\bX,\bY)\le \DDSS'(\bX,\bY)\le c_u\DDSS^r(\bX,\bY).
    \end{equation}
\end{theorem}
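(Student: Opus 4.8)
The plan is to observe that the statement reduces, essentially immediately, to the defining property of equivalence (Definition~\ref{def:equivalence}) applied to one particular pair of random vectors. Recall that, by construction, $\DDSS(\bX,\bY)=\mathcal{D}(\mathcal{S}(\bX),\mathcal{S}(\bY))$ and $\DDSS'(\bX,\bY)=\mathcal{D}'(\mathcal{S}(\bX),\mathcal{S}(\bY))$ for every pair of random vectors $\bX,\bY$ on $\erre^n$; so the whitening acts on the two arguments \emph{before} either divergence is evaluated.

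First I would fix an arbitrary pair $\bX,\bY$ and set $\bU:=\mathcal{S}(\bX)$ and $\bV:=\mathcal{S}(\bY)$, which are again random vectors on $\erre^n$ (with covariance the identity matrix, since $\mathcal{S}$ maps into $\PP_{\Id}(\erre^n)$). The hypothesis that $\mathcal{D}$ and $\mathcal{D}'$ are equivalent is a statement quantified over \emph{all} pairs of random vectors, so it may in particular be applied to the pair $(\bU,\bV)$, giving
\[
c_l[\mathcal{D}(\bU,\bV)]^s \le \mathcal{D}'(\bU,\bV) \le c_u[\mathcal{D}(\bU,\bV)]^r.
\]
Substituting back $\mathcal{D}(\bU,\bV)=\DDSS(\bX,\bY)$ and $\mathcal{D}'(\bU,\bV)=\DDSS'(\bX,\bY)$ yields exactly
\[
c_l[\DDSS(\bX,\bY)]^s \le \DDSS'(\bX,\bY) \le c_u[\DDSS(\bX,\bY)]^r,
\]
and since $\bX,\bY$ were arbitrary this shows that $\DDSS$ and $\DDSS'$ are equivalent with the very same constants $c_l,c_u$ and exponents $s,r$.

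There is no genuine obstacle here: the only point worth a word of care is that $\mathcal{S}$ is a well-defined function $\PP(\erre^n)\to\PP_{\Id}(\erre^n)$, so that $\bU$ and $\bV$ are bona fide random vectors to which the universal hypothesis applies. Note in passing that scale stability of $\mathcal{S}$ is not actually needed for this particular statement (it is what makes $\DDSS$ itself scale invariant, cf.\ Theorem~\ref{thm:scale_invariance_via_whitening}); the same one-line argument shows, more generally, that \emph{any} two-sided inequality between $\mathcal{D}$ and $\mathcal{D}'$ that holds for all pairs of probability measures is inherited verbatim by $\DDSS$ and $\DDSS'$, for any whitening process whatsoever.
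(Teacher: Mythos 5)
Your proof is correct and is essentially identical to the paper's own argument: both simply apply the equivalence hypothesis to the whitened pair $(\mathcal{S}(\bX),\mathcal{S}(\bY))$ and substitute back using the definition of $\DDSS$. Your side remark that scale stability is not actually needed here is accurate, though the paper does not comment on it.
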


\begin{proof}
    By definition, we have that
    \begin{align*}
        \DDSS'(\bX,\bY)&=\mathcal{D}'(\mathcal{S}(\bX),\mathcal{S}(\bY))\le c_u\mathcal{D}^r(\mathcal{S}(\bX),\mathcal{S}(\bY))=c_u\DDSS^r(\bX,\bY).
    \end{align*}
    Likewise, it is possible to show $c_l\DDSS^s(\bX,\bY)\le\DDSS'(\bX,\bY)$.
\end{proof}

Lastly, notice that each $\DDSS$ is scale invariant thus cannot be scale sensitive, however the sub-additivity by convolution is preserved if $\mathcal{D}$ is Zolotarev ideal.

\begin{theorem}
    \label{thm:ZI}
    Let $\mathcal{S}$ be the ZCA-cor or the Cholesky whitening process.
    If $\mathcal{D}$ is Zoloratev ideal, then $\DDSS$ restricted over $\PP_{\Id}(\erre^n)$ satisfies the following condition
    \begin{equation}
        \label{eq:simple_subadd}
        \DDSS(\bX+\bZ,\bY+\bW)\le \DDSS(\bX,\bY) + \DDSS(\bZ,\bW)
    \end{equation}
    for every quadruplet of random vectors $\bX,\bY,\bZ$, and $\bW$ such that $\bX$ and $\bZ$ are independent, $\bY$ and $\bW$ are independent.
\end{theorem}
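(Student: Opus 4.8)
The plan is to reduce the claim to two ingredients: the explicit form of the whitening matrix on a measure of covariance $2\Id$, and the two defining properties of a Zolotarev ideal divergence ($1$-scale sensitivity and sub-additivity by convolution). First I would record the covariances in play: since $\bX$ is independent of $\bZ$ and both lie in $\PP_{\Id}(\erre^n)$, the law of $\bX+\bZ$ has covariance $\Id+\Id=2\Id$, and likewise the law of $\bY+\bW$ has covariance $2\Id$; both are invertible, so $\mathcal{S}(\bX+\bZ)$ and $\mathcal{S}(\bY+\bW)$ are well defined.

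The one computation that needs care is the value of the whitening matrix on a measure $\rho$ with covariance $2\Id$. For the Cholesky process, \eqref{whi2} reads $LL^{T}=(2\Id)^{-1}=\tfrac12\Id$, whose unique lower-triangular solution with positive diagonal is $L=\tfrac{1}{\sqrt{2}}\Id$, so $W^{Chol}_\rho=L^{T}=\tfrac{1}{\sqrt{2}}\Id$. For the ZCA-cor process, such a $\rho$ has correlation matrix $P=\Id$ and diagonal-of-variances $V=2\Id$, hence $W^{ZCA}_\rho=P^{-1/2}V^{-1/2}=\tfrac{1}{\sqrt{2}}\Id$, where I use that the chosen square-root selection is normalised so that $W_\mu=\Id$ on $\PP_{\Id}(\erre^n)$ (equivalently $IS(\Id)=\Id$), as fixed in Section~\ref{sec:whitening_processes}. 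In either case $\mathcal{S}(\bX+\bZ)=\tfrac{1}{\sqrt{2}}(\bX+\bZ)$ and $\mathcal{S}(\bY+\bW)=\tfrac{1}{\sqrt{2}}(\bY+\bW)$.

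With this in hand the rest is a short chain. Using the previous identity, then $1$-scale sensitivity of $\mathcal{D}$, then sub-additivity by convolution of $\mathcal{D}$,
\[
\DDSS(\bX+\bZ,\bY+\bW)=\mathcal{D}\!\left(\tfrac{1}{\sqrt{2}}(\bX+\bZ),\tfrac{1}{\sqrt{2}}(\bY+\bW)\right)=\tfrac{1}{\sqrt{2}}\,\mathcal{D}(\bX+\bZ,\bY+\bW)\le \tfrac{1}{\sqrt{2}}\bigl(\mathcal{D}(\bX,\bY)+\mathcal{D}(\bZ,\bW)\bigr).
\]
Finally, because $\bX,\bY,\bZ,\bW\in\PP_{\Id}(\erre^n)$ we have $\mathcal{S}(\bX)=\bX$ and the analogous identities for the other three, so $\mathcal{D}(\bX,\bY)=\DDSS(\bX,\bY)$ and $\mathcal{D}(\bZ,\bW)=\DDSS(\bZ,\bW)$; substituting and using $\tfrac{1}{\sqrt{2}}<1$ gives \eqref{eq:simple_subadd}.

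I do not expect a real obstacle: the delicate point is simply the explicit evaluation of $W^{Chol}$ and $W^{ZCA}$ on covariance $2\Id$, and in particular checking that the non-uniqueness of the matrix square root in the ZCA-cor case is harmless once the normalisation $W_\mu=\Id$ on $\PP_{\Id}(\erre^n)$ is imposed. It is worth emphasising that the gain from equality to a strict factor $\tfrac{1}{\sqrt{2}}$ is precisely the effect of re-whitening the over-dispersed sum back to identity covariance, which is what makes the ``plain'' sub-additivity \eqref{eq:simple_subadd} hold even though $\DDSS$, being scale invariant, can never be scale sensitive.
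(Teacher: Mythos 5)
Your proposal is correct and follows essentially the same route as the paper's proof: identify the covariance of the sums as $2\Id$, observe that both whitening processes then act as multiplication by $\tfrac{1}{\sqrt{2}}\Id$, and combine $1$-scale sensitivity with sub-additivity by convolution before discarding the factor $\tfrac{1}{\sqrt{2}}<1$. The only cosmetic difference is that you apply scale sensitivity before sub-additivity while the paper does the reverse, and you spell out the computation of $W^{Chol}$ and $W^{ZCA}$ on covariance $2\Id$ that the paper merely asserts.
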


\begin{proof}
    Let $\bX,\bY,\bZ$, and $\bW$ be four random vectors such that $\bX$ and $\bZ$ are independent, $\bY$ and $\bW$ are independent.
    Since we are working on the quotient space $\PP(\erre^n)/\sim_\mathcal{S}=\PP_{\Id}(\erre^n)$, the covariance matrices of the four random vectors are the identity matrix.
    Owing to the independence assumption, we have that $\bX+\bZ\sim\eta$ has a covariance matrix equal to $2\Id$, thus we have that $W^{ZCA}_\eta=W^{Chol}_\eta=\frac{1}{\sqrt{2}}\Id$.
    Likewise, we have that both the Cholesky and the ZCA-cor whitening matrix associated with $\bY+\bW$ is equal to $\frac{1}{\sqrt{2}}\Id$.
   Since $\mathcal{D}$ is Zolotarev ideal, we have that
    \begin{align*}
        \DDSS(\bX+\bZ,\bY+\bW) &= \mathcal{D}\Big(\mathcal{S}(\bX+\bZ),\mathcal{S}(\bY+\bW)\Big) = \mathcal{D}\Big(\frac{1}{\sqrt{2}}(\bX+\bZ),\frac{1}{\sqrt{2}}(\bY+\bW)\Big)\\
        &\le \mathcal{D}\Big(\frac{1}{\sqrt{2}}\bX,\frac{1}{\sqrt{2}}\bY\Big) + \mathcal{D}\Big(\frac{1}{\sqrt{2}}\bZ,\frac{1}{\sqrt{2}}\bW\Big)\\
        &=\frac{1}{\sqrt{2}}\Big(\mathcal{D}(\bX,\bY) + \mathcal{D}(\bZ,\bW)\Big)\le \mathcal{D}(\bX,\bY) + \mathcal{D}(\bZ,\bW)\\
        &=\DDSS(\bX,\bY)+\DDSS(\bZ,\bW),
    \end{align*}
    which concludes the proof.
\end{proof}

\begin{remark}
    Notice that the proof of Theorem \ref{thm:ZI} works even under the assumption that $\mathcal{D}$ is sub-additive by convolution and it is $p$-scale sensitive for any $p\ge 0$.
\end{remark}

\subsubsection{The computational Properties}

To conclude, we consider the computational properties of the whitened divergences.
First, we show that if the divergence $\mathcal{D}$ has an unbiased gradient, the also $\DDSS$ has unbiased gradient \eqref{def:unbiasedgrad}.
%

\begin{theorem}
\label{thm:preserveunbiasedgrad}
    Let $\mathcal{D}:\PP(\erre^n)\times\PP(\erre^n)\to[0,\infty)$ be a divergence with unbiased gradient.
    Let $\mathcal{S}$ be a scale stable whitening transformation that preserves differentiability; that is, for any differentiable parametric family of probability distributions $\{\nu_\theta\}_{\theta \in I}$, the image family through $\mathcal{S}$, namely $\{\nu_\theta^*\}_{\theta \in I}$, is also differentiable with respect to $\theta$.
    Then, the whitened divergence $\DDSS$ has unbiased gradient.
\end{theorem}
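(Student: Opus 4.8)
The plan is to reduce the statement to the unbiased-gradient property of $\mathcal{D}$ applied to the whitened target family $\{\mathcal{S}(\nu_\theta)\}_{\theta\in I}$, exploiting the fact that $\mathcal{S}$ acts only as a fixed linear change of variables.

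I would fix $\theta\in I$ and let $\bX^{(1)},\dots,\bX^{(N)}$ be $N$ independent samples of a probability distribution $\mu$, with empirical measure $\mu_N=\frac1N\sum_{i=1}^N\delta_{\bX^{(i)}}$. Unwinding the definitions, $\DDSS(\mu_N,\nu_\theta)=\mathcal{D}\big(\mathcal{S}(\mu_N),\mathcal{S}(\nu_\theta)\big)$, and the parameter $\theta$ enters only through the second slot. The key observation is that the linear whitening commutes with the passage to the empirical measure: with $W_\mu$ as in \eqref{whi1}, the whitened samples $W_\mu\bX^{(1)},\dots,W_\mu\bX^{(N)}$ are $N$ independent draws from $\mathcal{S}(\mu)=(W_\mu)_\#\mu$, and their empirical measure is exactly $(W_\mu)_\#\mu_N=\mathcal{S}(\mu_N)$. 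Writing $\widetilde\mu:=\mathcal{S}(\mu)$ and letting $\widetilde\mu_N$ denote the empirical measure obtained from $N$ i.i.d. samples of $\widetilde\mu$, we thus obtain the pointwise identity $\DDSS(\mu_N,\nu_\theta)=\mathcal{D}(\widetilde\mu_N,\mathcal{S}(\nu_\theta))$.

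Next, I would use the hypotheses on $\mathcal{S}$. Since $\mathcal{S}$ preserves differentiability, $\{\nu_\theta^*:=\mathcal{S}(\nu_\theta)\}_{\theta\in I}$ is again a differentiable parametric family, so the unbiased-gradient property of $\mathcal{D}$ (Definition~\ref{def:unbiasedgrad}) applies to this family with underlying distribution $\widetilde\mu$:
\[
\mathbb{E}_{\bX^{(1)},\dots,\bX^{(N)}}\big[\nabla_\theta\mathcal{D}(\widetilde\mu_N,\nu_\theta^*)\big]=\nabla_\theta\mathcal{D}(\widetilde\mu,\nu_\theta^*).
\]
By the identity above, the left-hand side equals $\mathbb{E}[\nabla_\theta\DDSS(\mu_N,\nu_\theta)]$, and the right-hand side equals $\nabla_\theta\mathcal{D}(\mathcal{S}(\mu),\mathcal{S}(\nu_\theta))=\nabla_\theta\DDSS(\mu,\nu_\theta)$, which is precisely \eqref{eq:def:unbiasedgrad} for $\DDSS$ and finishes the argument.

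The step I expect to be most delicate is the commutation between whitening and empirical approximation: one must be allowed to treat $\mathcal{S}(\mu_N)$ as an empirical approximation of $\mathcal{S}(\mu)$, which requires performing the whitening with respect to the covariance of $\mu$ (equivalently, viewing $\mathcal{S}$ as a fixed preprocessing map) rather than with respect to the random sample covariance of $\mu_N$. Under this convention the reduction is immediate by linearity of $\mathcal{S}$; for a genuinely non-linear whitening the property would require an additional argument and need not hold.
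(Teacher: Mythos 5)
Your proof is correct and follows essentially the same route as the paper's: unwind $\DDSS(\mu_N,\nu_\theta)=\mathcal{D}(\mathcal{S}(\mu_N),\mathcal{S}(\nu_\theta))$, use that $\mathcal{S}$ preserves differentiability so the unbiased-gradient property of $\mathcal{D}$ applies to the whitened family $\{\nu_\theta^*\}$, and identify $\mathcal{S}(\mu_N)$ with the empirical measure of $N$ i.i.d.\ samples of $\mathcal{S}(\mu)$. The delicate point you flag at the end --- that this identification requires whitening with the population matrix $W_\mu$ rather than the sample covariance of $\mu_N$ --- is real and is passed over silently in the paper's proof, so making that convention explicit is a genuine improvement rather than a deviation.
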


\begin{proof}
Let $\mathcal{S}$ be a whitening process as in the hypothesis and let $\bX \sim \mu$ be a random vector and its associated probability distribution.
We denote by $\mu^*$ the distribution of the whitened random vector $\mathcal{S}(\bX)$. 
Similarly, consider a parametric family of probability distributions $\bY_\theta \sim \nu_\theta$, and let $\nu^*_\theta$ denote the distribution of the whitened random variable $\mathcal{S}(\bY_\theta)$.
Assume that the divergence $\mathcal{D}$ has unbiased gradient.
Then, owing to the fact that $\{\nu_\theta^*\}_{\theta\in I}$ is differentiable, the following equality holds
\begin{align*}
    \nabla_{\theta} \DDSS(\mu, \nu_\theta) &= \nabla_{\theta} \mathcal{D}(\mu^*, \nu_\theta^*)\\
    &= \mathbb{E}_{\bX^{(1)}, \dots, \bX^{(N)}}\left[ \nabla_{\theta} \mathcal{D}({\mu}_N^*, \nu_\theta^*) \right]\\
    &= \mathbb{E}_{\bX^{(1)}, \dots, \bX^{(N)}}\left[ \nabla_{\theta} \DDSS({\mu}_N, \nu_\theta) \right],
\end{align*}
which concludes the proof.
\end{proof}

\begin{remark}
    If the family $\nu_\theta$ is composed of absolutely continuous distributions, we then have that the ZCA-cor whitening process satisfies the conditions in Theorem \ref{thm:preserveunbiasedgrad} if there exists $\epsilon>0$ such that $|det(\Sigma_\theta^{-1})|\ge \epsilon$, where $\Sigma_\theta^{-1}$ is the inverse covariance matrix of $\nu_\theta$. 
\end{remark}

Finally, we comment on the computational complexity of $\DDSS$.
Both the ZCA-cor and Cholesky whitening processes involve matrix inversion followed by either a square root or Cholesky factorization—operations that require $\mathcal{O}(n^3)$ computations, where $n$ is the matrix dimension.
Importantly, this cost is independent of the number of points in the support of the probability distributions being compared.
%
%
%

\begin{theorem}
    Let $\mathcal{S}$ be the ZCA-cor whitening process or the Cholesky whitening process.
    Then, the divergence $\mathcal{D}$ has the same complexity of $\DDSS$. 
\end{theorem}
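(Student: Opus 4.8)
The plan is to show that replacing $\mathcal{D}$ by $\DDSS$ only prepends a preprocessing step whose cost is negligible compared with the cost of evaluating $\mathcal{D}$ itself. The starting point is the definition $\DDSS(\bX,\bY)=\mathcal{D}(\mathcal{S}(\bX),\mathcal{S}(\bY))$: evaluating $\DDSS$ on two empirical distributions supported on $N$ points splits into (i) whitening each of the two measures and (ii) evaluating $\mathcal{D}$ on the two whitened measures, so it suffices to bound the cost of (i) and check that it does not change the order of (ii).

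First I would estimate the cost of the whitening stage. For a measure $\mu_N=\tfrac1N\sum_{i=1}^N\delta_{\bx_i}$ on $\erre^n$, computing the sample mean and the sample covariance matrix $\Sigma_\mu$ takes $O(Nn^2)$ operations; building the whitening matrix from $\Sigma_\mu$ — the transposed Cholesky factor of $\Sigma_\mu^{-1}$ for the Cholesky process \eqref{Chol}, or the matrix in \eqref{ZCA-cor} for the ZCA-cor process — requires one matrix inversion together with a Cholesky factorization or a symmetric square root, all of which cost $O(n^3)$; and applying the resulting linear map $\bx\mapsto W_\mu\bx$ to each of the $N$ atoms costs $O(Nn^2)$. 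Hence $\mathcal{S}(\mu_N)=(W_\mu)_\#\mu_N$ is produced in $O(Nn^2+n^3)$ operations, and — this is the structural point that makes the argument work — it is again a measure supported on at most $N$ points, since a linear map sends $N$ atoms to at most $N$ atoms.

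Then I would combine the two stages. If $\mathcal{D}$ is $O(N^p)$-complex, evaluating $\mathcal{D}$ on the two whitened, at-most-$N$-point measures costs at most $CN^p$, so the total cost of $\DDSS$ is $CN^p+O(Nn^2+n^3)$; since the ambient dimension $n$ is fixed, the last two terms are a constant plus $O(N)$, and since any divergence must at least read its two $N$-point inputs one has $p\ge 1$, so these terms are absorbed into $O(N^p)$ and $\DDSS$ is $O(N^p)$-complex. The reverse direction is immediate on the white measures, where $\DDSS=\mathcal{D}$ because $\mathcal{S}$ fixes $\PP_{\Id}(\erre^n)$, so whitening neither raises nor lowers the order of complexity and $\mathcal{D}$ and $\DDSS$ share the same complexity. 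The only delicate point I expect to have to spell out is precisely this last bookkeeping: one must keep $n$ fixed, so that the $O(n^3)$ factorization cost is a constant, and use $p\ge 1$, so that the unavoidable linear-in-$N$ overhead of reading and whitening the samples does not dominate; everything else is a routine operation count.
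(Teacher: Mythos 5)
Your proposal is correct and follows essentially the same route as the paper: treat whitening as a preprocessing step whose cost does not affect the asymptotic order in $N$. In fact your accounting is more careful than the paper's own, which asserts that computing the whitening matrix takes $O(1)$ operations and ignores the $O(N)$ cost of forming the empirical covariance and of applying $W_\mu$ to the $N$ atoms; your explicit $O(Nn^2+n^3)$ bound, together with the observations that the whitened measure still has at most $N$ atoms and that $p\ge 1$, closes that small gap cleanly.
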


\begin{proof}
    Let $\mathcal{D}$ be a divergence whose complexity is $O(N^p)$ for a suitable $p\ge 0$.
    To compute $\DDSS$, we need to compute either the ZCA-cor whitening matrix or the Cholesky whitening matrix, which take $O(1)$ operations.
    Owing to the properties of the Landau operators $O$, we have that $O(N^p)+O(1)=O(N^p)$ allowing us to conclude the proof.
\end{proof}

\section{Application}

\label{sec:examples}

\begin{table*}[t]
    \centering
    \begin{adjustbox}{width=0.9\textwidth}
    \footnotesize
    \begin{tabular}{llll}
        \toprule
        Sector & Detail & Frequency & Percentage \\ 
\midrule
        Consumer & Consumer Staples \& Consumer Discretionary & 351 & 33.05 \\ 
        Financials & Banking, Insurance \& Financial Services, and Real Estate & 14 & 1.32 \\ 
        Health.Util & Healthcare \& Essential Utility Services & 55 & 5.18 \\  
        Manufacturing & Materials \& Industrial Activities & 579 & 54.52 \\
        Tech.Com & Information Technology \& Communication Industries & 63 & 5.93 \\ 
        \midrule
        Total & & 1062 & 100.00 \\
        \bottomrule
    \end{tabular}
    \end{adjustbox}
    \caption{Distribution of Companies by Sector Classification.}
    \label{Sector-Classification}
\end{table*}

In this Section we exemplify the  energy divergence by means of its application to an important economic problem: the impact of Environmental, Social, and Governance (ESG) factors on the financial growth of companies.

We consider annual data from a set of Small and Medium Enterprises across various sectors, in the period from $2020$ to $2022$.

The data, provided by modefinance, a fintech rating agency, includes the ESG scores of the considered companies) along with a set of key financial metrics.  
The data also contain the classification of each company in an economic sector, following the Global Industry Classification Standard (GICS), as shown in Table \ref{Sector-Classification}.

The dataset consists of a total of $1,062$ observations, corresponding to  $1,062$ different Small and Medium Enterprises (SMEs).
 Table \ref{Sector-Classification} displays the distribution of the SMEs across the different sectors. 
%
%
 Table \ref{Summary Statistics-ESG} reports a summary analysis of the ESG metrics for the considered companies.

\begin{table}[!ht]
    \centering
    \footnotesize
    \begin{tabular}{lcccc}
        \toprule
        & ESG & E.Sc & S.Sc & G.Sc \\
        \midrule
        Mean   & 0.65 & 0.76 & 0.51 & 0.62 \\
        Median & 0.66 & 0.79 & 0.50 & 0.64 \\
        Sdev   & 0.11 & 0.17 & 0.23 & 0.15 \\
        Min    & 0.28 & 0.07 & 0.07 & 0.21 \\
        Max    & 0.93 & 0.93 & 0.93 & 0.93 \\
        Range  & 0.65 & 0.86 & 0.86 & 0.71 \\
        \bottomrule
    \end{tabular}
    \caption{Summary Statistics for ESG Metrics.}
    \label{Summary Statistics-ESG}
\end{table}

From Table \ref{Summary Statistics-ESG}, note that the Overall ESG Scores (ESG) exhibits some variability.
Environmental Scores (E.Sc) are  higher, on average ($0.76$).  Social Scores (S.Sc) are considerably lower, and  with higher variability. Finally, 
Governance Scores (G.Sc) have a lower  variability, with an average score of $0.62$.
The financial performance of the companies is measured by three indicators, described in 
Table \ref{Variable-Definitions}, with summary statistics in Table \ref{Summary Statistics-FI} for the year $2022$.

\begin{table}[!ht]
    \centering
    \footnotesize
    \begin{tabular}{cllr}
        \toprule
        & Indicator & Code & Description \\ 
        \midrule
        1 & Total Asset & TASS & Size of assets. \\ 
    
        2 & Turnover & TOVR &  Size of Sales. \\ 
        
        3 & Shareholders' Funds & SFND & Size of Equity. \\ 
    
        \bottomrule
    \end{tabular}
    \caption{Description of Key Financial Metrics.}
    \label{Variable-Definitions}
\end{table}

\begin{table}[!ht]
    \centering
    \footnotesize
    \begin{tabular}{lrrr}
        \toprule
        & TASS & SFND & TOVR \\
        \midrule
        Mean   & 173476.54 & 62341.83  & 170644.62 \\
        Median & 43824.91  & 15054.90  & 43528.31  \\
        Sdev   & 702804.70 & 250588.41 & 602012.08 \\
        Min    & 1151.23   & -49091.00 & 1288.97   \\
        Max    & 14392422.00 & 5336752.00 & 10587145.00 \\
        Range  & 14391270.77 & 5385843.00 & 10585856.03 \\
        \bottomrule
    \end{tabular}
    \caption{Summary Statistics for Financial Indicators - 2022 Annual Data.}
    \label{Summary Statistics-FI}
\end{table}

Table \ref{Summary Statistics-FI} shows that  Total Assets (TASS) exhibit a broad range that goes from $1,151.23$ EUR to $14,392,422.00$ EUR, with a high standard deviation.
Shareholders' Funds (SFND) and Turnover (TOVR) also have a high standard deviation. These results reflect significant differences in the financial growth of the considered companies.
Standard calculations show that the   sustainability indicators and financial indicators are weakly correlated, with a maximum correlation of $0.15$. 
This indicates that it will be quite challenging to build a statistical model that can predict financial variables based on sustainability variables. It will be extremely important to choose a divergence function that can lead to the truly best model. 
Without loss of generality, we consider four alternative statistical models to predict the three dimensional financial performance response based on the three dimensional vector of ESG score predictors. Specifically:
\begin{enumerate*}[label=(\roman*)]
    \item a multivariate regression model, in which the financial performance variables are explained by theESG scores, independently of the sector (LIN);
    \item a similar model, but dependent on sectors (LINS);  
    \item a neural network model with the same variables as LIN (NNET);
     \item a neural network model with the same variables as LINS (NNETS).
    
\end{enumerate*} 
To compare the four models, we apply the whitening transformation and, then, we randomly split the whitened data into a $80\%$ training sample and a $20\%$ test sample. 

We then compare the four models in terms of the energy divergence. To corroborate our findings, we will consider three different specifications of the exponent of the energy distance: $\alpha=1$, consistent with a multivariate Gaussian distribution; 
$\alpha=0.5$, consistent with a heavy tailed distribution, such as a Pareto; $\alpha=1.5$, consistent with a highly concentrated distribution.
For comparison, we will also consider the standard  root mean squared error metric (RMSE).
%

%
%

%

%

\begin{table}[!ht]
    \centering
    \begin{tabular}{lcccc}
        \toprule
       & $\alpha=0.5$ & $\alpha=1$& $\alpha=1.5$   & RMSE\\
        LIN  & 93.76 & 71632.86 & 43261969& 570433.01 \\
        LINS & 69.71 & 50374.21& 25867451& 341261.70 \\
        NNET & 138.11 & 55868.11 & 32646035 &596600.20\\
        NNETS & 141.24 &62410.47& 46044578& 359233.40\\
        \bottomrule
    \end{tabular}
    \caption{Comparison of divergences of the four considered models (LIN, LINS, NNET, NNETS), using the Energy Distance for three different values of $\alpha$, as well as  the Root Mean Squared Error. }
    \label{Performances}
\end{table}

In Table \ref{Performances}, we present the comparison of the divergences obtained with the four models, learned on the training set, and utilized to predict the true observations in the test set. 
Table \ref{Performances} shows that  the Energy Divergence reaches a minimum for the LINS model, consistently for all values of $\alpha$. This is also in line with the result obtained  using the RMSE.  We remark that the energy divergence is, differently from RMSE, invariant under scaling: if the ESG values are transformed in another unit of measurement RMSE varies whereas the Energy Distance does  not.
Additionally, the magnitude of the model divergences is monotone in $\alpha$, suggesting that the considered values are sufficient, from a practical viewpoint, to assess the robustness of model choice.  
From an applied point of view, the results indicate that, given the weak influence of ESG variables on financial performance, a simpler linear model makes better predictions. However, such model should include the dependence on sectors: ESG impacts financial growth differently in different sectors.
\section*{Acknowledgment}

This work has been written within the activities of GNCS and GNFM groups of INdAM (Italian National Institute of High Mathematics). G.B.~has been funded by the European Union’s Horizon 2020 research and innovation programme under the Marie Sklodowska-Curie grant agreement No 101034413. P.G.  has been funded by the European Union - NextGenerationEU, in the framework of the GRINS- Growing Resilient, INclusive and Sustainable (GRINS PE00000018).



\end{document}